\newtheorem{theorem}{Theorem}[section]
\newtheorem{lemma}[theorem]{Lemma}
\newtheorem{proposition}{Proposition}
\newtheorem{observation}{Observation}
\begin{document}
\newcommand{\ee}{\`e\ }
\renewcommand{\aa}{\`a\ }
\renewcommand{\AA}{\`A\ }
\newcommand{\oo}{\`o\ }
\newcommand{\uu}{\`u\ }
\newcommand{\ket}[1]{\ensuremath {|\: #1 \: \rangle}}
\newcommand{\bra}[1]{\ensuremath{\langle \: #1 \:|}}
\newcommand{\braket}[2]{\ensuremath{\langle \: #1 \: | \: #2 \: \rangle}}
\newcommand{\ketbra}[2]{\ensuremath{| \: #1 \:\rangle \langle \: #2 \:  |}}
\newcommand{\ves}[2]{\ensuremath{#1_1,#1_2, \ldots, #1_{#2}}}
\newcommand{\ve}[2]{\ensuremath{#1(1),#1(2) \ldots, #1(#2)}}
\newcommand{\Hc}{\ensuremath{\mathcal{H}_{cursor\;}}}
\newcommand{\Hr}{\ensuremath{\mathcal{H}_{register\;}}}
\newcommand{\Hcr}{\ensuremath{\mathcal{K}_{cur\;}}}
\newcommand{\Hrr}{\ensuremath{\mathcal{K}_{reg\;}}}
\newcommand{\Hm}{\ensuremath{\mathcal{H}_{machine\;}}}
\newcommand{\eref}[1]{(\ref{#1})}
\newcommand{\sref}[1]{section~\ref{#1}}
\newcommand{\fref}[1]{figure~\ref{#1}}
\newcommand{\tref}[1]{table~\ref{#1}}
\newcommand{\Eref}[1]{Equation (\ref{#1})}
\newcommand{\Sref}[1]{Section~\ref{#1}}
\newcommand{\Fref}[1]{Figure~\ref{#1}}
\newcommand{\Tref}[1]{Table~\ref{#1}}
\renewcommand{\theenumi}{(\roman{enumi})}   
\renewcommand{\labelenumi}{\theenumi}
\frontmatter
\thispagestyle{empty}
%
%
\begin{center}
\Large UNIVERSIT\AA DEGLI STUDI DI MILANO \par 
\normalsize Facolt\aa di Scienze Matematiche Fisiche e Naturali \par 
\normalsize Dipartimento di Matematica ``F. Enriques''\par \bigskip
{\scshape Corso di Dottorato di Ricerca in Matematica e Statistica\par
 per le Scienze Computazionali} -Ciclo XIX-\par \smallskip
{\scshape Tesi di Dottorato di Ricerca}\par\vspace{2cm}
{\LARGE \textsc INTERACTING QUANTUM WALKS} 
\par
\bigskip
{\scshape Settori disciplinari: INF/01, MAT/06}\par
\end{center}
\vspace{2cm}
\begin{flushright}
by \par \smallskip
 Dario \textsc{Tamascelli} \par 
\vspace{1cm}
\end{flushright}
\begin{flushleft}
Advisor: \par \smallskip
  Prof. Diego \textsc{de Falco} \par \smallskip
  Dipartimento di Scienze dell'Informazione \par \smallskip
  Universit\aa degli Studi di Milano\par \bigskip
\vspace{0.5cm}
MaSSC Ph.D. Coordinator: \par \smallskip
  Prof. Giovanni \textsc{Naldi} \par \smallskip
  Dipartimento di Matematica ``F. Enriques'' \par \smallskip
  Universit\aa degli Studi di Milano
\end{flushleft}
\vspace{1.4cm}
\begin{center}
\Large \textsc{Anno accademico 2006/07}
\end{center}
\newpage
\mbox{   }
\vspace{5cm}
\begin{flushright}
\itshape To my family.
\end{flushright}
\newpage
\section*{Acknowledgements}
There are a few people I wish to thank. First of all, I wish to express all my gratitude to prof. Diego de Falco, my advisor, for his constant encouragement, human support, constructive criticism and infinite patience.\\
I wish also thank prof. Alberto Bertoni: his ability to stimulate my curiosity has played a fundamental role since when I was a freshmen.\\
I also thank prof. Bruno Apolloni for hosting me at the LaReN laboratory of the Information Science Department of the University of Milano. There I have found a very stimulating environment and collegues  which, by the way, heppened to become very good friends.\\[5pt]
A special thank goes to prof. Giovanni Naldi, coordinator of the MaSSC doctoral supervisory committee.\\[5pt]
Finally, I wish to thank my family and my friends for their daily patience and support.
\tableofcontents
%
\chapter{Introduction}
Computation is a physical process \cite{landauer82} and the notion of a \emph{computable function}, say $f$, relies on the possibility of implementing a physical process transforming the input state $S_0$ in the desired output state $f(S_0)$ \cite{bennett82,toffoli82,vonNeumann76}.
\footnote{In \cite{toffoli82} Toffoli illustrated the equivalence between a physical experiment and a computing process by means of the following suggestive example:
\begin{quote}
Let us suppose that intelligent beings are observing us from a far away star. How could they understand whether we are carrying on a computation or a physical experiment? They could not understand it from what we are doing since there is no objective difference. The difference is in our intentions, in our knowledge, in our expectations.
\end{quote}
}\\
The Church-Turing hypothesis \cite{church36,turing36,kleene43}:
\begin{quote}
Every function that would \emph{naturally be regarded as computable} can be computed by a universal Turing machine
\end{quote}
makes itself an implicit physical assertion which is explicitly stated in the \emph{Church-Turing-Deutsch} hypothesis \cite{deutsch85}:
\begin{quote}
Every finitely realizable physical system can be perfectly simulated by a universal model computing machine operating by finite means.
\end{quote}
Following this principle, a universal computing machine operating by finite means is able to simulate the evolution of a bundle of, say, $n$ interacting electrons. Feynman showed \cite{feyn82} that the complexity of the simulation, on a classical computer, of a quantum mechanical system scales exponentially with its dimension; but Feynman himself pointed out that quantum systems are more ``suited'' for the simulation of other quantum systems;  by ``suited'' he meant that a  logarithmic reduction of the complexity of the simulation is achievable by means of a universal model quantum computing machine.\\
\emph{Quantum computation} origins from a question which is strongly suggested by Feynman's considerations: are there other \emph{hard} computational problems which can exploit the features of quantum mechanical systems to be efficiently solved?\\
Since the seminal work of Feynman  \emph{quantum computation} has known an enormous growth and nowadays it is a mature and vast research field in between physics and computer science. We refer to \cite{nielsen} for an exhaustive introduction to the field of quantum computation, quantum information and quantum communication.\\[5pt]
Another problem motivating the investigation of quantum computing devices is the technological advancement of semiconductor industry. A state of the art MOSFET (Metal Oxided Semiconductor Field Effect Transistors) has dimension of order $10^{-8}$m, the Bohr radius is approximately $10^{-10}$m; following Moore's law \footnote{Moore's Law is the empirical observation that the transistor density of integrated circuits, with respect to minimum component cost, doubles every 24 months. It is attributed to Gordon E. Moore, a co-founder of Intel.}, which has been, up to now, very accurately verified, in a few years the dimension of a MOSFET will reduce to fractions of a nanometer. At this scale, the evolution of the gates will be described by quantum mechanics. Indeed, quantum corrected diffusion models have already been introduced in the design process of nanoscale semiconductor devices (see \cite{carlodefalco} for an updated list of references).\\[10pt]
In this work we present the quantum mechanical computer proposed by Feynman in 1985 and, since then, widely cited but seldom used. The main feature of the model is the presence of a builtin clocking mechanism managing for the ordered application of the computational primitives to the input/output register.\\
In fact, given a transformation $A$ to be applied to the input/output register, quantum computation starts from the decomposition of $A$ into the sequential application of simpler computational primitives, $ U_1, U_2,\ldots,U_n$, which are unitary operators  acting on few qubits of the input/output register at a time. The evolution of the input state into the output state is then seen as a discrete, stepwise, process: at any time step an operation is performed on the input/output register.\\
In general, in quantum mechanics, the outgoing state at time $t$ for a system with \emph{time independent} Hamiltonian $H$ is $e^{-i H t} \ket{\psi_0}$ , where \ket{\psi_0} is the input state; in other words, the state of the system evolves under the action of the unitary group generated by the Hamiltonian $H$. It appears to be very difficult to find, for a given special time $\bar{t}$, the Hamiltonian which will produce $A = e^{-i H \bar{t}}$ when $A$ is a product of \emph{noncommuting} matrices $U_1,U_2,\ldots, U_n$, from some simple property of the matrices themselves \footnote{For example, following the method proposed by Benioff \cite{benioff82} it is possible to define a time independent Hamiltonian guiding the desired evolution from any input to any output state. The problem is that the explicit construction of such a Hamiltonian requires the prior knowledge of every step in the solution of every problem which the computer can solve}.\\
If one accepts time dependent Hamiltonians, it is fairly straightforward to write a Hamiltonian for the evolution of the input state into the output state via the indicated intermediate steps; keeping in mind the mechanism of a synchronous system, one can imagine an internal  clock turning the interactions on and off. This idealized clocking mechanism is however not satisfactory: a classical macroscopic clock would destroy the coherence of the quantum system; so, to be consistent, the clock itself should be quantized; but as soon as we do it, it becomes clear that this clock would be affected by whatever it interacts with \cite{peres80} and time steps would be blurred.\\
The problem of explicitly defining a time independent Hamiltonian driving the input state to the output state through the intermediate states determined by the \emph{ordered application} of the computational primitives was overcome by Feynman in 1985 \cite{feyn86}.\\[5pt]
There are other interesting aspects of Feynman's proposal for a quantum computer. For example, it has been observed by Margolus \cite{margolus90} that Feynman, in his model, `managed to arrange for all the quantum uncertainty[...] to be concentrated in the time taken for the computation to be completed, rather than in the correctness of the answer'.\\ 
More recently,  Levitin and Margolus \cite{levitin98} related the maximum rate of information processing by a quantum computer to the available, \emph{conserved}, energy. It is therefore of some theoretical interest to revisit a model, such as Feynman's, based on a closed system, evolving according to a time independent and, therefore, conserved Hamiltonian.\\
Moreover, the doubt has been raised by Alicki \cite{alicki00} that `the idea that the physical time[...] of computation is proportional to the complexity [...] is only true of the existing digital computers which are ensembles of controlled bistable elements which [...] can literally mimic logical operations'. Feynman's model provides an ideal context for the study of this issue: timing is modeled by a cursor, which jumps along a sequence of sites, indicating that the corresponding discrete operations should be applied. \\[10pt]
%
%
The thesis is organized as follows.\\
In Chapter \ref{chap:fqc} we present the model, the basic clocking mechanism and establish our notation.\\
In Chapter \ref{chap:grover} we use Grover's algorithm as a case study to introduce the full model, which we call an \emph{interacting} $XY$ \emph{system}; particular attention is paid to the role of additional controlling spins in implementing successive visits to selected parts of the flow chart, or graph, of the algorithm in iterated computations (\emph{quantum subroutines}) and to the \emph{locality} of implementation.\\
In Chapter \ref{chap:quantumtiming} we study the dynamics of Feynman's quantum computer, the timing and synchronization problems related to a rescaling of the clock to the quantum regime and  propose a measurement scheme for  the storage of results of computation via \emph{telomeric chains}.\\
In Chapter \ref{chap:speedandentropy} we  pay specific attention to non-positional observables of the system: our main concerns will be speed (of computation), entropy (of controlled and/or controlling subsystem) and energy (of the system). In particular  we relate the speed of computation to the group velocity of the cursor wave packet along the graph and  discuss the buildup of entropy  in the \emph{clocked} subsystem caused by the spreading of the wave packet of the \emph{clocking agent}. An outline of possible choices of the initial form of the wave packet bringing the entropy buildup close to a minimum is given.\\ 
In Chapter \ref{chap:imperfections} we  consider the observable \emph{number of particles} (agents performing a \emph{quantum walk} along an $XY$ spin chain), discuss the interest and limitations of the proposal of a multi-hand quantum clock (or multi-agent spin networks) as a substitute for the loops implementing iterated applications of quantum subroutines.\\
The Conclusions and Outlook  chapter is devoted to an exposition of open problems and future line of research.\\

%
\mainmatter
%
%
\chapter{The Feynman machine} \label{chap:fqc}
We present Feynman's model of a quantum computer stressing the role of the clocking mechanism. We introduce the notion of logical successor of a given state and the related notion of Peres' constants of motion. We furthermore set the notation used throughout this work.
\section{The basic model} \label{sec:themodel}
It has been shown by Feynman \cite{feyn86} that it is possible to implement the sequential application, in the desired order, of the sequence 
\[
 U_{s-1} \cdot \ldots \cdot U_2 \cdot U_1=A
\]
of unitary operators to an \emph{input/output register} by using \emph{s} additional degrees of freedom: the \emph{program counter sites}.\\
For the sake of definiteness, we will think of each program counter  site $j=1,2, \dots, s$ as occupied by a spin-1/2 system $ \underline{\tau}(j)=(\tau_1(j),\tau_2(j),\tau_3(j))$. We will refer to the collection of such spins, which act in effect as a quantum clocking mechanism, as to a \emph{program line}.\\ 
The \emph{input/output register} will be, similarly, implemented by a collection of a certain number $\mu$ \ of spin-1/2 systems $\underline{\sigma}(i)=(\sigma_1(i),\sigma_2(i),\sigma_3(i)),\; i=1,2, \dots,\mu $.\\
We remind that the  angular momentum operators $\underline{\tau}(\cdot)$ satisfy the commutation rules of the Lie algebra on $SU_2$
\begin{equation} \label{eq:ccr}
 \left[ \tau_h(x),\tau_j(y) \right ]  =  \delta_{x,y} \epsilon_{hjk} \frac{i}{2} \tau_k(x)
\end{equation}
$\epsilon_{hjk}$ being the Levi-Civita symbol and $\delta_{x,y}$ the Kronecker delta. Obviously the same conditions are satisfied by the $\underline{\sigma}$'s.\\ 
We reflect the functional separation of the subsystems by calling  $\Hc = \mathcal{H}^{\otimes s}$, with $\mathcal{H}=\mathbb{C}^2$, the Hilbert space in which the collection of the spins making up the cursor are defined; analogously we refer to the Hilbert space the register is defined in as $\Hr = \mathcal{H}^{\otimes \mu}$. The overall system, \emph{register} + \emph{program line}  evolves in the space $\Hm =\Hr \otimes \Hc$ under the action of the Hamiltonian 
\begin{equation} \label{eq:hamfeynman}
H=-\frac{\lambda}{2}\Bigl ( \sum_{j=1}^{s-1}U_j \otimes\tau_+(j+1) \tau_-(j)  + U_j^{-1} \otimes\tau_+(j)\tau_-(j+1) \Bigr ).
\end{equation}
where
\begin{equation}
	\tau_{\pm}(j)=\frac{\tau_1(j)\pm i \tau_2(j)}{2}
\end{equation}
are respectively the \emph{raising} and \emph{lowering} (or \emph{excitation creation} and \emph{annihilation}) operators acting on the $j$-th spin of the cursor and $\lambda$ is a scalar coupling constant (for notational convenience, we will set $\lambda=1$ unless otherwise specified). For the sake of definiteness, for every spin of the system we will take the eigenstates of the $\tau_3$ (respectively $\sigma_3$) operator as the basis for the Hilbert space $\mathcal{H}$ of a single spin and indicate them by the eigenvalue $\pm 1$ of $\tau_3$ they belong to. This basis is conventionally referred to as \emph{computational basis}\index{computational basis} and the \emph{Pauli operators}\index{Pauli operators} $\tau_1,\ \tau_2,\ \tau_3$ have the usual matrix representation:
\begin{equation}
\tau_1 = \left (
\begin{array}{c c}
	0 & 1 \\
	1 & 0
\end{array}\right ), \;
\tau_2 = \left (
\begin{array}{c c}
	0 & -i \\
	i & 0
\end{array}\right ),\;
\tau_3 = \left (
\begin{array}{c c}
	1 & 0 \\
	0 & -1
\end{array}\right ).
\end{equation}
The evolution of the computing device is given by the solution of the Cauchy problem:
\begin{equation}
\begin{cases}
i \frac{d}{dt} | \psi(t) \rangle = H | \psi(t) \rangle \\
\ket{\psi(0)}=\ket{\psi_1}
\label{eq 1}
\end{cases}
\end{equation}
where we have set $\hbar = 1$ for notational convenience.\\
We define the operator \emph{number of excitation}
\begin{eqnarray}
N_3 & : & \Hm \rightarrow \Hm \nonumber\\
N_3 & = & I_r \otimes \sum_{j=1}^{s} \frac{1+\tau_3(j)}{2},
\end{eqnarray}
$I_r$ being the identity on the register subspace.
\begin{proposition} \label{th:number}
The number of spins up, or \emph{excitations} on the program line is a constant of motion;  namely
\begin{equation} \label{eq:n3}
[H,N_3]=0.
\end{equation}
\end{proposition}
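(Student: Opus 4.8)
The plan is to exploit the tensor-product structure of both operators so as to reduce the identity $[H,N_3]=0$ to a purely cursor-side computation. Since $N_3 = I_r \otimes M$ with $M=\sum_{j=1}^s \tfrac{1+\tau_3(j)}{2}$, and each summand of $H$ has the form $U_j\otimes A_j$ (with $A_j=\tau_+(j+1)\tau_-(j)$) or $U_j^{-1}\otimes B_j$ (with $B_j=\tau_+(j)\tau_-(j+1)$), the register factors meet only the identity $I_r$. Concretely, for any register operator $V$ and any cursor operator $C$ one has $[V\otimes C,\ I_r\otimes M]=V\otimes[C,M]$, since $V I_r = I_r V = V$. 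Hence it suffices to prove that every cursor hopping operator commutes with $M$, that is $[A_j,M]=[B_j,M]=0$ for each $j$; the concrete unitaries $U_j$ and $U_j^{-1}$ then play no role whatsoever.

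First I would discard the scalar part of $M$: writing $M=\tfrac{s}{2}I + \tfrac12\sum_{k=1}^s \tau_3(k)$, the multiple of the identity drops out of any commutator, so only $\sum_k \tau_3(k)$ matters. Next I would extract from the commutation rule \eref{eq:ccr} the single-site relation $[\tau_3(x),\tau_\pm(y)]=\pm\, c\,\delta_{x,y}\,\tau_\pm(x)$ for the appropriate constant $c$, which says precisely that $\tau_+$ (resp. $\tau_-$) raises (resp. lowers) the $\tau_3$-eigenvalue at its own site and acts trivially elsewhere. Applying the Leibniz rule to $A_j=\tau_+(j+1)\tau_-(j)$, and using that operators on distinct sites commute, only the neighbouring terms $k=j+1$ and $k=j$ of $\sum_k\tau_3(k)$ survive:
\[
\Bigl[\sum_k \tau_3(k),\ \tau_+(j+1)\tau_-(j)\Bigr] = \bigl(+c\bigr)\tau_+(j+1)\tau_-(j) + \bigl(-c\bigr)\tau_+(j+1)\tau_-(j) = 0 .
\]
The $+c$ comes from the raising factor at site $j+1$ and the $-c$ from the lowering factor at site $j$; they cancel exactly. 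This is the algebraic expression of the physical fact that a hop moves a single excitation from site $j$ to site $j+1$ without changing their total number.

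The conjugate term $B_j=\tau_+(j)\tau_-(j+1)$ is handled identically, the roles of the two sites being merely exchanged, and linearity in $j$ then gives $[H,N_3]=-\tfrac{\lambda}{2}\sum_j\bigl(U_j\otimes[A_j,M]+U_j^{-1}\otimes[B_j,M]\bigr)=0$. I do not expect a genuine obstacle here: the only points demanding care are the bookkeeping of the Leibniz rule for the two-site products and the observation that $\tau_+(j+1)$ and $\tau_-(j)$ commute because they act on different spins, so that all cross-site terms vanish by the $\delta_{x,y}$ in \eref{eq:ccr}. The essential mechanism is simply that each hopping term of $H$ conserves the excitation number counted by $N_3$.
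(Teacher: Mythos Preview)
Your proposal is correct and follows essentially the same route as the paper's proof: reduce to the cursor side (the paper simply says ``we can omit explicit reference to the register degrees of freedom''), then use the Leibniz rule together with $[\tau_3(x),\tau_\pm(y)]=\pm c\,\delta_{x,y}\tau_\pm(x)$ so that the contributions from sites $j$ and $j+1$ in each hopping term cancel. Your write-up is in fact somewhat tidier than the paper's, which carries out the same cancellation by a more explicit term-by-term expansion; the underlying mechanism is identical.
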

\begin{proof}
Since $N_3$ acts only on the cursor, we can  omit in this proof explicit reference to the register degrees of freedom. From the commutation rules defined above it follows:
\begin{eqnarray}
	[H,N_3]& = & [\sum_{x=1}^{s-1} \tau_+(x+1) \tau_-(x) + \tau_+(x) \tau_-(x+1),\sum_{y=1}^s \frac{1+\tau_3(x)}{2}] = \nonumber \\
	& = &   \frac{1}{2} \sum_{x=1}^{s-1} \sum_{y=1}^s \left( [ \tau_+(x+1) \tau_-(x), \tau_3(x)] + [ \tau_+(x) \tau_-(x+1), \tau_3(x)] \right ) = \nonumber \\
	& = &  \frac{1}{2}  \sum_{x=1}^{s-1} ([\tau_+(x+1) \tau_-(x), \tau_3(x)] + [\tau_+(x+1) \tau_-(x), \tau_3(x+1)] + \nonumber \\
	& + & [\tau_+(x) \tau_-(x+1), \tau_3(x)] + [\tau_+(x) \tau_-(x+1), \tau_3(x+1)]) =\nonumber \\
	& = & \frac{1}{2} \sum_{x=1}^{s-1} (\tau_+(x+1)[ \tau_-(x), \tau_3(x)] +\tau_-(x) [\tau_+(x+1) , \tau_3(x+1)] + \nonumber \\
	& + & \tau_-(x+1)[\tau_+(x) , \tau_3(x)] + \tau_+(x) [ \tau_-(x+1), \tau_3(x+1)]) =\nonumber \\
	& = & - \tau_+(x+1) \tau_-(x) + \tau_-(x)\tau_+(x+1) + \nonumber \\
	& + & \tau_-(x+1)\tau_+(x)-\tau_+(x)\tau_-(x+1) =0
  \end{eqnarray}
\end{proof}
A first consequence of proposition \ref{th:number} is that if the initial state of the cursor belongs to the $N_3=k$ subspace of \Hc, the state vector of the cursor remains, for every time $t$, in the subspace of  \Hc spanned by the $\left ( 
\begin{array}{c}
	s \\
	k
\end{array}
\right )$ basis vector
\begin{equation}
	\tau_+(j_1) \tau_+(j_2)\ldots \tau_+(j_k)\ket{0}, 1\leq j_1 < j_2 < \ldots < j_k \leq s
\end{equation}
where with \ket{0} we indicate the ``all-down'' state, that is the state
\[
 \ket{\tau_3(1)=-1;\tau_3(2)=-1;\ldots;\tau_3(s)=-1}.
\] 
 The dimensionality of the state vector, and thus the complexity of any simulation of the system, is therefore significantly reduced. In particular, we will usually restrict our considerations to initial conditions belonging to the $N_3=1$ subspace and refer the cursor subspace to the orthonormal basis 
\begin{equation} \label{eq:cj}
	\{ \ket{C(j)}=\tau_+(j)\ket{0},\;1\leq j\leq s \}.
\end{equation}
The initial state of the computing device will be of the form
\begin{equation}
\ket{\psi_1} =\ket{ R(1)} \rangle\otimes \ket{C(1)} .
\label{E:init}
\end{equation}
It helps the intuition to think of the initial state $\ket{R(1)}$ of the register as a simultaneous eigenstate of the components of the $\sigma$ spins in selected directions, encoding the initial word (or superposition of words) on which the machine is required to act.\\
The intuition of ``a single clocking excitation traveling along the program line'' emerging from the above considerations is made precise by introducing the observable \emph{position of the excitation}, or \emph{position of the cursor}:
\begin{equation} \label{eq:Q}
Q=I_r \otimes \sum_{j=1}^{s}j \; \frac{1+\tau_3(j)}{2}.
\end{equation}
Let us have a look on the Hamiltonian \eqref{eq:hamfeynman} and on the  interaction of the program line with the register degrees of freedom it describes. At any particular time $t$, if we expand $e^{-i t H}$ out as
\begin{equation}
	e^{-i H t} = 1 -i H t - H^2 t^2 / 2+ \ldots
\end{equation}
we find the operator $H$ operating an innumerable arbitrary number of times and the total state of the system is a superposition of this possibilities.\\
To illustrate the functioning of the clocking mechanism, we will consider only states of the form
\begin{equation}\label{eq:stdstate}
\ket{\psi_k}= \ket{R(k)} \otimes \ket{C(k)} ,\; 1 < k < s,
\end{equation}
where by \ket{R(j)} we indicate the $(j-1)$-th \emph{logical successor} of the initial state \ket{R(1)} of the register, that is
\begin{equation} \label{eq:Rj}
	\ket{R(j)} = U_{j-1} U_{j-2} \ldots U_2 U_1 \ket{R(1)},
\end{equation}
and the state of the cursor is a basis vector of the $N_3=1$ subspace of \Hc.\\
We begin with a look at the action of $H$ on a state of the form \eref{eq:stdstate}, with $1 < k <s$:
\begin{eqnarray}
H \ket{\psi_k} & = & \sum_{x=1}^{s-1}U_x \otimes\tau_+(x+1)\tau_-(x)  \ket{\psi_k}+ U_{x}^{-1} \otimes \tau_+(x)\tau_-(x+1)  \ket{\psi_k} \nonumber \\
& = & U_k \ket{R(k)} \otimes\ket{C(k+1)}   +  U_k^{-1} \ket{R(k)}\otimes\ket{C(k-1)}  = \nonumber \\
& = & \ket{R(k+1)}\otimes\ket{C(k+1)}+   \ket{R(k-1)} \otimes \ket{C(k-1)} =  \nonumber \\
& =& \ket{\psi_{k+1}} + \ket{\psi_{k-1}}
\end{eqnarray}
We intentionally left out the case in which $H$ acts on the states \ket{\psi_1} and \ket{\psi_s}. In those cases, due to the boundary of the system, we have 
\begin{eqnarray}
	H \ket{\psi_1} & = & \ket{\psi_2} \\
	H \ket{\psi_s} & = & \ket{\psi_{s-1}}.
	\label{eq:boundary}
\end{eqnarray}
The coupling of the register with the cursor degrees of freedom seems to be of the following kind: if the excitation moves one step further, the logical state of the computation advances by one; if the excitation moves one step backward, the logical state regresses to the previous one. This property holds also when applying higher powers of the Hamiltonian operator to the system. For example, if we consider $H^2$  and expand it we get
\begin{eqnarray}
	H^2 & = & \sum_{x=1}^{s-1} \sum_{y=1}^{s-1}U_x U_y \otimes\tau_+(x+1)\tau_-(x)\tau_+(y+1)\tau_-(y)  + \nonumber \\
	& + &  U_x U_y^{-1}\otimes\tau_+(x+1)\tau_-(x)\tau_+(y)\tau_-(y+1) \nonumber \\
	& + &  U_x^{-1} U_y\otimes\tau_+(x)\tau_-(x+1)\tau_+(y+1)\tau_-(y) \nonumber \\
	& + & U_x^{-1} U_y^{-1}\otimes \tau_+(x)\tau_-(x+1)\tau_+(y)\tau_-(y+1). 
	\label{eq:h2expansion}
\end{eqnarray}
It is straightforward to see that, due to the commutation rules \eref{eq:ccr} and to the conservation law \eref{eq:n3}, only some of the terms with $x=y$ and $x=y \pm 1$ survive. Thus, it is possible to simplify \eref{eq:h2expansion} getting
\begin{eqnarray}
	H^2 & = &\sum_{x=2}^{s-1} U_x U_{x-1}\otimes \tau_+(x+1)\tau_-(x)\tau_+(x)\tau_-(x-1)  + 	\nonumber \\
	& = & I_r  \otimes\tau_+(x)\tau_-(x+1)\tau_+(x+1)\tau_-(x)+ \nonumber \\
	& = & I_r  \otimes\tau_+(x+1)\tau_-(x)\tau_+(x)\tau_(x+1) + \nonumber \\
	& = &  U_{x-1}^{-1} U_{x-1}^{-1}\otimes\tau_+(x-1)\tau_-(x)\tau_+(x)\tau_-(x+1).
\end{eqnarray}
For example, if  $H$ acts twice on the state $ \ket{C(k)} \otimes \ket{R(k)} $, with $2 < k < s-2$ we get
\begin{eqnarray}
	H^2  \ket{R(k)} \otimes \ket{C(k)} & = & U_{k+2} U_{k+1} \ket{R(k)} \otimes \ket{C(k+2)}+ \nonumber \\
	 & + &  2   \ket{R(k)} \otimes \ket{C(k)}+\nonumber \\
	 & + &  U_{k-1}^{-1} U_{k}^{-1} \ket{R(k)} \otimes \ket{C(k-2)} = \nonumber \\
	 & = & \ket{\psi_{k+2}} + 2 \ket{\psi_k}+  \ket{\psi_{k-2}}.
\end{eqnarray}
Once more: if the position of the excitation is shifted by $j$-positions, $j \in \{0,2\}$ the logical state evolves or regresses accordingly by $j$-steps. This property extends to every power of $H$; in fact it has been shown by Peres \cite{peres85} that, once defined the projection operator on the $k$-th logical state \ket{R(k)}
\begin{equation}
	P_k \ket{R(l)}= \delta_{k,l} \ket{R(k)}.
\end{equation}
that satisfies
\begin{equation}
	P_k = U_k P_{k-1}U_k^{-1}
\end{equation}
and the operator
\begin{equation} \label{eq:p}
	P=\sum_{k=1}^{s}  P_k \otimes \ketbra{C(k)}{C(k)} 
\end{equation}
the following holds
\begin{theorem}
\begin{equation}
[P,H]=0
\end{equation}
\end{theorem}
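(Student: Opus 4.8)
The plan is to exploit the constant of motion $N_3$ (Proposition \ref{th:number}) so as to work entirely inside the single-excitation sector. The operator $P$ of \eqref{eq:p} is supported on $\Hr\otimes(N_3=1)$, since each $\ketbra{C(k)}{C(k)}$ projects onto a one-excitation cursor state; and because $H$ commutes with $N_3$ it preserves every eigenspace of $N_3$. Hence on any state with $N_3\neq1$ both $PH$ and $HP$ vanish, and $[P,H]$ can act nontrivially only on the $N_3=1$ subspace. There the hopping operators reduce to shifts of the single excitation, $\tau_+(j+1)\tau_-(j)=\ketbra{C(j+1)}{C(j)}$ and $\tau_+(j)\tau_-(j+1)=\ketbra{C(j)}{C(j+1)}$, so I would first rewrite
\[
H=-\tfrac12\sum_{j=1}^{s-1}\Bigl(U_j\otimes\ketbra{C(j+1)}{C(j)}+U_j^{-1}\otimes\ketbra{C(j)}{C(j+1)}\Bigr).
\]

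Next I would compute $HP$ and $PH$ directly, using the orthonormality $\braket{C(m)}{C(n)}=\delta_{m,n}$ to collapse the double sums over $j$ and $k$: in each product of cursor operators $\ketbra{C(a)}{C(b)}\,\ketbra{C(k)}{C(k)}$ one Kronecker delta fixes $k$, leaving
\[
HP=-\tfrac12\sum_{j=1}^{s-1}\Bigl(U_jP_j\otimes\ketbra{C(j+1)}{C(j)}+U_j^{-1}P_{j+1}\otimes\ketbra{C(j)}{C(j+1)}\Bigr),
\]
\[
PH=-\tfrac12\sum_{j=1}^{s-1}\Bigl(P_{j+1}U_j\otimes\ketbra{C(j+1)}{C(j)}+P_jU_j^{-1}\otimes\ketbra{C(j)}{C(j+1)}\Bigr).
\]

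Subtracting and grouping by the two cursor structures $\ketbra{C(j+1)}{C(j)}$ and $\ketbra{C(j)}{C(j+1)}$, the commutator collapses to
\[
[P,H]=-\tfrac12\sum_{j=1}^{s-1}\Bigl((P_{j+1}U_j-U_jP_j)\otimes\ketbra{C(j+1)}{C(j)}+(P_jU_j^{-1}-U_j^{-1}P_{j+1})\otimes\ketbra{C(j)}{C(j+1)}\Bigr),
\]
so everything hinges on the intertwining identity $P_{j+1}U_j=U_jP_j$. I would derive it from the defining recursion $\ket{R(j+1)}=U_j\ket{R(j)}$ implicit in \eqref{eq:Rj}, which gives $P_{j+1}=U_jP_jU_j^{-1}$ and hence both $P_{j+1}U_j-U_jP_j=0$ and, after left-multiplication by $U_j^{-1}$, $P_jU_j^{-1}-U_j^{-1}P_{j+1}=0$. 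Every summand then vanishes, so $[P,H]=0$.

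The computation is essentially routine; the only genuine care needed is index bookkeeping — correctly collapsing the double sums and keeping the two hopping directions straight — together with the mild subtlety that the conjugation relation must be read in the convention $\ket{R(j+1)}=U_j\ket{R(j)}$, so that the relevant instance is $P_{j+1}=U_jP_jU_j^{-1}$. This conjugation property is the real content of the statement: it says the logical-state projector $P_k$ co-moves with the register unitaries exactly in step with the cursor hop that triggers them, which is precisely why the diagonal operator $P$ is dynamically conserved.
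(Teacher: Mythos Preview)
Your proof is correct and follows essentially the same route as the paper's: both rewrite $H$ in the single-excitation hopping form, compute $PH$ and $HP$ by collapsing the Kronecker deltas, and finish with the intertwining relation $P_{j+1}=U_jP_jU_j^{-1}$. Your explicit preliminary reduction to the $N_3=1$ sector is a small clarifying addition, but the argument is otherwise the same computation.
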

Before showing the proof, we observe that, given \eref{eq:stdstate} and proposition \ref{th:number}, the Hamiltonian \eref{eq:hamfeynman} is equivalent to
\begin{eqnarray} \label{eq:hfeynman2}
H & = & \sum_{x=1}^{s-1}  U_x \otimes \ketbra{x+1}{x}+  U_x^{-1} \otimes \ketbra{x}{x+1}  =  \nonumber \\
	& = & \sum_{x=1}^{s-1}  U_x \ketbra{x+1}{x}  + U_x^{-1} \ketbra{x}{x+1} 
\end{eqnarray}
where \ket{x} is the eigenstate of the position operator $Q$, defined as in  \eref{eq:Q}, belonging to the eigenvalue $x$, and the tensor product symbol has been dropped to shorten the expressions. Equivalently \eref{eq:p} can be rewritten as
\begin{equation} \label{eq:p1}
	P=\sum_{k=1}^{s} P_k \ketbra{k}{k} .
\end{equation}
\begin{proof}
We compute explicitly $P\;H$ and $H\;P$.
\begin{eqnarray}
	P\; H & = & \sum_{k=1}^s P_k  \ketbra{k}{k} (\sum_{x=1}^{s-1} U_x \ketbra{x+1}{x}  + U_x^{-1}\ketbra{x}{x+1} ) = \nonumber \\
	& = &  \sum_{k=1}^s \sum_{x=1}^{s-1} P_k U_x\ket{k}\braket{k}{x+1}\bra{x}  + P_k U_x^{-1} \ket{k}\braket{k}{x}\bra{x+1}  \nonumber \\
	& = & \sum_{x=1}^{s-1} P_{x+1}U_x\ketbra{x+1}{x} + P_x U_x^{-1}\ketbra{x}{x+1}
\end{eqnarray}
\begin{eqnarray}
	H	\;P  & = &  (\sum_{x=1}^{s-1}U_x \ketbra{x+1}{x}  + U_x^{-1}\ketbra{x}{x+1} ) \sum_{k=1}^s P_k \ketbra{k}{k}   = \nonumber \\
	& = & \sum_{x=1}^{s-1} \sum_{k=1}^s  U_x P_k \ket{x+1}\braket{x}{k}\bra{k}   + U_x^{-1} \ket{x}\braket{x+1}{k}\bra{k}    \nonumber \\
	& = & \sum_{x=1}^{s-1} U_x P_{x}\ketbra{x+1}{x} +U_x^{-1} P_{x-1} \ketbra{x}{x+1} 
\end{eqnarray}
Thus
\begin{eqnarray}
PH-HP & = & (P_{x+1}U_x - U_x P_ x)\ketbra{x+1}{x}+ (P_x U_x^{-1} - U_x^{-1} P_{x+1})\ketbra{x}{x+1} = \nonumber \\
 & = & (U_x P_{x}U_x^{-1} U_x - U_x P_ x)\ketbra{x+1}{x}  +  \nonumber \\ 
 & + & (P_x U_x^{-1} - U_x^{-1} U_x P_{x}U_x^{-1})\ketbra{x}{x+1} = 0
\end{eqnarray}
\end{proof}
The space spanned by the initial state and its logical successors is, therefore, a constant of motion. The set $\left \{ \ket{\psi_k}, 1 \leq k \leq s \right \}$  forms a complete \emph{orthogonal} basis for the $s$-dimensional subspace of \Hm effectively visited during the computation (the orthogonality of different basis vectors following immediately form $\braket{C(j)}{C(k)}=\delta_{j,k}$). It is worth mentioning here that the set of logical successors of the initial state, or \emph{Peres' basis} \index{Peres' basis}, can be algorithmically constructed; in fact, if we split the Hamiltonian  \eref{eq:hfeynman2} into
\begin{eqnarray}
H_{forward} & = & \sum_{x=1}^{s-1}  U_x \ketbra{x+1}{x} \\
H_{backward} & = & \sum_{x=1}^{s-1}  U_x^{-1} \ketbra{x}{x+1};
\end{eqnarray}
the set of logical successors of the initial state \ket{\psi_1} corresponds then to the set
\begin{equation} \label{eq:costruzioneperes}
	\{ \ket{\psi_k} = H_{forward}^{k-1} \ket{\psi_1},\; 1 \leq k \leq s \}
\end{equation}
The Hamiltonian  \eref{eq:hfeynman2} can be rewritten using the Peres basis as
\begin{equation}
	H = \sum_{k=1}^{s-1} \ketbra{\psi_{k+1}}{\psi_k} + \ketbra{\psi_{k}}{\psi_{k+1}}
\end{equation}
or as a $s \times s$ bi-diagonal matrix
\begin{equation} \label{eq:hmatrix}
	H = \left (
\begin{array}[pos]{c c c c c c}
0 & 1 & 0 & \ldots & 0 & 0\\
1 & 0 & 1 & \ldots & 0 & 0\\
0 & 1 & 0 & \ldots & 0 & 0\\
\ldots & \ldots & \ldots & \ldots & \ldots & \ldots\\
0 & 0 & 0 & \ldots &  0& 1\\
0 & 0 & 0 & \ldots & 1 & 0
\end{array} \right ).
\end{equation}
which is, up to constant diagonal terms, the finite difference approximation of the Laplace operator. The evolution of the system is thus  of the form
\begin{equation} \label{eq:psit}
| \psi(t) \rangle= \sum_{k=1}^s c(t,k;s) \;  \ket{\psi_k}  = \sum_{k=1}^s c(t,k;s) \;  \ket{R(k)} \otimes \ket{C(k)},
\end{equation}
$c(t,k;s)$ being a numerical functions of the time $t$, of the label of the logical successor $k$ and parametric with respect to the length of the program line. For the sake of simplicity we postpone the discussion of the $c(t,k;s)$ coefficients to chapter \ref{chap:quantumtiming}.\\
From \eref{eq:psit} it becomes clear how the clocking mechanism works: if a measurement is performed and the cursor is found at position $N$, then the register collapses into in the $N-1$ logical successor of the initial state \ket{R(1)}. In particular, if the cursor is found in the last site, the $s$-th in our notation, the logical state of the register corresponds to the desired output state, that is $A \ket{R(1)}$. In Feynman's words (adapted to our notations), (\ref{eq:psit}) says that, starting from the initial condition (\ref{E:init}), ``\underline{If} at some later time the final site $s$ is found to be in the  $| \tau_3(s)=+1 \rangle$  state (and therefore all the others in $| \tau_3(j)=-1 \rangle$ ), then the register state has been multiplied by $U_{s-1}\cdot \dots \cdot U_2 \cdot U_1$  as desired''.
\section{Continuous time quantum walks}
In this section we give an introductory overview on continuous time quantum walks (as opposed to discrete time, or coined, quantum walks which we will not deal with here). We refer to \cite{farhi97,aharonov00} for an exhaustive treatment of quantum walks and of their algorithmic applications.\\[5pt]
Markov chains or random walks on graphs have proved to be a fundamental tool, with broad applications in various fields of mathematics, computer science and the natural sciences, such as mathematical modeling of physical systems, simulated annealing, and the Markov Chain Monte Carlo
method. In the physical sciences they provide a fundamental model for the emergence of global properties from local interactions. In the algorithmic context, they provide a general paradigm for sampling and exploring an exponentially large set of combinatorial structures (such as matchings in a graph), by using a sequence of simple, local transitions.
It is thus natural to ask whether quantum walks might be useful for quantum computation. In \cite{childs02}, for example, it is shown that there are graphs for which the  time for a \emph{quantum walker} to propagate between a particular pair of nodes is exponentially shorter than the analogous  propagation time needed by a classical walker.\\[5pt]
A continuous time classical random walk on a graph is a Markov process. A graph is an ordered couple $\langle V, E \rangle$, where $V$ is the set vertices, say $\{1, 2,\ldots , s\}$, and $E$ a set of edges between vertices.\\ 
A step in a classical random walk on a graph only occurs between two vertices connected by an edge. Let $\gamma$ denote the jumping rate. Starting at any vertex, the probability
of jumping to any connected vertex in a time $\epsilon$ is $\gamma \epsilon$ (in the limit $\epsilon \to 0$). This random walk can be described by the $s \times s$ infinitesimal generator matrix $M$ defined by
\begin{equation}
	M_{ab} = 
	\begin{cases}
	-\gamma,\mbox{ if} \ a \neq b,\ \mbox{$a$ and $b$ connected by an edge} \\
	0, \mbox{ if} \ a \neq b,\ \mbox{$a$ and $b$ not connected}\\
	k\gamma, \mbox{ if} \ a = b,\ \mbox{$k$ is the valence of vertex $a$}.
	\end{cases}
\end{equation}
If $p_a(t)$ denotes the probability of being at vertex $a$ at time $t$, then it evolves under the master equation
\begin{equation} \label{eq:mastereq}
	\frac{d p_a(t)}{dt} = -\sum_{b} M_{ab} p_b(t).
\end{equation}
Following \cite{farhi98}, a natural quantum analogue to the classical random walk described above is given by the quantum Hamiltonian with matrix elements
\begin{equation}
	\bra{a}H\ket{b}= M_{ab}
\end{equation}
\ket{a} and \ket{b} belonging to an assigned basis $\ket{1},\ket{2},\ldots,\ket{s}$ of a $v$-dimensional Hilbert space. The Schr\"odinger equation for $\ket{\psi(t)}$ can be written as
\begin{equation} \label{eq:schroqw}
	i \frac{d}{dt} \braket{a}{\psi(t)} = \sum_b \bra{a}H\ket{b}\braket{b}{\psi(t)}.
\end{equation}
We observe that, in some sense, \emph{any} evolution in a finite-dimensional Hilbert space can be thought of as an oriented graph with Hermitian weights.\\
We furthermore point out that whereas \eref{eq:mastereq} conserves the probability
\begin{equation}
	\sum_a p_a(t) =1,
\end{equation}
the Schr\"odinger equation \eref{eq:schroqw} preserves probability as the sum of the amplitudes squared
\begin{equation}
	\sum_a |\braket{a}{\psi(t)}|^2.
\end{equation}
The simplest graph we can take into account is the one-dimensional lattice $\mathbb{Z}$, resulting in a nearest neighbor Hamiltonian defined by
\begin{equation} \label{eq:hambase}
	H\ket{j} = -\frac{1}{\Delta^2} (\ket{j+1}-2\ket{j}+\ket{j-1}),
\end{equation}
that is the discrete approximation of the Laplace operator $d^2/dx^2$, with $\Delta=\sqrt{2/\lambda}$, $\lambda=2 \gamma$.\\
In the basis $\ket{1},\ket{2},\ldots,\ket{v}$, the matrix representation of the Hamiltonian specified in \eref{eq:hambase} is 
\begin{equation} \label{eq:hbasematrix}
	H = -\frac{1}{\Delta^2} \left (
\begin{array}[pos]{c c c c c c}
-2 & 1 & 0 & \ldots & 0 & 0\\
1 & -2 & 1 & \ldots & 0 & 0\\
0 & 1 & -2 & \ldots & 0 & 0\\
\ldots & \ldots & \ldots & \ldots & \ldots & \ldots\\
0 & 0 & 0 & \ldots &  -2& 1\\
0 & 0 & 0 & \ldots & 1 & -2
\end{array} \right ),
\end{equation}
which is, up to an additive constant, equivalent to the Hamiltonian \eref{eq:hmatrix} of the basic model of the Feynman machine. We point out that in \eref{eq:hmatrix} the basis states were the states $\ket{\psi_k}$, with $1 \leq k \leq s$, defined as in \eref{eq:costruzioneperes}, namely the Peres basis.\\
We delay to chapter 3 the discussion of the dynamics of an excitation traveling through the linear chain or, more generally, on graphs of the form of  \fref{fig:grovermu3}.

%
\section*{Summary}
We have presented the clocking mechanism of Feynman's quantum computer: the numerical coupling constants between nearest neighbor sites of a spin chain with $XY$ interactions are substituted by unitary operators acting on the register subspace. The so created entanglement between the clocking and the register degrees of freedom implements the same kind of timing used in classical computers to manage the ordered application of computational primitives to the input/output register.\\
We have made evident by means of the Peres basis, that the evolution of the clocking subsystem is independent of the operations performed on the register, as long as we use unitary operators on the latter.\\
The continuous time quantum walks paradigm has been deeply investigated in recent years, with the hope of getting some new technique to define new quantum algorithm working faster than any classical one. The kind of interaction used in Feynman quantum computer extends the continuous time quantum walk: the walker \emph{does something} while traversing the graph.\\[5pt]
In the following chapter we will extend the linear chain to more general graphs. This will allow the implementation of  basic flow control mechanisms, such as the IF...THEN...ELSE and the iteration of quantum subroutines, and a simplification of the interactions between elements of the system.
\chapter{Simplifying the implementation} \label{chap:grover}
As a case study, we present an implementation of Grover's algorithm in the framework of Feynman's cursor model of a quantum computer. Using the cursor degrees of freedom  as a quantum clocking mechanism  allows Grover's algorithm to be performed using a single, time independent Hamiltonian. We examine issues of locality and resource usage in implementing such a Hamiltonian. In the familiar language of Heisenberg spin-spin coupling on a linear chain of spins, we introduce occasional controlled jumps that allow for motion on a planar graph: in this sense our model implements the idea of ``timing'' a quantum algorithm using a continuous-time quantum walk. In this context we examine some consequences of the entanglement between the states of the input/output register and the states of the quantum clock.
\section{Grover's algorithm on a Feynman machine}
The starting point of our discussion is the analysis of the physical aspects of Grover's algorithm given in  \cite{grover01} and \cite{farhi98}.\\ 
Suppose one is given an ``oracle'' able to compute, in a quantum reversible way, the indicator function of a binary word $\mathbf{a} \in \{-1,1\}^\mu$  of an assigned length  $\mu$. We will assume, for the sake of definiteness, that this computation is performed by applying a unitary transformation $A$ to the input/output register of length $\nu = \mu + 1$. Suppose that $A$ results from the action, for a fixed amount $\bar{t}$ of time, of a Hamiltonian  $K(\mathbf{a})$, that is:
\begin{equation}
	A = \exp(-i \bar{t} K(\mathbf{a})).
\end{equation}
It is then possible to arrange things in such a way that the state
\begin{equation}
	\ket{\mathbf{a}} = \ket{\sigma_3(1)=a_1, \sigma_3(2) = a_2,\ldots,\sigma_3(\mu)=a_\mu}
\end{equation}
that corresponds to having the word  $\mathbf{a}$ written on the register, is the ground state of $K(\mathbf{a})$.\\
The search for the ground state of $K(\mathbf{a})$  is performed, in Reference \cite{farhi98}, following the simple idea of perturbing the Hamiltonian  $K(\mathbf{a})$,
\begin{equation} \label{eq:perturbedh}
	K(\mathbf{a}) \rightarrow K(\mathbf{a})+\beta
\end{equation}
with a perturbation $\beta$  chosen in such a way that a suitable initial condition oscillates about the state \ket{\mathbf{a}}  with a period proportional to $2^{\mu/2}$ , becoming, at a time $O(2^{\mu/2})$, parallel to the target state.\\
By applying Trotter's product formula to \eref{eq:perturbedh}, it is shown, in Reference \cite{grover01}, that no significant loss in the probability  $P(\mathbf{a})$ of finding  $\mathbf{a}$, at suitable values of time  $t$, results from alternating intervals of time in which only the ``oracle'' Hamiltonian $K(\mathbf{a})$  is active, thus in fact applying the ``oracle'' transformation $A$, with intervals in which only $\beta$ is active, thus in fact applying the ``estimator'' transformation
\begin{equation}
	B = \exp(-i \bar{t} \beta)
\end{equation}
The \emph{oscillatory} nature of the quantum search algorithm is confirmed, in this discrete time setting, by the analysis of Reference \cite{boyer98}.\\
We have seen that the administration, in the correct order, of the \emph{oracle} and \emph{estimation} transformation $A$ and $B$ to the register can be realized by means of the clocking mechanism. In our implementation we define the operator $A: \Hr \rightarrow \Hr$ through the action on simultaneous eigenstates of $\sigma_3(1),\sigma_3(2),\ldots,\sigma_3(\mu),\sigma_3(\nu)$:
\begin{eqnarray} \label{eq:oracle}
 & A  & \ket{\sigma_3(1)=z_1,\sigma_3(2)=z_2,\ldots,\sigma_3(\mu)=z_\mu,\sigma_3(\nu)=z_\nu} =\nonumber \\
& = & \begin{cases}
\ket{\sigma_3(1)=z_1,\sigma_3(2)=z_2,\ldots,\sigma_3(\mu)=z_\mu,\sigma_3(\nu)= - z_\nu},\ if\ \mathbf{z}= \mathbf{a} \\
\ket{\sigma_3(1)=z_1,\sigma_3(2)=z_2,\ldots,\sigma_3(\mu)=z_\mu,\sigma_3(\nu)= z_\nu},\ if\ \mathbf{z} \neq \mathbf{a}
\end{cases}
\end{eqnarray}
where $\mathbf{z}=(\ves{z}{\mu}) \in \{-1,1\}^\mu$, $z_\nu \in \{-1,1\}$.\\
The \emph{oracle} $A$ performs a quantum reversible computation of the indicator function of $\mathbf{a}$ by flipping the  component $3$ of the output qubit  $\underline{\sigma}(\nu)$ iff the word $\mathbf{a}$ is written on the register in terms of the $\underline{\sigma}(1), \underline{\sigma}(2),\ldots,\underline{\sigma}(\mu)$ components of the input qubits.\\
Define, in a similar way, a linear operator $B:\Hr \rightarrow \Hr$ through the following action on the simultaneous eigenstates of  $\sigma_1(1),\sigma_1(2),\ldots,\sigma_1(\mu),\sigma_3(\nu)$: 
\begin{eqnarray} \label{eq:estimator}
 & B  & \ket{\sigma_1(1)=x_1,\sigma_1(2)=x_2,\ldots,\sigma_1(\mu)=x_\mu,\sigma_3(\nu)=z_\nu} =\nonumber \\
& = & \begin{cases}
\ket{\sigma_1(1)=x_1,\sigma_1(2)=x_2,\ldots,\sigma_1(\mu)=x_\mu,\sigma_3(\nu)= -z_\nu},\ if\ \mathbf{x}= \mathbf{1}_\mu \\
\ket{\sigma_1(1)=x_1,\sigma_1(2)=x_2,\ldots,\sigma_1(\mu)=x_\mu,\sigma_3(\nu)= z_\nu},\ if\ \mathbf{x}\neq \mathbf{1}_\mu
\end{cases}
\end{eqnarray}
where $\mathbf{x}=(\ves{x}{\mu}) \in \{-1,1\}^\mu$, $z_\nu \in \{-1,1\}$ and $\mathbf{1}_\mu = (\underbrace{1,1,\ldots,1)}_{\mu \ times}$.
\begin{figure}[!h]
	\centering	\includegraphics[width=10cm]{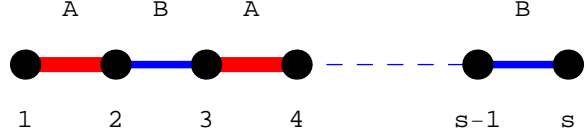}
	\caption{The Hamiltonian \eref{eq:hgrover} describes an $XY$ interaction between nearest neighbor cursor spins.  $A$ is the ``coupling constant'' between spins corresponding to odd links; $B$ is the ``coupling constant'' between spins corresponding to even links. Both $A$ and $B$ are, in fact, functions of the register spins. 
}
	\label{fig:CursoreAB}
\end{figure}
Following the prescription of \sref{sec:themodel} we define the Hamiltonian
\begin{equation} \label{eq:hgrover}
	H = \sum_{x=1}^{s-1} U_j \ \ketbra{j+1}{j} + h.c.
\end{equation}
where
\begin{equation}
	U_j=
	\begin{cases}
	A \ if \ j \ is \ odd \\
	B \ if \ j \ is \ even.
	\end{cases}
\end{equation}
We set the initial state of the machine to
\begin{eqnarray} \label{eq:condinigrover1}
	\ket{\psi_1} & = & \ket{R_G(1)} \ket{C(1)} = \nonumber \\
	& = & \ket{\sigma_1(1)=1,\sigma_1(2)=1,\ldots,\sigma_1(\mu)=1,\sigma_3(\nu)= -1} \ket{Q = 1},
\end{eqnarray}
where $C(1)$ is defied by \eref{eq:cj} and $\ket{Q=1}$ is the eigenstate of the position operator $Q$ defined in \eref{eq:Q} belonging to the eigenvalue $1$.\\
Following \eref{eq:costruzioneperes} we define the basis states $\ket{\psi_j},\;1 \leq j \leq s$ of the subspace of \Hm the system evolves in. We recall that the basis states are of the form
\begin{equation}
	\ket{\psi_k} = U_{k-1}\cdot \ldots \cdot U_1 \ket{R_G(1)}\ket{Q=k};
\end{equation}
if $k$ is an \emph{odd} number, $k=2n+1$, it is
\begin{equation}
	U_{k-1}\ldots U_1 = (B\cdot A)^n.
\end{equation}
An explicit expression for $(B\cdot A)^n \; \ket{\sigma_1(1)=x_1,\sigma_1(2)=x_2,\ldots,\sigma_1(\mu)=x_\mu}$ can be found by the iterative procedure of Reference \cite{boyer98}:
\begin{eqnarray} \label{eq:statogroverdispari}
&(B\cdot A)^n   \ket{\sigma_1(1)=x_1,\sigma_1(2)=x_2,\ldots,\sigma_1(\mu)=x_\mu} &= \nonumber \\
& = 
	\left (  
	\alpha_n(\mu)\ket{\mathbf{a}}_3 + \beta_n(\mu) \ \sum_{\mathbf{z} \neq \mathbf{a}} \ket{\mathbf{z}_3}
	\right)&
\end{eqnarray}
where
\begin{eqnarray} \label{eq:alphabeta}
\alpha_n(\mu) & = & (-1)^n \sin((2n+1)\chi(\mu)) \\ 
\chi_\mu & = & \arcsin(2^{-\mu/2}) \\ 
\beta_n(\mu) & = & \frac{(-1)^n}{\sqrt{2^\mu -1}} \cos((2n+1)\chi(\mu))
\end{eqnarray}
In \eref{eq:statogroverdispari} we have omitted explicit reference to \ket{\sigma_1(\nu)=-1}, as the conservation law
\begin{equation}
	[H,\sigma_1(\nu)]=0.
\end{equation}
allows us to do, and we have set, for every $\mathbf{z} =(\ve{z}{\mu}) \in \{-1,1\}^\mu$
\begin{equation}
	\ket{\mathbf{z}}_3 = \ket{\sigma_3(1)=z_1,\sigma_3(2)=z_2,\ldots,\sigma_3(\mu)=z_\mu}.
\end{equation}
The case of an even value of $k = 2 n + 2$, can be similarly analyzed, by observing that
\begin{eqnarray} \label{eq:statogroverpari}
&A\cdot (B\cdot A)^n   \ket{\sigma_1(1)=x_1,\sigma_1(2)=x_2,\ldots,\sigma_1(\mu)=x_\mu} &= \nonumber \\
& = 
	\left (  
	-\alpha_n(\mu)\ket{\mathbf{a}}_3 + \beta_n(\mu) \ \sum_{\mathbf{z} \neq \mathbf{a}} \ket{\mathbf{z}_3}
	\right)&
\end{eqnarray}
\begin{observation}
The state of the register evolves in a proper two dimensional subspace of \Hr. The evolution of Grover's register state can be simulated using a single qubit rotating in a two dimensional Hilbert space $\mathcal{H}$. We will exploit this feature later.
\end{observation}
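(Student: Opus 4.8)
The plan is to read the invariant subspace directly off the explicit expressions \eref{eq:statogroverdispari} and \eref{eq:statogroverpari} already obtained for the logical successors, and then to exhibit the isometry that realizes the single-qubit picture. First I would introduce the two candidate spanning vectors
\begin{equation}
\ket{u} = \ket{\mathbf{a}}_3, \qquad \ket{v} = \frac{1}{\sqrt{2^\mu - 1}} \sum_{\mathbf{z} \neq \mathbf{a}} \ket{\mathbf{z}}_3 ,
\end{equation}
and observe that they are orthonormal: $\ket{u}$ is a single $\sigma_3$-computational basis vector while $\ket{v}$ is the normalized uniform superposition over the remaining $2^\mu - 1$ basis vectors, so their supports are disjoint and $\braket{u}{v}=0$, $\braket{u}{u}=\braket{v}{v}=1$.

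Next I would rewrite \eref{eq:statogroverdispari} and \eref{eq:statogroverpari} in terms of $\ket{u}$ and $\ket{v}$. Setting $\phi_n = (2n+1)\chi(\mu)$, the coefficient of $\ket{u}$ is $\pm(-1)^n\sin\phi_n$ and the coefficient of $\ket{v}$ is $(-1)^n\cos\phi_n$ (the factor $\sqrt{2^\mu-1}$ absorbing the normalization of $\ket{v}$), so every logical successor reads $\ket{R_G(k)} = (-1)^n\!\left(\pm\sin\phi_n\,\ket{u} + \cos\phi_n\,\ket{v}\right)$, with the sign governed solely by the parity of $k$. I would verify normalization as a consistency check, $|\alpha_n|^2 + (2^\mu-1)|\beta_n|^2 = \sin^2\phi_n + \cos^2\phi_n = 1$. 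This places $\ket{R_G(k)}$ in $V := \mathrm{span}\{\ket{u},\ket{v}\}$ for every $k$; since $\dim V = 2$ is strictly smaller than $\dim \Hr$, the subspace is proper. I would also recall that the output qubit $\nu$ is frozen by the conservation law $[H,\sigma_1(\nu)]=0$, so it factors out and does not enlarge the relevant space.

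To make precise the sense in which ``the register evolves in $V$'', I would invoke the Peres decomposition \eref{eq:psit}: because the cursor states $\ket{C(k)}$ are orthonormal, the reduced density operator of the register is the convex combination $\rho_{\mathrm{reg}}(t) = \sum_k |c(t,k;s)|^2\,\ketbra{R_G(k)}{R_G(k)}$ of projectors onto vectors of $V$, hence supported on $V$ at every time $t$. For the single-qubit model I would take the isometry $W:V\to\mathbb{C}^2$ with $\ket{v}\mapsto\ket{0}$ and $\ket{u}\mapsto\ket{1}$; under $W$ each odd successor becomes $(-1)^n(\cos\phi_n\,\ket{0} + \sin\phi_n\,\ket{1})$, a unit vector whose angle advances by the fixed amount $2\chi(\mu)$ per odd step, that is, a qubit rotating in a plane. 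The sign flip of the $\ket{u}$-component between odd and even $k$ is exactly the reflection implemented by the oracle $A$, so the familiar Grover geometry of two reflections composing into a rotation is recovered inside $\mathbb{C}^2$.

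The computation is entirely routine once \eref{eq:statogroverdispari}–\eref{eq:statogroverpari} are in hand; the only delicate point is the precise meaning of the confinement claim, namely that it is the \emph{reduced} register state, rather than a pure register state, that lives in $V$, which follows cleanly from the orthonormality of the cursor basis in the Peres decomposition.
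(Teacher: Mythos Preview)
Your argument is correct and, at the point where the Observation is placed, it is exactly the natural one: the explicit Boyer--H\o yer--Mosca--Tapp formulas \eref{eq:statogroverdispari}--\eref{eq:statogroverpari} already display every $\ket{R_G(k)}$ as a real combination of $\ket{u}=\ket{\mathbf a}_3$ and $\ket{v}=(2^\mu-1)^{-1/2}\sum_{\mathbf z\neq\mathbf a}\ket{\mathbf z}_3$, and your normalization check and reduced--state remark are both to the point.

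The paper does not attach a proof to the Observation where it is stated; the justification it eventually offers is the geometric one of Jozsa recalled in \sref{sec:speed} (Lemma~\ref{lemma:reflections} and the subsequent Theorem): $A=I_r-2\ketbra{\omega}{\omega}$ and $-B=I_r-2\ketbra{\iota}{\iota}$ are reflections in hyperplanes orthogonal to $\ket{\omega}$ and $\ket{\iota}$, each reflection preserves the span of $\ket{\omega}$ and $\ket{\iota}$, and their product is a rotation by $2\chi(\mu)$ in that plane. Your approach is computational and relies on having the closed--form coefficients $\alpha_n,\beta_n$ in hand; the paper's is structural and needs only that $A$ and $B$ are reflections, so it would survive without the explicit recursion of \cite{boyer98}. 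Both land on the same pair $\{\ket{u},\ket{v}\}$ (the paper's $\ket{\omega},\ket{\iota}$ after Gram--Schmidt), and your isometry $W$ is precisely the change of basis the paper carries out when it sets up the toy model \eref{eq:hamiltoniana4} with $\theta=\pi-2\chi(\mu)$ and $\alpha=-4\chi(\mu)$.
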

Summarizing, and considering, for the sake of definiteness, the case of an odd value of  $s = 2 g  + 1  $ the solution of the Schr\"odinger equation
\begin{equation}
	i \frac{d}{dt} \ket{\psi(t)} = H \ket{\psi(t)},
\end{equation}
under the initial condition \eref{eq:condinigrover1} can be written as:
\begin{eqnarray}
\ket{\psi(t)} & = & \sum_{n=0}^g
 c(t,2n+1;s)\left (  
	\alpha_n(\mu)\ket{\mathbf{a}}_3 + \beta_n(\mu) \ \sum_{\mathbf{z} \neq \mathbf{a}} \ket{\mathbf{z}_3}
	\right) \otimes  \nonumber \\
	& \otimes & \ket{\sigma_1(\nu)=-1} \otimes \ket{Q=2n+1} + \nonumber \\
	& + & 
	\sum_{n=0}^{g-1}
 c(t,2n+2;s) \left (  
	-\alpha_n(\mu)\ket{\mathbf{a}}_3 + \beta_n(\mu) \ \sum_{\mathbf{z} \neq \mathbf{a}} \ket{\mathbf{z}_3}
	\right) \otimes  \nonumber \\
	& \otimes & \ket{\sigma_1(\nu)=-1}\otimes \ket{Q=2n+2} 
	\end{eqnarray}
The conditional probability of reading upon measurement the word $\mathbf{a}$ on the register \emph{given} that the cursor has been found at site $k$ is then given by $|\alpha_k(\mu)|^2$.
\section{Iteration of quantum subroutines}
Grover's algorithm provides a quadratic speedup with respect to any classical algorithm for the search of a keyword in an unstructured database of $2^\mu$ words \cite{grover01}: it is sufficient to apply the $B\cdot A$ operator $O(2^{\mu/2})$ times to the register to have the word $\mathbf{a}$ we are looking for written on it. Thus, if we sticked to the linear chain model, we would need an exponential number of functional blocks $A$ and $B$ and program line sites to perform the algorithm.\\
In this section we show that this cost in terms of space can be made linear in $\mu$ by using quantum subroutines.\\
For every non negative integer $K$, we wish to show a quantum clocking mechanism able to apply  $2^K$ times the transformation $BA$  to the register qubits $\ve{\underline{\sigma}}{\nu}$, by repeatedly using the same ``piece of hardware'' that applies $BA$ just once. We will show that this clocking mechanism will involve
\begin{equation}
	s(K)=4K+3
\end{equation}
cursor qubits $\ve{\underline{\tau}}{s(K)}$.\\
In order to keep track of the progress of the  $2^K$ executions of the assigned \emph{subroutine} $BA$, there must be a subsystem (the \emph{subroutine counter}) having  $2^K$ different states: it will be constructed in terms of  $K$ qubits \ve{\underline{\rho}}{K}, $\underline{\rho}(j)=(\rho_1(j),\rho_2(j),\rho_3(j))$, \mbox{$\;1 \leq j \leq K$}.\\
We will denote with $\mathcal{H}_{counter}$ the $2^K$ dimensional state space of the counter degrees of freedom.\\
The definition of the Hamiltonian operator on $\Hr \otimes \mathcal{H}_{counter}\otimes \Hc$ will be given by an iterative scheme.
\begin{figure}[h]
	\centering	\includegraphics[width=10cm]{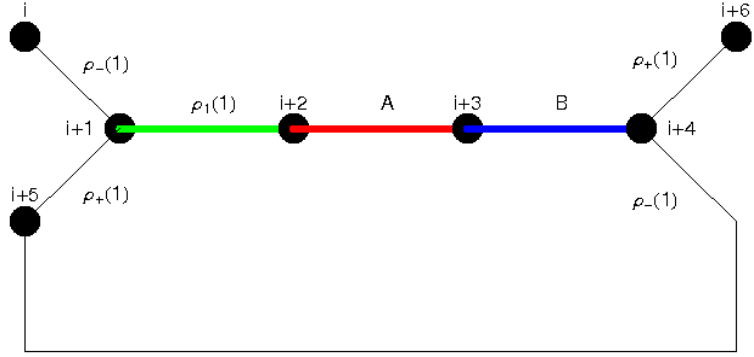}
		\caption{Do $BA$ twice.}
	\label{fig:SubroutinesAB}
\end{figure}
\begin{figure}[h]
	\centering	\includegraphics[width=10cm]{Figure/Subroutines.eps}
		\caption{Do $BA$ twice. Here the operators acting on the counter degrees of freedom are encoded in the following graphical convention: to a raising operator corresponds a thick solid black line; to a lowering operator a dashed line; to the negation $\rho_1$ a thick solid green line. The functional block $h_0(i+2,i+4)$ is indicated as a coupling constant on the corresponding edge.}
	\label{fig:Subroutines}
\end{figure}
Set, for $i=1,2,\ldots, s(k)-2$
\begin{eqnarray}
	h_0(i,i+2) & = & \left ( A_{forward}(i)+B_{forward}(i+1) \right )\otimes I_{counter} \nonumber \\
	& = & (A \ketbra{i+1}{i} + B \ketbra{i+2}{i+1}) \otimes I_{counter}
\end{eqnarray}
where $I_{counter}$ is the identity operator in $H_{counter}$. The operator $h_0(i,i+2)$ applies the transformation $BA$ to the register while the cursor jumps from site $i$ to site \mbox{$i+2$} (see \fref{fig:SubroutinesAB}).\\
For $i=1,2,\ldots, s(K)-6$
\begin{eqnarray} \label{eq:hsubroutine}
h_1(i,i+6) & = & \rho_+(1) \ketbra{i+1}{i} +\nonumber \\
 & + &\rho_x(1)  \ketbra{i+2}{i+1}+  \nonumber \\
 & + & h_0(i+2,i+4) +\nonumber \\
 & + & \rho_-(1) \ketbra{i+6}{i+4} + \nonumber \\ 
 & + &  \rho_+(1) \ketbra{i+5}{i+4} +\nonumber \\ 
 & + & \rho_-(1)\ketbra{i+1}{i+5}
\end{eqnarray}
The term $  \rho_-(1) \ketbra{i+6}{i+4} +  \rho_+(1) \ketbra{i+5}{i+4}$  in \eref{eq:hsubroutine} is an example of the implementation of a conditional jump through the \emph{SWITCH} primitive. The first addendum acts non-vanishingly only in the subspace belonging to the eigenvalue $+1$ of the controlling qubit  $\rho_3(1)$  and sends the excitation of the cursor from  $i+4$ to $i+6$; the second addendum, in turn, acts non-vanishingly only in the subspace belonging to the eigenvalue $-1$ of  $\rho_3(1)$  and sends the excitation of the cursor from $i+4$  to $i+5$. Notice that in this implementation of the \emph{IF...THEN, ELSE...} construct, the controlling bit  $\rho_3(1)$ gets inverted.\\
The iteration step from  $h_{j-1}$ to $h_j$ is given by: 
\begin{eqnarray} \label{eq:hsubroutine2}
h_j(i,i+ 4j+2) & = & \rho_+(j) \ketbra{i+1}{i}+\nonumber \\
& + & \rho_x(j) \ketbra{i+2}{i+1}+  \nonumber \\
 & + & h_j(i+2,i+4 j)\nonumber \\ 
 & + & \rho_-(j) \ketbra{i+4j+2}{i+4 j} + \nonumber \\ 
 & + & \rho_+(j) \ketbra{i+4j+1}{i+4 j} +  \nonumber \\ 
 & + & \rho_-(j) \ketbra{i+1}{i+ 4j +1}
\end{eqnarray}
and is represented in \fref{fig:iteratore2mu}.
\begin{figure}[h]
	\centering
		\includegraphics[width=10cm]{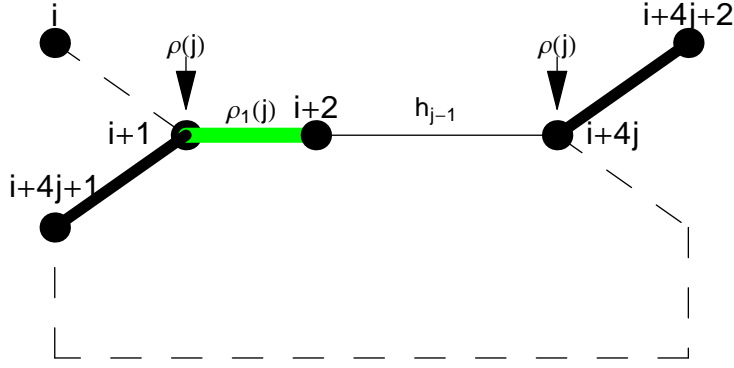}
	\caption{Do $BA$ $2^j$ times while the cursor moves from $i$ to $i+s(j)-1$}
	\label{fig:iteratore2mu}
\end{figure}
For a fixed value of the positive integer $K$ we define the forward part of the Hamiltonian as
\begin{equation}
	H_{forward}(K) = h_K(1,s(K)) = h_K(1,4K+3)
\end{equation}
and the Hamiltonian as
\begin{equation} \label{eq:hgroveriter}
	H(K) = 	H_{forward}(K)+	H_{backward}(K)= 	H_{forward}(K)+	H_{forward}(K)^\dagger.
\end{equation}
Let us consider the Schr\"odinger equation
\begin{equation} \label{eq:schroiter}
	i \frac{d}{dt} \ket{\psi(t)} = -\frac{\lambda}{2} H(K) \ket{\psi(t)}
\end{equation}
and the initial condition
\begin{eqnarray} \label{eq:inititer}
\ket{\psi(0)} & = & \ket{\sigma_1(1)=1,\sigma_1(2)=1,\ldots,\sigma_1(\mu)=1,\sigma_1(\nu)=-1} \otimes \nonumber \\
& \otimes & \ket{\rho_3(1)=-1,\rho_3(2)=-1,\ldots,\rho_3(K)=-1} \otimes \nonumber \\
& \otimes & \ket{Q=1}.
\end{eqnarray}
Equation \eref{eq:schroiter} under initial condition \eref{eq:inititer} is extremely easy to solve because of the conservation laws
\begin{eqnarray}
\left[ H(K),\sigma_1(\nu) \right] & = &0  \label{eq:consnu}\\
\left [H(K),N_3 \right ] & = & 0 \\
\left [H(K),P(K) \right ] & = & 0
\end{eqnarray}
where the operator $P(K)$ is the projector operator on the subspace of $\Hr \otimes \mathcal{H}_{counter} \otimes \Hc$ spanned by the $2^{K+3}-5$ orthonormal vectors defined, similarly to \eref{eq:costruzioneperes}, by
\begin{equation} \label{eq:soliter}
		\{ \ket{\psi_k} = H_{forward}^{k-1} \ket{\psi_1},\; 1 \leq k \leq p(K) \}
\end{equation}
with $\ket{\psi_1}=\ket{\psi(0)}$.
Because of the above considerations, the solution of \eref{eq:schroiter}, \eref{eq:inititer} will be of the form
\begin{equation} \label{eq:psit2}
	\ket{\psi(t)} = \sum_{j=1}^{p(K)} c(t,j;p(K)) \ket{\psi_j}
\end{equation}
A full understanding of the solution \eref{eq:soliter} requires the analysis of the states  \ket{\psi_j}, for $j=1,2,\ldots,p(K)$.\\
Because of \eref{eq:consnu} all of them are eigenstates of $\sigma_1(\nu)$ belonging to the eigenvalue $-1$; from now on, we omit the explicit reference to this fact, using the shorthand notation
\begin{equation}
	\ket{\sigma_1(1)=x_1,\ldots,\sigma_1(\mu)=x_\mu,\sigma_1(\nu)=-1} \equiv \ket{\sigma_1(1)=x_1,\ldots,\sigma_1(\mu)=x_\mu} = \ket{\mathbf{x_\mu}}
\end{equation}
for $\mathbf{x}_\mu \in \{-1,1\}^\mu$.\\
Each of the vectors  \ket{\psi_j} will be, furthermore, a simultaneous eigenvector of each of the operators $\mathbf{\rho}_3 = ( \ve{\rho_3}{K})$. Calling  $\mathbf{r_j} \in \{-1,1\}^K$ the collection of the eigenvalues to which  \ket{\psi_j} belongs,  we will write, for istance,
\begin{equation}
	\ket{\psi_1}=\ket{\mathbf{1}_1} \otimes \ket{\mathbf{\rho}_3=-\mathbf{1}} \otimes \ket{Q=1}.
\end{equation}
and
\begin{equation} \label{eq:succiter2}
	\ket{\psi_j} = A^{\epsilon_j} (BA)^{n_j} \ket{\mathbf{1}_1} \otimes \ket{\mathbf{\rho}_3=\mathbf{r}_j} \otimes \ket{Q=q_j}.
\end{equation}
The explicit iterative algorithm by which $\epsilon_j,\ n_j,\ q_j,\ \mathbf{r}_j$ can be computed is strictly parallel to the iteration procedure of \fref{fig:iteratore2mu}. For the following discussion it is sufficient to define the exponents $\epsilon_j$ and $n_j$.
\begin{equation}
	\epsilon_j = 
	\begin{cases}
	1 \;\;\;  if\ j \in \{\ves{j}{2^K}\} \\
	0 \;\;\; otherwise
	\end{cases}
\end{equation}
where, for $i=1,2,\ldots,2^K$
\begin{eqnarray} \label{eq:conteggio}
j_i & = & 2 K + 2 +5 (i-1) + 3 \sum_{x=1}^{i-1} e_2(x) = \nonumber \\
& = &  2 K + 2 +5 (i-1) + 3 \sum_{h=1}^{K-1} \left\lfloor (i-1) / 2^{K-h} \right \rfloor.
\end{eqnarray}
In \eref{eq:conteggio} we have indicated by $e_2(x)$  the exponent of the prime factor $2$ in the factorization of the positive integer $x$, and by  $\left\lfloor y \right \rfloor$  the integer part of the positive real number $y$.\\
Let us focus our attention on the states $\ket{\psi_{j_1},\ket{\psi}_{j_2},\ldots, \ket{\psi_{j_{2^k}}}}$,
\begin{eqnarray} \label{eq:sub1}
	\ket{\psi_{j_1}} & =&  A \ket{\mathbf{1}_1}  \ket{\mathbf{\rho}_3=\mathbf{1}_1}  \ket{Q=2 K + 2} = \nonumber \\
	& = &  \ket{\mathbf{1}_1} \ket{N_{\mathbf{\rho}_3}=1} \ket{Q=j_1}.
\end{eqnarray}
In \eref{eq:sub1} we have given a numerical meaning to the content of a subroutine counter by defining the operator
\begin{equation}
	N_{\mathbf{\rho}_3}= 1 + \sum_{y=1}^K \frac{1+\rho_3(y)}{2} 2^{y-1}.
\end{equation}
In all the predecessors $\ket{\psi_1},\ket{\psi_2},\ldots,\ket{\psi_{j_1-1}},\ket{\psi_{j_1}}$ the register is in its initial state $\ket{\mathbf{1}}_1$. The immediate successor of \ket{\psi_{j_1}} is
\begin{equation}
	\ket{\psi_{j_1+1}} = BA \ket{\mathbf{1}}_1  \ket{N_{\mathbf{\rho}_3}=1} \ket{Q=2 K + 3}.
\end{equation}
In all of the states  $\ket{\psi_{j_1+1}},\ldots,\ket{\psi_{j_2-1}}$  the register remains in the state $BA \ket{\mathbf{1}}_1$ ; the content of the register changes only at step  $j_2$, where it is
\begin{equation}
	\ket{\psi_{j_2}} = ABA \ket{\mathbf{1}}_1  \ket{N_{\mathbf{\rho}_3}=2} \ket{Q=2 K + 7}.
\end{equation}
At each of the steps  $j_i$ the state of the register gets acted upon by an additional  $A$ and at step $j_i+1$  by an additional  $B$. In steps from  $j_i+2$ to $j_{i+1}-1$  the state of the register remains unaltered.\\
The content of the register becomes $(BA)^{2^K}$ for the first time at step $j_{2^K}+1=p(K)-K$  and such remains until  the last step $p(K)$.\\
The exponent $n_j$  in \eref{eq:succiter2} is therefore equal to the number of ``non-trivial'' steps  $j_i$  that precede step $j$:
\begin{equation}
	n_j  = \left | \{1 \leq i \leq 2^K : j_i  < j  \}\right|.
\end{equation} It is, therefore
\begin{eqnarray}
n_j & = & 0,  \mbox{ for } j <2K +2 \nonumber \\
n_{2K+3} & = & 1, \nonumber \\
n_j &  = & 2^K, \mbox{ for } j \geq p(K)-K.
\end{eqnarray}
For $2K+3 \leq j \leq p(K)-K$, $n_j$ grows in an approximately linear way because of the inequality
\begin{eqnarray}
	j_i & \geq & 2K+2+5(i-1)+3\left(1-\frac{1}{2^{L(i)}} \right)(i-1)-3L(i) \\
	j_i & \leq  & 2K+2+5(i-1)+3\left(1-\frac{1}{2^{L(i)}} \right)(i-1)\\
	L(i)& = & \left\lfloor \log_2(i-1) \right\rfloor
\end{eqnarray}
which easily follows from \eref{eq:conteggio} and from the fact that $x-1 < \left\lfloor x \right\rfloor \leq x$.\\
This justifies the approximation
\begin{equation}
n_j \approx 
\begin{cases}
0,  & \mbox{for } 1 \leq j \leq 2K+2 \\
1+\frac{2^K-1}{p(K)-3K-3}(j-(2K+3)), & \mbox{for }   2K+3 \leq j \leq p(K)-K \\
2^K, & \mbox{for }  p(K)-K \leq j \leq p(K).
\end{cases}
\end{equation}
The iteration of quantum subroutines mechanism provides thus a way to reduce the space complexity of an algorithm; this, in turn, reduces the complexity of the physical implementation of the Grover algorithm on a Feynman computer. In the next section we address the problem of further simplify the implementation of quantum functional blocks.
\section{Equivalent `local' Hamiltonian } \label{sec:localhamiltonian}
We  restrict the class of unitary operators $U_j$ acting on the register to rotation operators acting on a single spin. In this way  each addendum in the Hamiltonian involves at most 3 bodies: two spins of the program counter and one of the register. The fulfillment of this requirement makes the architecture of the quantum computer \emph{modular} \cite{peres85}, thus simplifying the physical implementation of the desired interactions. We will show  that this requirement does not affect the computational power of the Feynman machine. Furthermore, we will provide an analysis of the computational cost, in terms of space (additional qubits) involved in substituting such non-local terms with equivalent terms in which only interactions between two cursor spins and at most one register spin appear.\\[5pt]
The explicit expression, in terms of the register spins, of the \emph{oracle} operator $A$ defined in \eref{eq:oracle} is:
\begin{equation}
	A = 1 + (\sigma_1(\nu)-1) \prod_{i=1}^\mu \frac{1+a_i \sigma_3(i)}{2}.
\end{equation}
The analogous expression for the \emph{estimator} operator $B$ defined in \eref{eq:estimator} is
\begin{equation}
	B = 1 + (\sigma_1(\nu)-1) \prod_{i=1}^\mu \frac{1+a_i \sigma_1(i)}{2}.
\end{equation}
where $1=I_r$ is the identity operator in $\Hr$.\\
In the Hamiltonian $H$ defined in \eref{eq:hgrover} and in the Hamiltonian $H(K)$ \eref{eq:hgroveriter}, there are, therefore, non local terms such as $A \ketbra{j+1}{j}$ and $B \ketbra{j+1}{j}$ involving many-body interactions among two cursor spins and \emph{all} the register spins. Our goal, in this section, is to show that both the transformation $A$ and $B$ can be implemented by means of local operations involving at most three bodies: two degrees of freedom of the clock and one of the register.\\[5pt]
For the sake of definiteness we concentrate our attention, to start with, on the clocked implementation of the $C-NOT$ logical operator
\begin{equation}
	CNOT(j,j+1)=C-NOT\otimes \ketbra{j+1}{j}+ h.c.
\end{equation}
where
\begin{equation}
	C-NOT = 1+(\sigma_1(\nu)-1)\prod_{i=1}^\mu \frac{1+\sigma_3(i)}{2}.
\end{equation}
The C-NOT is a logical \emph{reversible} binary operator whose truth table is given in table \ref{tab:truthCNOT}\footnote{The C-NOT can also be seen as a \emph{reversible} version of the binary sum.}.
\begin{table*}[h]
	\centering
		\begin{tabular}{||c | c || c | c||} \hline
			$\sigma_3(1)$ & $\sigma_3(\nu)$ &$\sigma_3(1)'$ & $\sigma_3(\nu)'$ \\ \hline \hline
			-1 & -1 &-1 & -1  \\ \hline
			-1 & +1 &-1 & +1  \\ \hline
			+1 & -1 &+1 & +1  \\ \hline
			+1 & +1 &+1 & -1  \\ \hline\hline
		\end{tabular}
	\caption{The truth table of the C-NOT; $\sigma_3(1)$ and $\sigma_3(\nu)$ are the input values whereas $\sigma_3(1)'$ and $\sigma_3(\nu)'$ are the output values.}
	\label{tab:truthCNOT}
\end{table*}\\
The action of $CNOT(j,j+1)$ on the computational basis is ``Flip the  component $3$ of the $\nu$-th qubit iff  $\sigma_3(1)$  input qubit points in the  $+1$ direction, starting with the cursor in position $j$".
\begin{figure}[h!]
	\centering	\includegraphics[width=10cm]{Figure/Cnot.eps}
	\caption{$C^1NOT(j,j+5)$; we have used the graphical conventions of \fref{fig:Subroutines}.}
	\label{fig:Switch}
\end{figure}
\begin{figure}[!h]
	\centering	\includegraphics[width=10cm]{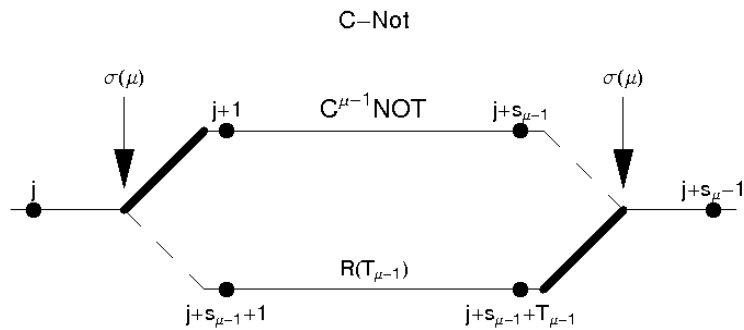}
	\caption{$C^{\mu-1}NOT \rightarrow C^\mu NOT$.}
	\label{fig:Switch1}
\end{figure}
The case  $\mu=1$ of one controlling qubit has been studied in \cite{feyn86}. It involves the introduction of $s_1=6$ cursor qubits $\underline{\tau}(j),\underline{\tau}(j+1),\ldots,\underline{\tau}(j+5)$ and, supposing that the controlling qubit is $\sigma_3(1)$  and the controlled one is $\sigma_3(\nu)$, of the \emph{local} Hamiltonian
\begin{eqnarray} \label{eq:hcnot}
H_{CNOT} & = & \sigma_-(1) \ketbra{j+1}{j} +  \nonumber \\ 
& + & \sigma_1(\nu) \ketbra{j+2}{j+1} + \nonumber \\ 
& + & \sigma_+(1) \ketbra{j+5}{j+2} + \nonumber \\
& + &  \sigma_+(1) \ketbra{j+3}{j} + \nonumber \\ 
& + & \underbrace{\ketbra{j+4}{j+3}}_{\mbox{delay line}} + \nonumber \\ 
& + &\sigma_-(1) \ketbra{j+5}{j+4}+ H.c.
\end{eqnarray}
A graphical representation of \eref{eq:hcnot} is given in \fref{fig:Switch}. \\
The term  R$(j+3,j+4) =  1 \ketbra{j+4}{j+3} $ in \eref{eq:hcnot}, represented as R$(1)$ in \fref{fig:Switch}, plays the role of a \emph{delay line} \index{delay line} of length $1$. It makes the length $T_1=4$  of the computation independent of the input word in the sense that an initial state of the form  $\mbox{\ket{\sigma_3(1)=+1}}\otimes \mbox{\ket{\sigma_3(\nu)=z_\nu}} \mbox{\ket{Q=j}}$ has the same \emph{number of logical successors}
\begin{eqnarray}
\ket{\sigma_3(1)=-1}\otimes \ket{\sigma_3(\nu)=z_\nu} \ket{Q=j+1} \nonumber \\
\ket{\sigma_3(1)=-1}\otimes \ket{\sigma_3(\nu)=-z_\nu} \ket{Q=j+2}  \nonumber \\
\ket{\sigma_3(1)=+1}\otimes \ket{\sigma_3(\nu)=-z_\nu} \ket{Q=j+5} 
\end{eqnarray}
as an initial state of the form $\ket{\sigma_3(1)=-1}\otimes \ket{\sigma_3(\nu)=z_\nu} \ket{Q=j}$, which has the successors
\begin{eqnarray}
\ket{\sigma_3(1)=+1}\otimes \ket{\sigma_3(\nu)=z_\nu} \ket{Q=j+3} \nonumber \\
\ket{\sigma_3(1)=+1}\otimes \ket{\sigma_3(\nu)=z_\nu} \ket{Q=j+4}  \nonumber \\
\ket{\sigma_3(1)=-1}\otimes \ket{\sigma_3(\nu)=z_\nu} \ket{Q=j+5}. 
\end{eqnarray}
\Fref{fig:Switch1} shows the iteration step leading from $C^{\mu-1}NOT$  to $C^{\mu}NOT$  through the introduction of the additional controlling qubit $\sigma_3(\mu)$. The length of each computation increases from the previous value  $T_{\mu-1}$ to
\begin{equation}
	T_\mu = T_{\mu-1} + 2 = 2 (\mu+1).
\end{equation}
The number of cursor qubits increases, because also of the delay line $R(T_{\mu-1})$, from the previous value  $s_{\mu - 1}$ to 
\begin{equation}
	s_\mu = 2 + s_{\mu-1} + T_{\mu-1} = (\mu+1)(\mu+2).
\end{equation}
The iteration step $C^{\mu-1}NOT \rightarrow C^{\mu}NOT$ is explicitly given by
\begin{eqnarray}
C^{\mu}NOT(j,j+s_\mu-1) & = & \sigma_-(\mu) \ketbra{j+1}{j} + C^{\mu-1}NOT(j+1,j+s_{\mu-1}) + \nonumber \\
& + & \sigma_+(\mu) \ketbra{j+s_\mu-1}{j+s_{\mu-1}} + \nonumber \\
& + & \sigma_+(\mu) \ketbra{j+s_{\mu-1}+1}{j} + \nonumber \\
& + & \sum_{k=1}^{T_{\mu-1}-1} \ketbra{j+s_{\mu-1}+k+1}{j+s_{\mu-1}+k}+ \nonumber \\
& + & \ketbra{j+s_\mu-1}{j+s_\mu-1+T_{\mu-1}}+ H.c.
\end{eqnarray}
\begin{observation}
For $\mu=2$ we implement the $CCNOT$ gate, or \emph{Toffoli gate}, which is a complete logical basis for the reversible boolean functions. The Feynman machine is thus able to compute \emph{at least} every function computable by a deterministic Turing machine (or, in other words, the class of functions reversibly computable by a Feynman machine is at least as wide as the class of partial recursive functions) by means of only three body interactions . In fact, every term in the Hamiltonian for the $CCNOT$ will involve two spins of the program line and one spin of the register.
The truth table of the CCNOT is reported in table \ref{tab:truthCCNOT}.
\begin{table*}[h]
	\centering
		\begin{tabular}{||c | c | c || c | c| c||} \hline
			$\sigma_3(1)$ & $\sigma_3(2)$ & $\sigma_3(\nu)$ & $\sigma_3(1)'$ & $\sigma_3(2)'$ & $\sigma_3(\nu)'$  \\ \hline \hline
			-1 & -1 & -1 & -1 & -1 & -1  \\ \hline
			-1 & -1 & +1 & -1 & -1 & +1  \\ \hline
			-1 & +1 & -1 & -1 & +1 & -1  \\ \hline
			-1 & +1 & +1 & -1 & +1 & +1  \\ \hline
			+1 & -1 & -1 & +1 & -1 & -1  \\ \hline
			+1 & -1 & +1 & +1 & -1 & +1  \\ \hline
			+1 & +1 & -1 & +1 & +1 & +1  \\ \hline
			+1 & +1 & +1 & +1 & +1 & -1  \\ \hline \hline
		\end{tabular}
	\caption{The truth table of the C-C-NOT.}
	\label{tab:truthCCNOT}
\end{table*}
\end{observation}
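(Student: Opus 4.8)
The plan is to verify the concrete $\mu=2$ construction, then chain it into a universality statement by leaning on two classical external results, and finally to read off the locality from the explicit Hamiltonian.

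\textbf{Step 1: the $\mu=2$ gate is $CCNOT$.} First I would specialise the iteration step $C^{\mu-1}NOT\to C^{\mu}NOT$ of \fref{fig:Switch1} to $\mu=2$, starting from the base Hamiltonian $H_{CNOT}$ of \eref{eq:hcnot}. By the clocking mechanism established earlier — Proposition \ref{th:number} together with the conservation law $[P,H]=0$ — finding the cursor at the terminal site of a segment leaves the register acted upon by the product of the local steps along that segment. I would therefore check, branch by branch on the eigenvalues of $\sigma_3(1)$ and $\sigma_3(2)$, that the single excitation starting at the entry site reaches one well-defined exit site and that en route the component $3$ of $\underline{\sigma}(\nu)$ is flipped precisely when $\sigma_3(1)=\sigma_3(2)=+1$ and is left untouched otherwise. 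This is exactly Table \ref{tab:truthCCNOT}, so the realised register operator is $CCNOT$. Moreover every addendum of this Hamiltonian is either a pure cursor hop $\ketbra{j'}{j}$ (a delay line: two cursor spins, no register spin) or a term $\sigma_\pm(k)\,\ketbra{j'}{j}$ or $\sigma_1(\nu)\,\ketbra{j'}{j}$ (one register/control spin tensored with a two-site cursor hop); hence each term is at most three-body, which already settles the locality half of the Observation.

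\textbf{Step 2: universality.} Next I would invoke the two standard ingredients. By Toffoli's theorem, $CCNOT$ together with ancillary qubits prepared in fixed constant states is a complete set for reversible Boolean functions — any relevant permutation of $\{-1,1\}^n$ is a finite product of $CCNOT$ gates, with $NOT$ recovered as a $CCNOT$ whose two controls are held at $+1$. By Bennett's theorem, any partial recursive function, equivalently any function computed by a deterministic Turing machine, admits a reversible embedding, the logical irreversibility being absorbed into a history register that is afterwards uncomputed. Composing the two, a given partial recursive $f$ is realised by a finite product $V_m\cdots V_1$ of $CCNOT$ gates.

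\textbf{Step 3: assembly on the Feynman machine.} Finally I would take the operators $U_j$ of the program line to be the segments realising $V_1,\dots,V_m$ in order, concatenating end to end the three-local program-line pieces of Step~1. The total Hamiltonian is the sum of the segment Hamiltonians, so three-locality is preserved; and by the $[P,H]=0$ theorem the register found at the terminal cursor site is $V_m\cdots V_1\ket{R(1)}$, i.e. $f$ applied reversibly to the input. This shows that the Feynman machine computes at least every partial recursive function using only interactions among two cursor spins and one register spin.

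\textbf{Main obstacle.} The delicate point is not the cited universality results but the branch-length synchronisation inside Step~1. The delay lines $R(T_{\mu-1})$ are inserted precisely so that the two computational branches (control spin $+1$ versus $-1$) consume the same number $T_\mu=2(\mu+1)$ of cursor steps and reconverge on a single exit site; only then do the states $H_{forward}^{k-1}\ket{\psi_1}$ of \eref{eq:costruzioneperes} form an orthonormal Peres basis of the expected dimension $s_\mu=(\mu+1)(\mu+2)$, so that the clocking theorem applies verbatim. I would therefore spend most of the effort checking, by induction on $\mu$, that this length matching is exactly maintained by the iteration step and that no spurious edge short-circuits the single-excitation $N_3=1$ dynamics.
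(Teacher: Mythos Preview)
Your proposal is correct and, in fact, supplies considerably more detail than the paper itself, which treats this as an \emph{observation} and offers no formal proof: the paper simply notes that the $\mu=2$ instance of the $C^{\mu}NOT$ iteration yields the Toffoli gate, cites its universality for reversible Boolean functions, and points to the explicit three-body form of the Hamiltonian terms. Your Steps~1--3 are exactly the implicit chain of reasoning behind the paper's remark, and your identification of branch-length synchronisation via the delay lines $R(T_{\mu-1})$ as the genuinely nontrivial ingredient is well placed.

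One small slip worth fixing: the Peres basis for a fixed computational-basis input has dimension $T_\mu+1=2\mu+3$ (the number of logical successors, including the initial state), not $s_\mu=(\mu+1)(\mu+2)$. The latter counts cursor \emph{sites}, most of which lie on branches not visited by that particular input; only one branch is traversed per control configuration, and the delay lines ensure all branches have the common length $T_\mu$. This does not affect your argument --- the clocking theorem still applies once the lengths match --- but the stated dimension should be corrected.
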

\begin{figure}[h]
	\centering
		\includegraphics{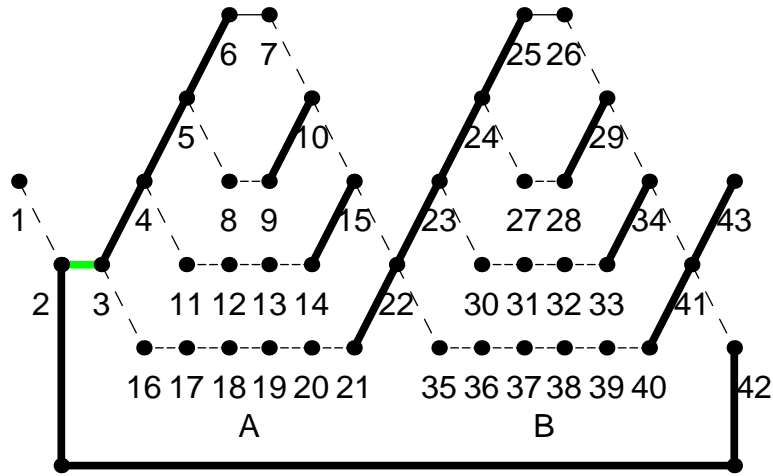}
	\caption{Do $BA$ twice on a register of $\mu=3$ qubits as described in terms of three body interactions. Delay lines are represented by the horizontal dashed lines.}
	\label{fig:grovermu3}
\end{figure}
The ``double diamond'' circuit of \fref{fig:grovermu3} is an example of a planar graph implementing, with the right setting of the input/output register, Grover's algorithm for $\mu=3$: the excitation, which in our setup will initially be located at site $1$, moves on the planar graph depending on the state of the input/output register. During the ``walk'', the excitation interacts with the register degrees of freedom. The implementation of an algorithm on a Feynman quantum computer thus leads to the definition of \emph{interacting quantum walks}.
\section{The SWITCH and the projective control}
Although the CCNOT is universal with respect to the class of reversible boolean functions, it is useful to generalize the kind of trajectory control used for the $CCNOT$ to the ``pure'' IF...THEN...ELSE construct. Consider the Hamiltonian
\begin{eqnarray} \label{eq:hswitch}
H_{SWITCH} & = & \sigma_- \ketbra{j+1}{j} + A \ketbra{j+2}{j+1} + \sigma_+ \ketbra{j+5}{j+2} + \nonumber \\
& + &  \sigma_+ \ketbra{j+3}{j} + B \ketbra{j+4}{j+3} + \sigma_+ \ketbra{j+5}{j+4}+h.c.
\end{eqnarray}
\begin{figure}[h!]
	\centering
		\includegraphics[width=10cm]{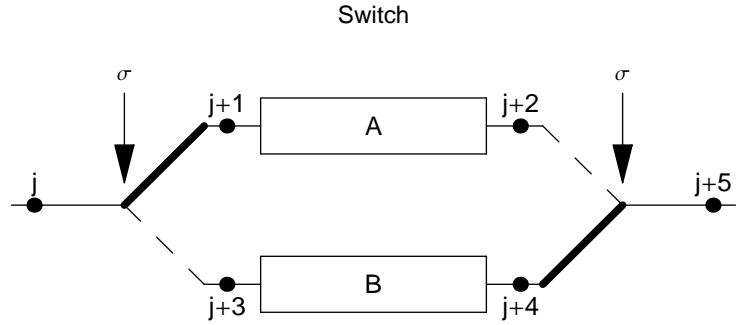}
	\caption{A SWITCH implementing the $IF(\sigma=+1)\ THEN\  A\  ELSE\  B$ instruction of Hamiltonian \eref{eq:hswitch}. $A$ and $B$ are primitive acting on the register. It is required that $A$ and $B$ do not modify the state of the controlling qubit $\sigma$.}
\end{figure}
It is worth mentioning that the same control on the trajectory of the walker can be implemented by means of  a purely projective mechanism; consider the following Hamiltonian
\begin{eqnarray}
H_{SWITCH} & = & \left ( \frac{1+\sigma_3}{2}\right ) \ \ketbra{j+1}{j}+  A \ketbra{j+2}{j+1} + \nonumber \\
& + & \left( \frac{1+ \sigma_3}{2} \right )  \ketbra{j+5}{j+2} + \nonumber \\
& + &  \left (\frac{1- \sigma_3}{2} \right )  \ketbra{j+3}{j} +
 B \ketbra{j+4}{j+3} + \nonumber \\
& + & \left (\frac{1-\sigma_3}{2}\right )  \ketbra{j+5}{j+4} + h.c.
\end{eqnarray}
Although the control of the trajectory of the cursor traveling along the program line is the same as the one of \eref{eq:hswitch}, the state of the controlling qubit $\sigma$ is left unchanged all through the SWITCH. This projective control mechanism has been introduced in \cite{defa04} and makes it easier to track different computational paths; for example, to understand what happens on the upper branch of a switch it suffices to consider only the projection on the $\sigma_3=+1$ subspace of the state space of the machine.
%
%
%
%
\section*{Summary}
Throughout this section we have used a top-down approach to the implementation of Grover's algorithm on a Feynman machine: we started from the computational primitives $A$ and $B$ seen as black boxes and ended up with the same primitives decomposed in three body interactions. Moreover, we have shown that the iteration of quantum subroutines can be implemented by means of a \emph{linear} amount of space resources. The possibility of using the same functional block many times amply justifies this linear space, an therefore time, overhead.\\
The topology of a Feynman machine can then be extended from a linear graph to a planar graph; at the same time, the interaction between the register and the cursor can be exploited to make the clocking excitation visit different computational paths; the possibility of having the controlling qubit in a superposition of states allows for the different computational paths to be explored simultaneously. In this framework,  the meaning of \emph{quantum parallelism} becomes clear: different physical computational trajectories are visited simultaneously.\\
The algorithmic construction of the Peres basis is directly extensible to this general form of graph. This allows for the state of the system to be described by \eref{eq:psit2}.
\chapter{Quantum timing and synchronization problems} \label{chap:quantumtiming}
The interacting quantum walk presented so far implements a clocking mechanism. In fact  ``\underline{If} at some later time the final site $s$ is found to be in the  $\ket{C(s)}$  state then all the computational primitives  $U_{s-1}\cdot \dots \cdot U_2 \cdot U_1$  have been administered to the register in the correct order'' \cite{feyn86}. We have  shown in chapter \ref{chap:grover} that the same kind of interaction can be exploited to control the motion of the clocking excitation on a planar graph. However, the quantum nature of the cursor imposes limitations on our ability to know, without preforming a measurement, whether the computation has finished. In this chapter we study the motion of the cursor and make quantitative the following qualitative assertions:
\begin{enumerate}
\item  at no instant of time the probability $| c(t,s;s) |^2$ is larger than $const \cdot s^{-\frac{2}{3}}$ \cite{apolloni02} ; \label{R:i}
\item the cursor keeps bouncing back and forth between positions 1 and $s$, thus in effect making the above upper bound attainable only at selected instants of time.\label{R:ii}
\end{enumerate}
In other words, by scaling down the clocking mechanism of the  computing device to the quantum regime two quantum phenomena become relevant: the spreading of the probability distribution of the excitation (or \emph{pointer}) along the program lines, and the scattering of the probability amplitude at the two endpoints of the physical space allowed for its motion.\\
We begin this chapter by discussing the \emph{timing problems} \ref{R:i} and \ref{R:ii}. We then propose a measurement scheme which, as proved in \cite{apolloni02}, makes the upper bound on the \emph{probability cost} \index{probability cost} of the implementation of an algorithm on a Feynman machine less severe.\\
We conclude the chapter by showing that the results obtained for the linear chain can be extended to a more general class of planar graphs, such as the one of \sref{chap:grover}, by means of synchronizing delay lines. 
\begin{figure}[h]
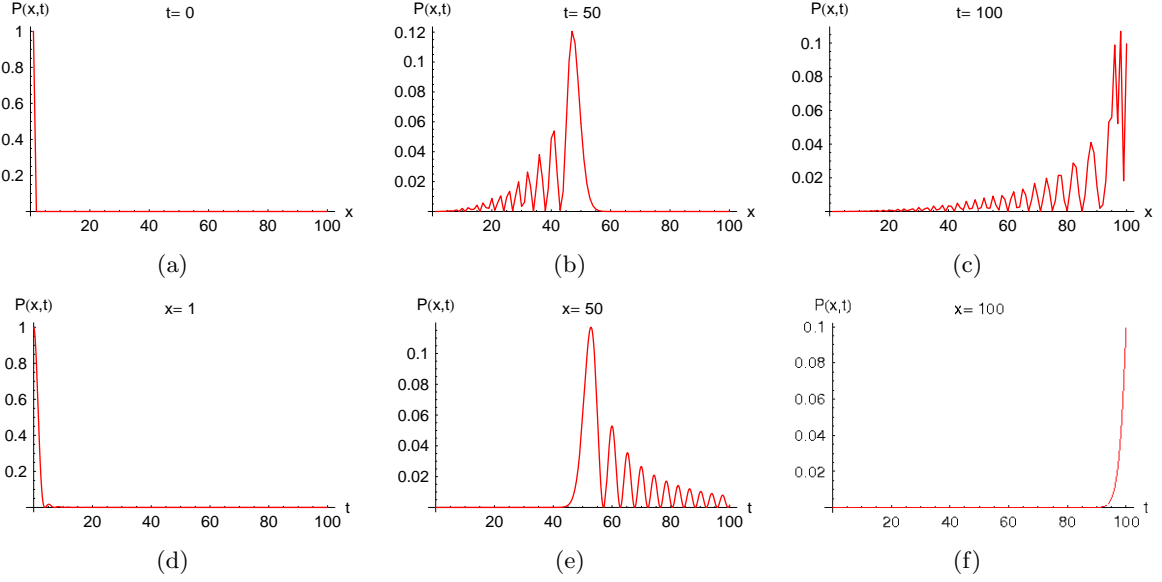
\label{fig:figPxt1}
	\centering	
	\subfigure[]{\includegraphics[width=145pt]{Figure/figPxt0.eps}\label{fig:Pxt1a}} 
	\subfigure[]{\includegraphics[width=145pt]{Figure/figPxt50.eps}\label{fig:Pxt1b}}
	\subfigure[]{\includegraphics[width=145pt]{Figure/figPxt100.eps}\label{fig:Pxt1c}}
	\subfigure[]{\includegraphics[width=145pt]{Figure/figPx1.eps}\label{fig:Pxt1d}} 
	\subfigure[]{\includegraphics[width=145pt]{Figure/figPx50.eps}\label{fig:Pxt1e}}
	\subfigure[]{\includegraphics[width=145pt]{Figure/figPx100.eps}\label{fig:Pxt1f}}
	\caption{We take a linear chain of length $s=100$. Figures (a) (b) and (c) represent the probability distribution for the position observable $Q$ at different times; the fast spreading of the wave packet is confirmed by the; \ref{fig:Pxt1a} \ref{fig:Pxt1b} \ref{fig:Pxt1c} represent the probability of finding the clocking excitation at sites $1,50$ and $100$ respectively. The reader should pay attention to the vertical axes scale.}
\end{figure}
\section{Motion of the cursor}
The eigenvalue problem for the Hamiltonian
\begin{equation} \label{eq:hinteragente}
	H = -\frac{\lambda}{2}\sum_{x=1}^{s-1} U_x \ \ketbra{x+1}{x}+h.c.,
\end{equation}
corresponding to a \emph{sequential program line} of length $s$, is solved by  the following Ansatz for the eigenstates:
\begin{equation}
	\ket{e}= \sum_{x=1}^{s} v(x) U_{x-1}U_x \ldots U_2 U_1 \ket{r} \otimes \ket{C(x)}, 
\end{equation}
where \ket{r} is a unit  vector in \Hr and \ket{C(x)} is defined as in \eref{eq:cj},  together with the boundary conditions $v(0)=v(s+1)=0$. Inserting this Ansatz, the eigenvalue problem
\begin{equation}
	H \ket{e} = e \ket{e}
\end{equation}
 becomes:
\begin{equation}
	e \ v(x) = -\frac{\lambda}{2}(v(x+1)+v(x-1)). 
\end{equation}
This leads in an obvious way \cite{gramss95} to the eigenvalues
\begin{equation}
	e_k = -\lambda \ \cos\left( \frac{k \pi}{s+1}\right),\ k=1,2,\ldots,s.
\end{equation}
The multiplicity of each eigenvalue is equal to $d=dim(\Hr)$. An orthonormal basis in the eigenspace belonging to the eigenvalue $e_k$ is given by:
\begin{equation} \label{eq:ekrj}
	\ket{e_k; r_j}= \sum_{x=1}^{s} v_k(x)   U_{x-1}\cdot \ldots \cdot U_1 \ket{r_j}\otimes \ket{C(x)} ,
\end{equation}
where $\ket{r_1},\ldots,\ket{r_d}$ is an orthonormal basis in \Hr, and
\begin{equation} \label{eq:autofunzioni}
	v_k(x)= \sqrt{\frac{2}{s+1}} \sin\left( \frac{k \pi}{s+1} x \right).
\end{equation}
We observe that, for $1 \leq x_0 \leq s$, it holds
\begin{equation}
	\sum_{j=1}^d \braket{e_k;r_j}{C(x_0)} =  \sqrt{\frac{2}{s+1}}\sin\left( \frac{k \pi x_0}{s+1}\right).
\end{equation}
For $x_0=1$ we obtain for the coefficients $c(t,k;s)$
\begin{equation} \label{eq:c}
	c(t,x;s)= \frac{2}{s+1} \sum_{k=1}^s \exp\left[ i \lambda t \cos\left( \frac{k \pi}{s+1}\right)\right] \sin\left( \frac{k \pi}{s+1}\right) \sin\left( \frac{k \pi x}{s+1} \right).
\end{equation}
We point out that, given an Hamiltonian of the form \eref{eq:hinteragente} and an initial state $\ket{\psi_1}=\ket{R(1)}\otimes \ket{C(1)}$, once defined the corresponding Peres' basis $\{\psi_k\}_{k=1}^{s}$, the dynamics of the system is always of the form
\begin{equation}
	\ket{\psi(t)} = \sum_{x=1}^s c(t,x;s) \ket{\psi_x},
\end{equation}
thus independent of the transformation applied to the register.\\
This property of the computational device allows for the timing problem to be discussed independently of the transformations carried on the register, as long as we act with unitary operators on the latter.\\
With the analysis of Reference \cite{apolloni02} it has been shown that:
\begin{theorem} \label{th:apo}
Given a linear chain of length $s$ evolving under the Hamiltonian $H$ of \eref{eq:hmatrix}, the probability $P(t,s;s)$ of finding, upon measurement, the clocking excitation at the final site $s$ is bounded by
\begin{equation}
	P(t,s;s) = |c(t,s;s)|^2 = O(s^{-\frac{2}{3}}).
\end{equation}
We refer to Reference \cite{apolloni02} for a proof of the statement.
\end{theorem}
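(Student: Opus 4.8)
The plan is to work directly from the explicit coefficient \eref{eq:c} and exploit the boundary reflection at the endpoint. Setting $N = s+1$ and $\theta_k = k\pi/N$, and using the identity $\sin(k\pi s/N) = \sin(k\pi - k\pi/N) = (-1)^{k+1}\sin(k\pi/N)$, the amplitude at $x=s$ collapses to
\[
c(t,s;s) = \frac{2}{N}\sum_{k=1}^{N-1}(-1)^{k+1}\,e^{\,i\lambda t\cos\theta_k}\sin^2\theta_k .
\]
The first key observation is that the alternating sign is itself a phase: since $N\theta_k = k\pi$ we have $(-1)^{k+1} = -e^{\,iN\theta_k}$, so each summand becomes a single oscillatory exponential $-e^{\,i\Phi_t(\theta_k)}\sin^2\theta_k$ with the \emph{combined} phase
\[
\Phi_t(\theta) = N\theta + \lambda t\cos\theta .
\]
Thus the endpoint amplitude is a Riemann sum, at nodes spaced $\pi/N$, of an oscillatory integrand whose frequency is controlled by $\Phi_t$, and the whole problem becomes one of estimating such a sum uniformly in $t$ and $s$.

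Next I would pass from the sum to the oscillatory integral
\[
c(t,s;s) \approx -\frac{2}{\pi}\int_0^\pi e^{\,i\Phi_t(\theta)}\sin^2\theta\,d\theta ,
\]
controlling the discretization error by Poisson summation (equivalently Euler--Maclaurin): the aliasing corrections are copies of the same integral with the frequency $N$ replaced by $(2m+1)N$, $m\neq 0$, i.e. contributions of strictly higher oscillation, which I expect to be of lower order uniformly in $t$. The problem then reduces to a uniform-in-$t$ estimate of one scalar oscillatory integral. Its stationary points solve $\Phi_t'(\theta) = N - \lambda t\sin\theta = 0$, i.e. $\sin\theta = N/(\lambda t)$, with $\Phi_t''(\theta) = -\lambda t\cos\theta$. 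Three regimes appear: for $\lambda t < N$ there is no stationary point and repeated integration by parts yields rapid decay; for $\lambda t > N$ there are two nondegenerate stationary points and ordinary stationary phase gives $O((\lambda t)^{-1/2}) = O(s^{-1/2})$; and at the transition $\lambda t \approx N$ the two stationary points coalesce at $\theta = \pi/2$, where $\Phi_t'' = 0$ but $\Phi_t''' = \lambda t\sin\theta \approx N \neq 0$.

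This coalescence is precisely the source of the exponent $2/3$: a degenerate stationary point of Airy type contributes on the order of $(\Phi_t''')^{-1/3} = O(N^{-1/3})$, which dominates the $O(s^{-1/2})$ contribution of the separated regime. Assembling the three cases, the supremum over $t$ of $|c(t,s;s)|$ is attained near the revival time $\lambda t \approx s+1$ and is $O(s^{-1/3})$, whence $P(t,s;s) = |c(t,s;s)|^2 = O(s^{-2/3})$ uniformly in $t$. The main obstacle I anticipate is making the stationary-phase analysis \emph{uniform across the coalescence}: ordinary stationary phase breaks down exactly where the bound is saturated, so one must invoke the van der Corput third-derivative estimate (or an explicit Airy-function comparison) and patch it to the two outer regimes without losing uniformity in $s$. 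A technically equivalent but conceptually cleaner route would be to recognise $c(t,s;s)$, via the method of images for the Dirichlet path graph, as an alternating series of Bessel functions $J_{\,s-1\pm 2mN}(\lambda t)$ and to invoke the classical turning-point bound $\sup_{z}|J_n(z)| = O(n^{-1/3})$; there the obstacle merely shifts to controlling the tail of the image series uniformly in $t$, which is again governed by the same Airy asymptotics at the turning point $z \approx n$.
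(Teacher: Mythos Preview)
The paper does not actually prove this theorem: the statement itself defers to Reference \cite{apolloni02}, and no argument is reproduced in the text (the explicit constant $8$ quoted later in \eref{E:probnotelo} is likewise cited, not derived). So there is no in-paper proof to compare against; I can only assess your sketch on its own merits and against the surrounding material.

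Your approach is sound and identifies the correct mechanism. The reduction of $c(t,s;s)$ to an oscillatory sum with combined phase $\Phi_t(\theta)=N\theta+\lambda t\cos\theta$ is exact, and the diagnosis that the worst case occurs at $\lambda t\approx N=s+1$, where the two stationary points coalesce at $\theta=\pi/2$ into an Airy-type caustic with $\Phi_t'''\approx N$, is precisely what produces the exponent $2/3$. This is consonant with the paper's own use of stationary phase in the proof of Theorem~\ref{th:funzionecaratteristica} and with the Bessel representation \eref{eq:bj}, from which the alternative route you sketch (method of images plus the classical turning-point bound $\sup_z|J_n(z)|=O(n^{-1/3})$) is in fact the most direct path to a clean proof.

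One point deserves more care in your primary route. Absorbing $(-1)^{k+1}$ into the phase as $e^{iN\theta_k}$ is algebraically correct, but it means the integrand $e^{i\Phi_t(\theta)}\sin^2\theta$ already undergoes roughly $N/2$ oscillations over $[0,\pi]$ while being sampled at only $N-1$ nodes. The sum-to-integral passage is therefore at the Nyquist edge, and the Poisson aliasing terms at shifted frequencies $(2m+1)N$ are not \emph{a priori} of lower order than the $m=0$ term: each is governed by the same Airy estimate near its own caustic $\lambda t\approx(2m+1)N$. This is not fatal --- the $m\neq0$ caustics sit at larger times and contribute $O(((2|m|+1)N)^{-1/3})$, which sums --- but it should be argued, not dismissed as ``strictly higher oscillation''. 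An alternative that avoids the issue altogether is to keep the alternating sign explicit and apply van der Corput's third-derivative lemma for \emph{exponential sums} directly to $\sum_k(-1)^k e^{i\lambda t\cos\theta_k}\sin^2\theta_k$, or, equivalently, to work from the outset with the image-Bessel series you mention, where the tail control is exactly the summability just described.
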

Theorem \ref{th:apo} makes it clear that, as soon as the clocking mechanism of a quantum computer is scaled down to the quantum regime, a new problem, to which we refer as the \emph{completion of computation problem} appears: we are never sure to find the computation completed, \emph{however carefully} we choose the instant of time at which to measure the clock. Not only: the clocking excitation goes back and forth along the program line, thus doing and undoing the computation. On the other side, the entanglement between the register and the cursor makes it possible to concentrate all the quantum uncertainty about the state of the computation on the position of the clocking particle \cite{margolus90}.\\[5pt]
Given the initial condition for the program line $\ket{Q=x_0}$, with $x_0 \in \{1,\ldots,s\}$, by taking the limit of a ``long'' computation ($s \rightarrow \infty$) we obtain
\begin{eqnarray}
	\lim_{s \rightarrow \infty} c(t,x;s)&  = & \nonumber \\
	& = &  \lim_{s \rightarrow \infty} \frac{2}{\pi} \frac{\pi}{s+1} \sum_{k=1}^s \exp\left[ i \lambda t \cos\left( \frac{k \pi}{s+1}\right)\right] \sin\left( \frac{k \pi x_0}{s+1}\right) \sin\left( \frac{k \pi x}{s+1} \right) = \nonumber \\
	& = & \frac{2}{\pi} \int_{0}^{\pi} \exp(i \lambda \cos y) \sin(y x_0) \sin(xy) dy = \nonumber \\
	& = &  i^{x-x_0} J_{x-x_0}(t) - i^{x+x_0} J_{x+x_0}(t);
\end{eqnarray}
$J_y(t)$ being Bessel functions of the first type \cite{watson62} and $\lambda=1$. In particular, for $x_0=1$ we get
\begin{equation} \label{eq:bj}
		 i^{x-1}(J_{x-1}(t)+J_{x+1}(t)) = i^{x-1} \frac{2x}{t} J_x(t).
\end{equation}
It is worth noticing that the solution in the $s \rightarrow +\infty$ limit  gives a very good approximation of the behavior of the system even in the case of a finite chain as long as we consider periods of time sufficiently short not to include reflections of the wave packet associated to the cursor on the boundaries (see \fref{fig:approxbessel}).
\begin{figure}[h]
	\centering
		\includegraphics[width=10cm]{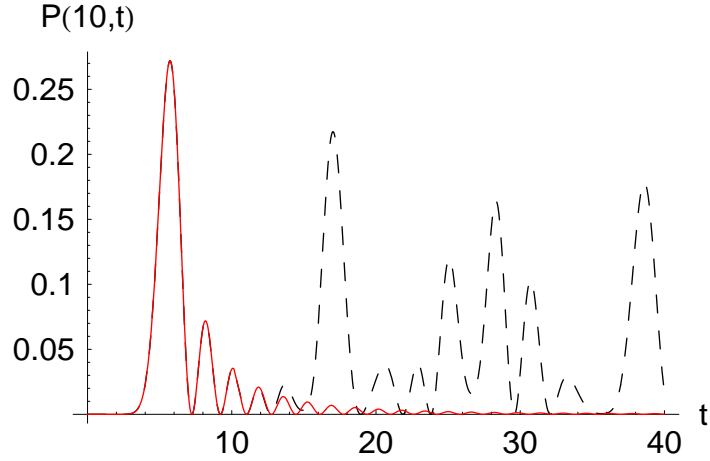}
	\caption{solid red line: $|c(t,10;20)|^2$; dashed blue $\left (\frac{20}{t} J_{10}(t)\right)^2$. The approximation via Bessel function is precise until the wave packet gets reflected on the boudary of the approximated finite chain.}
	\label{fig:approxbessel}
\end{figure}
\section{Relaxing the upper bound} \label{sec:relaxing}
The main purpose of this Section is to give examples of the behavior recalled in the assertions \ref{R:i} and \ref{R:ii} made at the beginning of this chapter.\\
This we do with the help  of the following Hamiltonian:
\begin{eqnarray}
\label{E:htelchain}
H& = &  \Bigl( \sum_{j=1}^{s-1} U_j \tau_+(j+1) \tau_-(j)+ \nonumber \\
 &   & + \rho_- \tau_+(s+1)\tau_-(s)+ \nonumber \\
 &   & +\sum_{j=s+1}^{s+\delta-1}\tau_+(j+1)\tau_-(j)+ h.c. \Bigr ).
\end{eqnarray}
$U_j,\ 1 \leq j \leq s-1$ being unitary operators acting on the register. With respect to the Hamiltonian (\ref{eq:hamfeynman}), we have introduced an additional {\it control} q-bit $\underline{\rho}=(\rho_1,\rho_2,\rho_3)$ in the term $\tau_+(s+1)\rho_- \tau_-(s)$; this is an example of a \emph{conditional jump} in the quantum walk performed by the cursor: it acts non trivially only in the eigenspace belonging to the eigenvalue +1 of $\rho_3$, \emph{enabling}  the transition  $| Q=s \rangle \rightarrow | Q = s+1 \rangle $. If this transition is enabled, then the cursor can visit the additional {\it telomeric} sites $s+1,\dots, s+\delta$, else it gets reflected back.\\
Figures \ref{F:cursor} and \ref{F:telo} give examples of the behaviour of the probability
\begin{equation}
P_{(s \leq Q)}(t)=P_{(s\leq Q \leq s+\delta)}(t)=\sum_{j=s}^{s+\delta} | \gamma(t,j) |^2
\label{E:Probchain}
\end{equation}
of finding the register in the state $A = U_{s-1}\cdot \dots \cdot U_2 \cdot U_1 \ket{R(1)}$, under two different initial conditions, which determine two different forms of the amplitudes~$\gamma$.\\
Figure \ref{F:cursor} corresponds to the initial condition \mbox{$\ket{\psi_1}$} = \mbox{$|\rho_3=-1\rangle \otimes$}\mbox{$| Q=1 \rangle $}: the motion of the cursor remains confined to the sites $1,\dots,s$, as it is $\gamma(t,k)=c(t,k;s)$ if $1 \leq k \leq s$, 0 otherwise. The probability $P_{(s \leq Q)}(t)$ of finding the computation completed satisfies in this case the inequality:\cite{apolloni02}
\begin{equation}
P_{(s \leq Q)}(t) \leq \frac{8.}{s^{\frac{2}{3}}}
\label{E:probnotelo}
\end{equation}
\begin{figure}[!h]
\hspace{2cm}
\begin{picture}(180,130)(0,0)
\centering
\includegraphics[width=8cm]{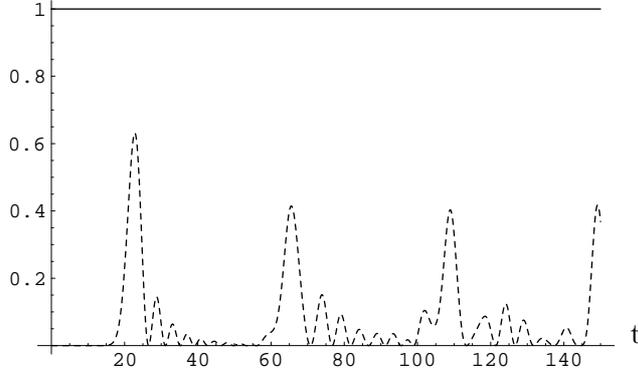}
\end{picture}
\begin{picture}(10,130)
\put(50,11){t}
\end{picture}
\caption{$\ket{\psi_1} = |\rho_3=-1 \rangle \otimes | Q=1 \rangle$; $s=20$.}
\label{F:cursor}
\end{figure}
Figure \ref{F:telo} corresponds to the initial condition: $\ket{\psi_1}=|\rho_3=+1\rangle \otimes$ \mbox{$| Q=1 \rangle$}, leading to $\gamma(t,k)=c(t,k;s+\delta)$ for $1 \leq k \leq s+\delta$. For $t$ just below $s+2\delta$ the probability $P_{(s \leq Q)}(t)$ of finding the computation completed is close to the much less severe upper bound:\cite{apolloni02}
\begin{equation}
P_{(s \leq Q \leq s+\delta)}(t) \leq 1-\frac{2}{\pi} \Biggl (  \arcsin \Bigl (  \frac{1}{1+2 \delta/s}\Bigr )- \Bigr (   \frac{1}{1+2\delta/s} \Bigr ) \sqrt{1- \Bigl (\frac{1}{1+2 \delta/s} \Bigr ) ^2 }\Biggr).
\label{E:upperbound}
\end{equation}
\begin{figure}[!h]
\hspace{2cm}
\begin{picture}(180,130)(0,0)
\centering
\includegraphics[width=8cm]{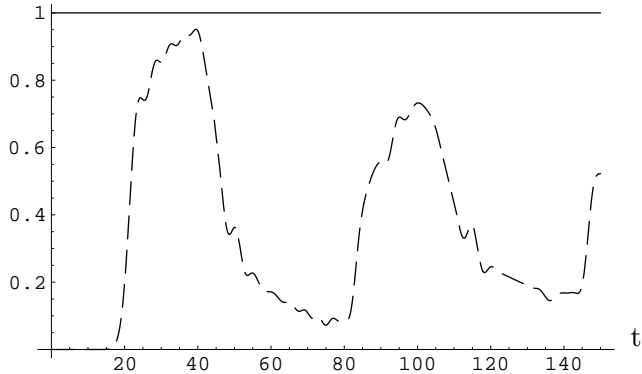}
\end{picture}
\begin{picture}(10,130)
\put(50,11){t}
\end{picture}
\caption{$\ket{\psi_1}= |\rho_3=+1\rangle \otimes |Q=1 \rangle$; $s=20$; $\delta =10$.}
\label{F:telo}
\end{figure}
\section{The Quantum END Instruction} \label{S:pipulse}
\begin{figure}[!hb]
\hspace{1.7cm}
\begin{picture}(180,130)(0,0)
\centering
\includegraphics[width=9cm]{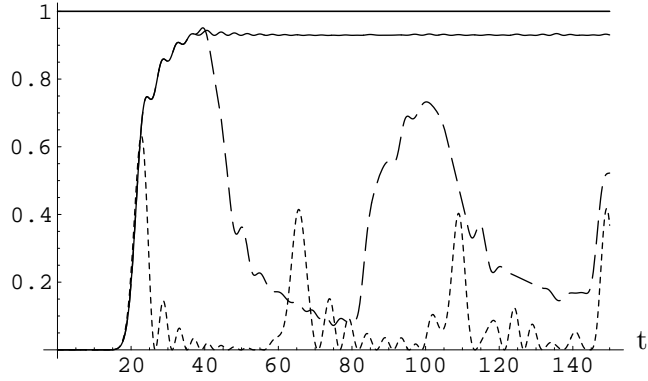}
\end{picture}
\begin{picture}(10,130)
\put(60,11){t}
\end{picture}
\caption{ The solid line represents the probability of finding the cursor in the telomeric chain using a $\pi$-pulse applied at time $t_0=s+2\delta$. The dashed lines correspond to Figs.\ref{F:cursor} and \ref{F:telo}.}
\label{F:pipulse}
\end{figure}
The abrupt collapse of $P_{(s \leq Q)}(t)$ at time $t\approx s+2\delta$, evident from Fig. \ref{F:telo}, corresponds to the following fact: traveling with average speed close to $1$, at time $t \approx s+ 2\delta$ the cursor ``returns down the active part of the program line'', thus, in effect, undoing the calculation.\\
Bringing the computation to an END, and storing the result is not completely trivial in the case examined here of a reversible quantum clocking mechanism: ``Surely a computer has eventually to be \underline{in interaction with the external world}, both for putting data in and for taking it out\cite{feyn86}''.\\
A simple model of such interaction is suggested by inspection of the Hamiltonian (\ref{E:htelchain}): starting from the initial condition $| \rho_3=+1 \rangle$, the transition \mbox{$| Q=s \rangle$} \mbox{$\rightarrow$} \mbox{$|Q=s+1 \rangle$} is enabled by the control term $\tau_+(s+1)\rho_- \tau_-(s)$ which, simultaneously, determines the transition $|\rho_3=+1 \rangle \rightarrow |\rho_3=-1 \rangle$.\\
The transition $|Q=s+1 \rangle \rightarrow |Q=s \rangle$, enabled by the hermitian conjugate term $\tau_+(s)\rho_+ \tau_-(s+1)$, will be therefore inhibited if, by external means, we enforce the transition $|\rho_3=-1 \rangle \rightarrow |\rho_3=+1 \rangle$ at a time, close to $t_0=s+2\delta$, when most of the probability mass is in the region $s,\dots, s+\delta$.\\
Figures \ref{F:pipulse} (where Figs. \ref{F:cursor} and \ref{F:telo} are also reproduced for comparison purpose) presents the effect of the addition to (\ref{E:htelchain}) of the time dependent perturbation
\begin{equation}
h(t)=B(t)\cdot \rho_1
\label{E:pipulse}
\end{equation}
where the ``magnetic field'' $B(t)$ is non vanishing only in a unit time interval around $t_0$, in which it takes the value $\pi$.\\
With a probability depending only on the ratio $\delta/s$ (see (\ref{E:upperbound})) between the lengths of the active part and the telomeric part of the program line, the $\pi$-pulse (\ref{E:pipulse}) definitively prevents the cursor from undoing the computation.\\
The idea of a $\pi$-{\it pulse trap} just presented works only if the control q-bit is initialised in the $|\rho_3=+1\rangle$ state. It is immediate to convince oneself that the following {\it double trap} Hamiltonian does not suffer from the above limitation:
\begin{eqnarray}
\label{E:htelfork}
H& = &h(t)+ \nonumber \\
 &   &-\Bigl( \sum_{j=1}^{s-1} U_j\  \tau_+(j+1) \tau_-(j)+\sum_{j=s+1}^{s+\delta-1}\tau_+(j+1)\tau_-(j)+\sum_{j=s+\delta+1}^{s+2\delta-1}\tau_+(j+1)\tau_-(j)+
  \nonumber \\
 &   &+ \rho_- \  \tau_+(s+1)\tau_-(s)+ \rho_+ \ \tau_+(s+\delta+1)  \tau_-(s) +h.c. \Bigr ).
\end{eqnarray}
With any initial condition for the control q-bit, under the action of the above Hamiltonian, the $|\rho_3=+1\rangle$ component of the state gets definitively trapped in the first telomeric region $\{s+1,\dots,s+\delta \}$, the $|\rho_3=-1 \rangle$ component in the second one $\{s+\delta+1,\dots,s+2\delta \}$.\\[4pt]
As a final remark of this section, we observe that, acting in effect as a Stern-Gerlach apparatus providing \emph{space} separation between two different \emph{spin} states, the term
\begin{equation}
switch=(\rho_- \tau_+(s+1)  \tau_-(s)+  \rho_+ \tau_+(s+\delta+1)  \tau_-(s))+ h.c.
\label{E:switch}
\end{equation}
can be used also to model the preparation (``putting the data in'') of a {\it register} qubit in a given spin state.
\section{Synchronizing the computational paths} \label{sec:synchro}
When defining the $C^\mu NOT$ circuit (\sref{sec:localhamiltonian}) we made massive use of \emph{delay lines}, that is edges of the clocking device ``during which''  the state of the register does not change. As we said in there, this allows for every computational path to be of the same length \footnote{The same principle is applied in electric circuits when two or more signals are required to arrive at the same time through paths of different length.}; in turn, this makes it possible  the interference between different computational paths corresponding to different conditions of the controlling qubits to happen.\\[5pt]
As an example of the role of synchronization in our interacting quantum walks, let us consider the following Hamiltonians. 
\begin{eqnarray} \label{eq:hamh1}
	H_1 & = & SWITCH(1,6,\sigma_3(\nu),1) = \nonumber \\
	  & = & \sigma_-(1)\ketbra{2}{1} + \nonumber \\
	  & + & \sigma_3(\nu) \ketbra{3}{2} + \nonumber \\
	  & + & \sigma_+(1)\ketbra{6}{3} + \nonumber \\
	  & + & \sigma_+(1)\ketbra{4}{1}+ \nonumber \\
	  & + & \ketbra{5}{4}+ \nonumber \\
	  & + & \sigma_-(1)\ketbra{6}{5} +H.c.,
\end{eqnarray}
corresponding to \fref{fig:synchro}, and
\begin{eqnarray} \label{eq:hamh2}
	H_2 & = & \sigma_-(1)\ketbra{2}{1} + \nonumber \\
	  & + & \sigma_3(\nu) \ketbra{3}{2} + \nonumber \\
	  & + & \sigma_+(1)\ketbra{5}{3} + \nonumber \\
	  & + & \sigma_+(1)\ketbra{4}{1}+ \nonumber \\
	  & + & \sigma_-(1)\ketbra{5}{4}+H.c.,
\end{eqnarray}
corresponding to \fref{fig:nonsynchro}.\\
The difference between the two Hamiltonians consists in the lack of the delay line $4 \rightarrow 5$ in \eref{eq:hamh2}.\\
\begin{figure}[t]
	\centering	\includegraphics[width=10cm]{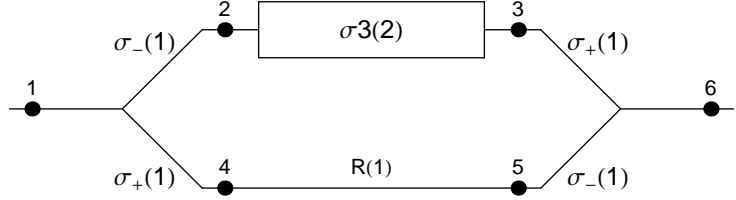}
	\caption{Synchronized circuit}
	\label{fig:synchro}
\end{figure}
\begin{figure}[t]
	\centering	\includegraphics[width=10cm]{Figure/nonsynchro.eps}
	\caption{Non synchronized circuit}
	\label{fig:nonsynchro}
\end{figure}
We set the initial state of the register to 
\begin{eqnarray}
	\ket{R(1)} = \ket{\sigma_1(1)=+1,\sigma_3(2)=-1}
\end{eqnarray}
and the initial condition for the cursor to \ket{C(1)}.\\
The initial state of the machine is
\begin{equation}
	\ket{\phi_1} = \ket{R(1)}\ket{C(1)}.
\end{equation}
We solve the Cauchy problems
\begin{equation}
	\begin{cases}
	\frac{d}{dt} \ket{\phi(t)} = H_1 \ket{\phi(t)} \\
	\ket{\phi(0)} = \ket{\phi_1} 
	\end{cases}
\end{equation}
\begin{equation}
	\begin{cases}
	\frac{d}{dt} \ket{\phi'(t)} = H_2 \ket{\phi'(t)} \\
	\ket{\phi'(0)} = \ket{\phi_1}
	\end{cases}
\end{equation}
\begin{figure}[t]
	\centering	\includegraphics[width=10cm]{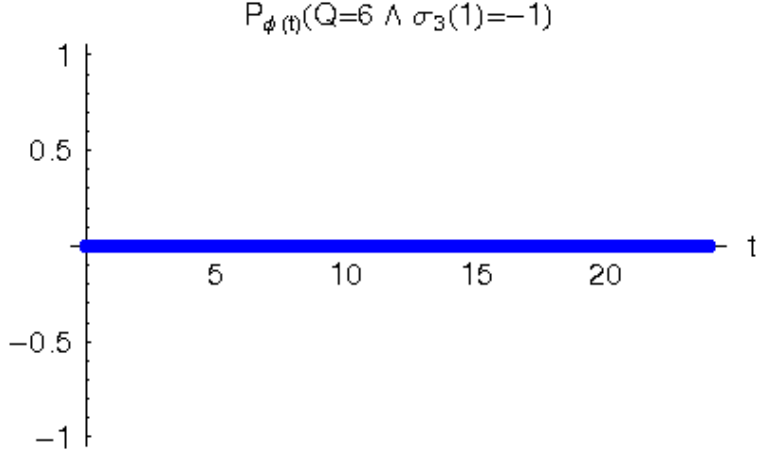}
	\caption{Synchronous case. The probability of finding, upon measurement, the cursor at position $6$ and the register in the state \ket{\sigma_1(1)=+1, \sigma_3(2)=-1}.}
	\label{fig:synchswitch}
\end{figure}
The choice of the initial state of the machine is such that both branches of the switch are visited with the same probability amplitude. In the circuit of \fref{fig:synchro}, along the upper branch the operator $\sigma_3(2)$ gives a $i \pi$ phase factor when acting on the register; the lower branch does nothing. Since the upper and lower computational paths have the same length, the clocking signal are synchronized and interfere destructively at site $6$ giving the result of \fref{fig:synchswitch}.\\
As \fref{fig:nsynchswitch} shows, the situation is completely different in the case of the circuit of \fref{fig:nonsynchro}. In fact, the computational path along the upper branch is one step longer than the computational path along the lower branch. Due to the lack of synchronization the interference pattern of the clocking signals at site $6$ is completely different from the one of the synchronized circuit.\\[5pt]
\begin{figure}[t]
	\centering	\includegraphics[width=10cm]{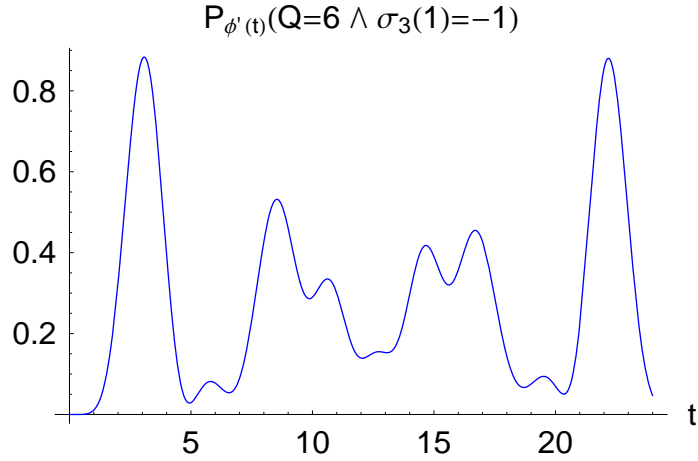}
	\caption{Asynchronous case. The probability of measuring $Q=6$ and the register in the state \mbox{\ket{\sigma_1(1)=1,\sigma_3(2)=-1}}.}
	\label{fig:nsynchswitch}
\end{figure}
The interference between different computational paths plays a key role in quantum algorithmics which, in fact, is essentially based on interference and on entanglement \cite{nielsen}. 
As anticipated in \sref{sec:localhamiltonian}, this architectural choice makes it possible to define the basis of logical successors of a given initial state of  the machine in the same way adopted in the case of the linear program line.  As a matter of fact, the Peres basis for the interacting quantum walk of Feynman is a generalization of the projection mechanism used in  \cite{childs02} to study the motion of a quantum walker on a graph $G_n$ such as the one of \fref{fig:graph}.
Similar results hold in the case, studied in \cite{defa03}, in which, because of {\it conditional jumps} in the program line (such as the ones needed in the iteration of {\it quantum subroutines} or the implementation of the $C^\mu NOT$), the cursor performs, in effect, a continuous time quantum walk \cite{childs02,farhi98} on a planar graph.
\begin{figure}[h]
	\centering		\includegraphics[width=10cm]{Figure/slicedswitch.eps}
	\caption{The same SWITCH of \fref{fig:synchswitch}. The vertical dashed lines divide the circuit in slices, or levels.}
	\label{fig:slicedswitch}
\end{figure}
\begin{figure}[h]
	\centering	\includegraphics[width=6cm]{Figure/graph.eps}
		\caption{The graph $G_4$.}
	\label{fig:graph}
\end{figure}
Indeed, if we traced out the register degrees of freedom we could take the basis
\begin{equation}
 \ket{col(j)} = \frac{1}{N_j} \sum_{a \in \ slice \ j} \ket{Q=a}
\end{equation}
where $N_j$ is the number of program line sites in the $j$-th slice; the vectors \ket{col(j)} are hence uniform superposition of the basis states of \Hc belonging to the $j$-th column. With this notation, the motion on the SWITCH circuit of \fref{fig:slicedswitch} of the clocking excitation is exactly the same as the one of the traveling excitation of \cite{childs02}.
\section*{Summary}
The qualitative statement ``the longer is the computation the smaller is the probability of getting the cursor in the END position'' has been made quantitative in \cite{apolloni02}: $P_{opt}(s)= O(s^{-\frac{3}{2}})$. By means of a telomeric chain of length $\delta$, that is extra space resources, the latter bound can be raised to $O(\sqrt{\delta/s})$. We proposed a measurement scheme which has the same probability of success; it exploits the possibility of controlling the trajectory of the clocking excitation by means of the state of a register qubit: we wait for most of the probability mass of the cursor to be in the telomeric region and than prevent the excitation to get out of the telomeric chain by disabling the transition $s \to s+1$.\\
In Feynman's scheme the functioning of a quantum algorithm gets a ``topological'' characterization: different computational paths are simultaneously visited, each with a certain amplitude depending on the initial state of the register; at a certain point (in space) the signals coming from alternative routes interfere with each other. To get the right interference pattern it is necessary that the clocking signals are synchronized with each other.\\
The synchronization of different computational paths, on the other hand, makes it possible to realize the projection mechanism which, in turn, allows for the motion of the cursor on planar graphs to be described as if the clocking agent were traveling on a linear chain.

\chapter{Speed, entropy and energy} \label{chap:speedandentropy}
In this chapter we study the random variable \emph{time of flight speed} of the clocking agent on a linear chain. A quantitative assessment on the spreading rate of the wave packet associated to the clocking excitation makes it possible to define a quasi-optimal preparation scheme and  post measurement strategy \cite{defa06a}. We show, moreover, that the intrinsic uncertainty on the position of the cursor induces decoherence on the state of the register and that this decoherence is of Lindblad type \cite{lindblad76}.\\[5pt]
Throughout this chapter we will consider Hamiltonians of the form
\begin{equation}\label{eq:hamiltoniana2}
	H = -\frac{\lambda}{2} \sum_{x=1}^{s-1} U_x \ketbra{C(x+1)}{C(x)} + U_x^{-1} \ketbra{C(x)}{C(x+1)},
\end{equation}
where only \ves{U}{N-1} are assigned by the algorithm we are interested in whereas $U_N,\ldots, U_{s-1}$ are to be assigned as a part of the description of the clocking mechanism. For instance, we have seen in \sref{sec:relaxing} the case in which $N-s=\delta$, and  $U_N,\ldots, U_{s-1}=I_r$, the identity in \Hr,  and verified the role of the cursor sites $N,\ldots,s$  as a storage mechanism of the desired output state \ket{R(N)}.\\
An alternative point of view was taken in some of the numerical examples of \cite{defa04}, motivated by Grover's algorithm: one may suppose all of the $U_x$ to coincide, in such a context, with Grover's $estimation \cdot oracle$ step $G$, and study the effect of applying $G$ more than the optimal number $N-1$ of times.\\
Most of our numerical examples will refer in fact to the following particular instance (Toy model):
\begin{equation} \label{eq:hamiltoniana4}
		H_T = -\frac{\lambda}{2} \sum_{x=1}^{s-1} e^{-i\frac{\alpha}{2} \sigma_2} \ketbra{C(x+1)}{C(x)} + e^{i\frac{\alpha}{2} \sigma_2}\ketbra{C(x)}{C(x+1)}.
\end{equation}
which, with the right choice of the parameter $\alpha$, models the search of the ``needle in the haystack'' of Grover algorithm \cite{grover96,grover01}.
\section{Speed of computation} \label{sec:speed}
In Chapter \ref{chap:fqc} we have seen that a Feynman machine initialized (at time $t=0$) in the state
\begin{equation} \label{eq:condini1}
 \ket{R(1)} \otimes \ket{C(1)} = \ket{M_1} 
\end{equation}
evolves, under the Hamiltonian \eref{eq:hamiltoniana2}, into
\begin{equation} \label{eq:evoluzione1}
	\ket{M_1(t)} = e^{-iHt} \ket{M_1}= \sum_{x=1}^s c(t,x;s)\ket{R(x)} \otimes \ket{C(x)}
\end{equation}
where the register states \ket{R(x)} are defined as in \eref{eq:Rj}.\\
The observable $Q/t$ acquires thus the meaning of number of primitives per unit time applied to the initial condition \ket{R(1)} in the time interval $(0,t)$. In order to study the behavior over long intervals of time ($t \rightarrow +\infty$) of this observable in the case of a long computation ($s \rightarrow +\infty$) it is expedient to study its characteristic function
\begin{equation} \label{eq:fcaratteristica}
	\phi_{s,t}(z) = \bra{M_1(t)} \exp \left(iz \frac{Q}{t}\right) \ket{M_1(t)} = \sum_{x=1}^s |c(t,x;s)|^2 \exp \left(iz \frac{x}{t} \right),
\end{equation}
namely the Fourier transform of its probability distribution.
\begin{theorem}\label{th:funzionecaratteristica} \footnote{The same result (see also \cite{apolloni02}) has been obtained by N. Konno \cite{konno06} in a different context: there the author investigates the propagation of an excitation through a homogeneous tree, whereas here we deal with a linear chain. The obvious explanation of the coincidence of the results in two such different contexts comes from the projection mechanism explained in \sref{sec:synchro}.}
\begin{equation} \label{eq:limitecaratteristica}
\lim_{t \to +\infty} \lim_{s \to \infty} \phi_{s,t}(z) = \int_0^1 \frac{ 4 v^2}{\pi \sqrt{1-v^2}}.
\end{equation} 
\end{theorem}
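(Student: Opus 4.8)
The plan is to evaluate the two limits in the order indicated, exploiting that the inner limit has essentially been computed already in \eref{eq:bj}. For fixed $t$ and initial site $x_0=1$ we have $\lim_{s\to\infty}c(t,x;s)=i^{x-1}\frac{2x}{t}J_x(t)$, hence $\lim_{s\to\infty}|c(t,x;s)|^2=\frac{4x^2}{t^2}J_x(t)^2=:p_t(x)$. The standard Bessel identity $\sum_{n=-\infty}^{\infty}n^2J_n(t)^2=t^2/2$, together with $J_{-n}=(-1)^nJ_n$, gives $\sum_{x=1}^{\infty}x^2J_x(t)^2=t^2/4$, so the $p_t(x)$ form a genuine probability distribution on the positive integers. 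Extending the finite sum in \eref{eq:fcaratteristica} by zero for $x>s$ and viewing it as a sequence of probability mass functions with total mass tending to $1$, Scheff\'e's lemma yields $\ell^1$-convergence and therefore
\[
\lim_{s\to\infty}\phi_{s,t}(z)=\sum_{x=1}^{\infty}\frac{4x^2}{t^2}J_x(t)^2\,\exp\!\left(iz\frac{x}{t}\right).
\]
The remaining task is the $t\to\infty$ limit of this characteristic function of $Q/t$, which I would read as a Riemann sum in the variable $v=x/t$ with mesh $\Delta v=1/t$, writing the $x$-th term as $4v^2e^{izv}\,[\,t\,J_x(t)^2\,]\,\Delta v$.

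The analytic heart is the large-order asymptotics of $J_x(t)$ in the oscillatory regime $x=vt$, $0<v<1$. Debye's uniform expansion gives
\[
J_x(t)\sim\sqrt{\frac{2}{\pi t\sqrt{1-v^2}}}\,\cos\Theta_{x,t},\qquad \Theta_{x,t}=\sqrt{t^2-x^2}-x\arccos\!\left(\frac{x}{t}\right)-\frac{\pi}{4},
\]
so that $t\,J_x(t)^2\sim\frac{2}{\pi\sqrt{1-v^2}}\cos^2\Theta_{x,t}$. Writing $\cos^2\Theta=\tfrac12(1+\cos2\Theta)$, the constant part replaces the effective weight by $\frac{1}{\pi\sqrt{1-v^2}}$, and the associated Riemann sum converges to $\int_0^1 e^{izv}\frac{4v^2}{\pi\sqrt{1-v^2}}\,dv$. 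This is exactly the characteristic function of the density $\frac{4v^2}{\pi\sqrt{1-v^2}}$ on $[0,1]$ displayed in the statement (one checks it is normalized by $v=\sin\theta$, the omitted factor $e^{izv}\,dv$ being understood).

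Making this rigorous requires controlling four regions. In the tail $x\ge(1+\varepsilon)t$ the order exceeds the argument and $J_x(t)$ decays superexponentially, so this part of the sum is negligible uniformly in $t$. In the turning-point (Airy) region $x\approx t$, where $v$ is near $1$ and the oscillatory expansion breaks down, I would use that the singularity $1/\sqrt{1-v^2}$ is integrable and that only $o(t)$ indices lie there, so its total contribution vanishes. Near $v=0$ the prefactor $v^2$ suppresses the contribution and the expansion is uniformly valid. The \textbf{main obstacle} lies in the bulk $\delta\le v\le1-\delta$: I must show that the oscillatory remainder $\sum_x 4v^2e^{izv}\frac{1}{\pi\sqrt{1-v^2}}\cos(2\Theta_{x,t})\,\Delta v$ vanishes as $t\to\infty$.

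Here a direct computation gives $\partial\Theta/\partial x=-\arccos(x/t)=-\arccos v$, so the phase increment $2(\Theta_{x+1,t}-\Theta_{x,t})\approx-2\arccos v$ stays bounded away from $0$ and from $2\pi\mathbb{Z}$ on the bulk. This non-resonance is precisely what makes the oscillatory sum a \emph{discrete stationary-phase / Riemann--Lebesgue} cancellation: summation by parts against the smooth, bounded weight $4v^2e^{izv}/(\pi\sqrt{1-v^2})$ bounds the oscillatory contribution by $O(1/t)$. Once this is established, the surviving $\tfrac12$-term produces the stated integral and finishes the proof. The delicate point throughout is the uniformity of Debye's expansion and of its error term across the bulk, so that the summation-by-parts estimate may be applied term by term and the contributions of the transition regions may be bounded independently of $t$.
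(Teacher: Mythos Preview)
Your argument is correct and reaches the same conclusion, but it is organized rather differently from the paper's proof. The paper keeps the \emph{integral representation}
\[
\lim_{s\to\infty} c(t,x;s)=\frac{2}{\pi}\int_0^\pi e^{it\cos p}\sin p\,\sin(xp)\,dp,
\]
replaces the sum over $x$ by an integral over $v=x/t$, and then applies stationary phase directly to the $p$-integral: writing $\sin(vtp)$ in exponential form, only $e^{it(\cos p+vp)}$ has stationary points $p_1=\arcsin v$ and $p_2=\pi-\arcsin v$; the modulus squared of the two Gaussian contributions gives the density $4v^2/(\pi\sqrt{1-v^2})$, while the $p_1$--$p_2$ cross term is eliminated by a second stationary-phase (Riemann--Lebesgue) argument in $v$.

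You instead start from the evaluated Bessel form $|c|^2=4x^2 t^{-2}J_x(t)^2$, invoke Scheff\'e to justify the $s\to\infty$ exchange, and use Debye's uniform expansion to analyze the Riemann sum. Since Debye's expansion \emph{is} the stationary-phase evaluation of the Bessel integral, the two routes are cousins; your $\cos(2\Theta)$ term is precisely the paper's cross term between $p_1$ and $p_2$, and your non-resonance condition $2\arccos v\notin 2\pi\mathbb{Z}$ plays the role of the paper's ``no stationary point in $v$''. What your approach buys is a cleaner bookkeeping of rigor: the $\ell^1$ convergence via Scheff\'e, the explicit treatment of the Airy and tail regions, and the summation-by-parts control of the oscillatory remainder are all spelled out, whereas the paper's argument is frankly formal on these points (the sum-to-integral replacement and the vanishing of the cross term are asserted rather than bounded). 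The paper's route, on the other hand, is shorter and stays closer to the spectral data of the model, avoiding the detour through named special-function asymptotics.
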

\begin{proof}
The large $s$ behavior is easily studied by inserting the explicit integral representation of the $s \rightarrow +\infty$ limit of  \eref{eq:c} into \eref{eq:fcaratteristica}
\begin{eqnarray}
\lim_{s \to +\infty} c(t,x;s) &  = & 
\frac{2}{\pi} \int_{0}^\pi e^{i t  \cos(p) } \sin(p) \sin(x p) dp.
\end{eqnarray}
We thus obtain
\begin{equation}
	\lim_{t \to \infty} \sum_{x=1}^\infty e^{i z \frac{x}{t}} \left | \frac{2}{\pi} \int_{0}^\pi e^{i t  \cos(p) } \sin(p) \sin(x p) \ dp \right |^2.
\end{equation}
The $t \rightarrow +\infty$ limit is  studied by substituting the sum over $v=x/t$, step $1/t$,  appearing in \eref{eq:fcaratteristica} with an integral 
\begin{eqnarray}  \label{eq:int01}
 \lim_{t \to +\infty} \int_0^1 e^{i z v}  \left | \frac{2 \sqrt{t} }{\pi} \int_{0}^\pi e^{i t  \cos(p)} \sin(p) \sin(v t p) \ dp \right |^2 dv.
\end{eqnarray}
We evaluate the leading contributions to \eref{eq:int01}  by a standard stationary phase argument.\\
First of all we observe that 
\begin{eqnarray}
	 e^{i t  \cos(p)} \sin(p) \sin(v t p) & = & e^{i t  \cos(p)} \left ( \frac{e^{i v t p}+ e^{-i v t p} }{2i} \right) \sin(p) = \nonumber \\
	 & = & \frac{ e^{i t ( \cos(p)+vp)}+ e^{i t ( \cos(p) - v p )}}{2i}\sin(p). 
\end{eqnarray}
Since the equation
\[
\frac{d}{dp} ( \cos(p) + vp) = 0
\]
admits two distinguished solutions
\begin{eqnarray}
	p_1 & = & arcsin(v)  \\
	p_2 & = & \pi - arcsin(v)
\end{eqnarray}
whereas the equation
\[
\frac{d}{dp} ( \cos(p) - vp )= 0
\]
has no solution for $ 0 \leq v \leq 1$, the contributions to the limit
\begin{equation}
	\lim_{t \to +\infty} \left ( \int_0^\pi \frac{ e^{i t ( \cos(p)+vp)}{2i}}\sin(p) dp + \int_0^\pi \frac{ e^{i t ( \cos(p) - v p )}}{2i}\sin(p) \right) dp
\end{equation}
come only from the first term; since these contributions come from the region around $p_1$ and $p_2$,  the following equality holds
\begin{eqnarray} \label{eq:lim02}
\lim_{t \to +\infty} \int_0^\pi  \frac{ e^{i t ( \cos(p)+vp)}}{2i} dp  
	=\lim_{t \to +\infty}\sum_{j \in \{1,2\}} e^{i t ( \cos(p_j)+ v p_j)} \int_{-\infty}^{+\infty}  e^{-\frac{1}{2}  \cos(p_j)(p-p_j)^2 } dp.
\end{eqnarray}
We recall that
\begin{equation}
	\lim_{\sigma \to 0} \left ( \frac{1}{\sigma \sqrt{2 \pi}}\  e^{-\frac{(p-p_j)^2}{2\sigma^2}} \right ) = \delta_{p_j}(p)
\end{equation}
where $\delta_{p_j}$ is the Dirac-$\delta$ function concentrated around $p_j$.\\
The right hand side of equation \eref{eq:lim02} is then equivalent to
\begin{equation}
	\lim_{t \to +\infty} \sum_{j \in \{1,2\}} e^{i t ( \cos(p_j)+ v p_j)} \sqrt{\frac{2 \pi}{i t  \cos(p_j)}}.
\end{equation}
By inserting the last expression in \eref{eq:int01} we get
\begin{eqnarray}
\int_0^1 e^{izv} \frac{4 (v^2}{\pi \sqrt{1-v^2}}+ \lim_{t \to +\infty} \int_0^1 e^{izv} \frac{\cos((\cos(p_1)-\cos(p_2))+v(p_1-p_2)))}{\sqrt{\cos(p1) \cos(p_2)}} dv.
\end{eqnarray}
Since the equation
\begin{equation}
	\frac{d}{dv}(\cos(p_1)-\cos(p_2))+v(p_1-p_2) = 0
\end{equation}
has no solution for $0 \leq v \leq 1$, the second term vanishes and we get the result.
\end{proof}
As convergence in the sense of characteric functions implies convergence in the sense of cumulative distribution functions (\emph{convergence in law}),  we conclude that a ``long''  computation starting from the initial condition \eref{eq:condini1} proceeds ``in the long run'' at a rate of $V(M_1)$ steps per unit time (the unit of time having been set so that $\lambda=2$ ),  $V(M_1)$  being the random variable defined by having as its characteristic function the right hand side of  \eref{eq:limitecaratteristica}; equivalently stated it has probability density 
\begin{equation} \label{eq:densitav}
	f_{V(M_1)}(v)= I_{(0,1)}(v)  \frac{4 v^2}{\pi \sqrt{1-v^2}}
\end{equation}
Here and in what follows we denote by $I_{(a,b)}$the indicator function of an interval $(a,b)$:
\begin{equation}
	I_{(a,b)}(x)= \left \{ 
	\begin{array}{c r}
	1 & \mbox{if $x \in (a,b)$}\\
	0 & \mbox{otherwise.} 
	\end{array}\right.
\end{equation}
The mean value
\begin{equation} \label{eq:EM1}
	E(V(M_1))= \int_0^1 v\; f_{V(M_1)}(v) dv = \frac{8}{3 \pi}
\end{equation}
and the variance
\begin{equation} \label{eq:varM1}
	var(V(M_1))=E\left((V(M_1))^2\right)-E\left(V(M_1)\right)^2=\frac{3}{4}- \left(\frac{8}{3 \pi} \right)^2
\end{equation}
are then easy to compute from \eref{eq:densitav}.\\[5pt]
It is worth extending our analysis to more general initial conditions; for any positive integer $x_0$, a state such as 
\begin{equation} \label{eq:condini2}
	\ket{M_{x_0}}=\ket{R(x_0)} \otimes \ket{C(x_0)}
\end{equation}
having at a certain instant the cursor in $x_0$, evolves under the action of the Hamiltonian \eref{eq:hamiltoniana2} as
\begin{equation}
	\ket{M_{x_0}(t)} = e^{-i H t} \ket{M_(x_0)} = \sum_{x=1}^s c_{x_0}(t,x,s) \ket{R(x)} \otimes \ket{C(x)}.
\end{equation}
With the same techniques used in the proof of Theorem \ref{th:funzionecaratteristica} we can prove that
\begin{theorem}
The random variable \emph{time of flight} speed $V(M_{x_0})$ has a characteristic function 
\begin{equation} \label{eq:funzcarx0}
\lim_{t \to _\infty} \lim_{s \to \infty} \bra{M_{x_0}(t)}e^{i z \frac{Q}{t}} \ket{M_{x_0}(t)} = \frac{4}{\pi} \frac{\sin(x_0 \ arcsin(v)^2)^2}{\sqrt{1-v^2}}.
\end{equation}
\end{theorem}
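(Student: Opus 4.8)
The plan is to mirror the proof of Theorem~\ref{th:funzionecaratteristica} step by step, the only structural change being the replacement of the amplitude factor $\sin(p)$ by $\sin(x_0 p)$ coming from the projection of the new initial cursor state $\ket{C(x_0)}$ onto the eigenbasis \eref{eq:ekrj}. First I would record the large-$s$ form of the relevant coefficients: expanding $\ket{M_{x_0}}$ on the eigenfunctions \eref{eq:autofunzioni} and evolving produces $c_{x_0}(t,x;s)=\frac{2}{s+1}\sum_{k}\exp[i\lambda t\cos(\frac{k\pi}{s+1})]\sin(\frac{k\pi x_0}{s+1})\sin(\frac{k\pi x}{s+1})$, so that, exactly as in \eref{eq:c}, $\lim_{s\to\infty}c_{x_0}(t,x;s)=\frac{2}{\pi}\int_0^\pi e^{it\cos p}\sin(x_0 p)\sin(xp)\,dp$. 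Substituting this into the characteristic function $\sum_x|c_{x_0}(t,x;s)|^2 e^{izx/t}$ and passing from the sum over $x$ to an integral over $v=x/t$ of step $1/t$ reduces the problem to the analogue of \eref{eq:int01}, now with $\sin(p)$ replaced by $\sin(x_0 p)$.

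Next I would run the stationary-phase argument on the inner $p$-integral exactly as in \eref{eq:lim02}. Writing $\sin(x_0 p)\sin(vtp)$ as a sum of two exponentials, the phase $\cos p + vp$ retains its two stationary points $p_1=\arcsin v$ and $p_2=\pi-\arcsin v$, while $\cos p - vp$ still has none on $[0,\pi]$ for $0\le v\le 1$; hence only the first exponential contributes at leading order. The single new computation is the value of the slowly varying amplitude $\sin(x_0 p)$ at the two critical points. Since $x_0$ is a positive integer, $\sin(x_0 p_2)=\sin(x_0\pi - x_0\arcsin v)=-(-1)^{x_0}\sin(x_0\arcsin v)$, so that $\sin(x_0 p_1)^2=\sin(x_0 p_2)^2=\sin(x_0\arcsin v)^2$. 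For the sanity check $x_0=1$ one has $\sin(\arcsin v)=v$ and the density $\frac{4v^2}{\pi\sqrt{1-v^2}}$ of \eref{eq:densitav} is recovered.

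Taking the modulus squared of the two-point stationary-phase approximation then splits into diagonal and cross terms. The diagonal terms, each carrying the universal Gaussian modulus $\frac{2\pi}{t\sqrt{1-v^2}}$, contribute $\sin(x_0 p_1)^2+\sin(x_0 p_2)^2=2\sin(x_0\arcsin v)^2$; combining this with the prefactor $\frac{4t}{\pi^2}$ from $(\frac{2\sqrt t}{\pi})^2$ and the factor $\tfrac14$ from $|1/(2i)|^2$, the $t$'s cancel and one is left precisely with the density $\frac{4}{\pi}\frac{\sin(x_0\arcsin v)^2}{\sqrt{1-v^2}}$ on $(0,1)$. The cross term carries the oscillatory phase $(\cos p_1-\cos p_2)+v(p_1-p_2)$, which is the same function of $v$ as in Theorem~\ref{th:funzionecaratteristica} and has no stationary point on $(0,1)$; by the identical no-stationary-$v$-point argument it integrates to zero against $e^{izv}$ in the $t\to\infty$ limit, establishing \eref{eq:funzcarx0}.

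The step I expect to be the main obstacle is the honest justification that the cross term genuinely vanishes and that the two-point stationary-phase expansion is uniform enough in $v$ to survive the subsequent $v$-integration — precisely the point treated somewhat heuristically (via the Dirac-$\delta$ device of \eref{eq:lim02}) in the $x_0=1$ case. The integrality of $x_0$ is what keeps the two stationary contributions equal in magnitude; without it the amplitudes at $p_1$ and $p_2$ would differ and the clean closed form would be lost.
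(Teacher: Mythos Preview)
Your proposal is correct and follows exactly the route the paper indicates (``with the same techniques used in the proof of Theorem~\ref{th:funzionecaratteristica}''): replace the slowly varying amplitude $\sin p$ by $\sin(x_0 p)$, keep the same stationary-phase analysis with critical points $p_1=\arcsin v$ and $p_2=\pi-\arcsin v$, and use the integrality of $x_0$ to get $\sin^2(x_0 p_1)=\sin^2(x_0 p_2)=\sin^2(x_0\arcsin v)$. Your honest flagging of the uniformity issue in the stationary-phase step and the vanishing of the cross term matches the level of rigor of the paper's own argument for the $x_0=1$ case.
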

The cumulative distribution function of  $V(M_{x_0})$ is consequently
\begin{eqnarray}
F_{V(M_{x_0})}(v) & \equiv & Prob(V(M_{x_0}) \leq v) =  \\
& = & I_{(0,1)}(v)\left( \frac{2 \arcsin(v)}{\pi} - \frac{\sin(2 x_0 \arcsin(v))}{\pi x_0}\right)+I_{(1,+\infty)}(x) \nonumber
\end{eqnarray}
corresponding to an expectation value
\begin{equation} \label{eq:EMx0}
	E(V(M_{x_0}))=\frac{8}{4\pi-\pi/x_0^2}.
\end{equation}
Comparison between \eref{eq:EM1} and \eref{eq:EMx0} shows the effect of a measurement of $Q$. If, at a given $t$, $Q$ is measured and the result $x_0$ is found, then the state \eref{eq:evoluzione1}, into which the initial condition \eref{eq:condini1} has evolved, collapses into the state \eref{eq:condini2}. From this moment on the computation proceeds at the mean rate \eref{eq:EMx0}: for large values of $t$, reading the \emph{clock} is likely to reduce the speed of further computation by a factor $3/4$ (without, because of \eref{eq:ekrj}, altering its correctness).  
%
%
%

\section{Entropy}
Motivated by the experience gained under the particular initial conditions \eref{eq:condini1} and \eref{eq:condini2} we define, for any (unentangled) initial condition of the form (for fixed $\epsilon \geq 1$)
\begin{equation} \label{eq:condini3}
	\ket{R; \psi_0}= \ket{R} \otimes\sum_{x=1}^{\epsilon} \psi_0(x) \ket{C(x)},
\end{equation}
the ``time-of-flight speed'' \cite{feyn65} of computation in the state $\psi_0$ as the random variable $V(\psi_0)$ having characteristic function
\begin{equation} \label{eq:fcaratteristica2}
	\phi_{V(\psi_0)}(z) = \lim_{t \to +\infty} \lim_{s \to +\infty} \bra{R; \psi_0} e^{itH}\exp \left(iz \frac{Q}{t}\right) e^{-itH}\ket{R; \psi_0}.
\end{equation}
The above limit is easily shown to exist by the techniques outlined in the previous section; it corresponds to the probability density
\begin{equation} \label{eq:densitapsi0}
	f_{V(\psi_0)}(v)=I_{(0,1)}(v) \frac{|\Psi(\arcsin(v))|^2+|\Psi(\pi-\arcsin(v))|^2}{\sqrt{1-v^2}}
\end{equation}
where
\begin{equation}
	\Psi(p)=\sqrt{\frac{2}{\pi}} \sum_{x=1}^{\epsilon} \sin(px) \psi_0(x)
\end{equation}
is the sine transform of the initial state of the cursor.\\
The observable $Q$ retains in this context the meaning of relational time \cite{gambini04} in the sense that,  \underline{given} that at any \emph{parameter time} $t$ the cursor is found at $x$, it is \underline{then} certain that the register is found in the state $U_{x-1}\cdot \ldots \cdot U_2 \cdot U_1 \ket{R}$.\\
In reading the output at any time $t$, namely in  the measurement of any, however carefully chosen,  observable of the register, there is an intrinsic uncertainty corresponding to the uncertainty about how far the computation has proceeded. The fact that $Q/t$ has a non trivial limit in law means that the leading term of the variance of $Q$ is proportional to $t^2$ and therefore that the uncertainty increases with $t$. This section is devoted to the examination of an example in which the notion of ``the most careful choice'' of the observable to read on the register can be made precise and shown to be pertinent to the algorithm considered.\\
We consider for the moment the initial condition \ket{M_1} given in \eref{eq:condini1} and its  time evolution \ket{M_1(t)} described in \eref{eq:evoluzione1}. More general initial conditions of the form  \eref{eq:condini3} will be examined in the next section.\\
Call
\begin{equation}
	\rho_m(t)=\ketbra{M_1(t)}{M_1(t)}
\end{equation}
the density matrix of the machine at time $t$.\\
By taking the partial trace $Tr_{\Hc}(\rho_m(t))$ with respect to the cursor degrees of freedom, we get the density matrix $\rho_r(t)$ of the register:
\begin{equation} \label{eq:matricereg}
	\rho_r(t)=\sum_{x=1}^s |c(t,x;s)|^2 \ketbra{R(x)}{R(x)}.
\end{equation}
Call $\lambda_j(t)$ the positive eigenvalues of $\rho_r(t)$  and \ket{b_j(t)} the corresponding eigenstates. A simple computation, amounting to the Schmidt decomposition \cite{peres93} of the state \eref{eq:evoluzione1}, shows, then, that the density matrix of the cursor is given by
\begin{equation} \label{eq:matricecur}
	\rho_c(t)= \sum_j \lambda_j(t) \ketbra{d_j(t)}{d_j(t)}
\end{equation}
where
\begin{equation}
	\ket{d_j(t)}= \frac{1}{\sqrt{\lambda_j(t)}} \sum_{x=1}^s c(t,x;s) \braket{b_j(t)}{R(x)}\> \ket{C(x)}.
\end{equation}
Because of \eref{eq:matricecur} and of the orthonormality of the states \ket{d_j(t)}, the von Neumann entropy of the register and also of the cursor is then given by
\begin{equation}
	S(\rho_c(t))=- \sum_j \lambda_j(t) \ln \lambda_j(t)= S(\rho_r(t)).
\end{equation}
We observe that, as \eref{eq:matricereg} shows, the von Neumann entropy of each subsystem does depend on the algorithm being performed. It is, indeed, \underline{only} under the hypothesis, nowhere made above, that the states  \ket{R(x)} are orthonormal that \eref{eq:matricereg} is the spectral decomposition of $\rho_r(t)$  (the von Neumann entropy becoming in this case equal to the Shannon entropy of the distribution of $Q$).\\
We focus our attention, in what follows, on our Toy model \eref{eq:hamiltoniana4}, in which the register is a single spin $1/2$ system. We indicate by $\underline{e_1},\underline{e_2},\underline{e_3}$ the versors of the three coordinate axes to which the components $\underline{\sigma}=(\sigma_1,\sigma_2,\sigma_3)$ of such a spin are referred.\\
In the basis \ket{\sigma_3=\pm 1}, the density operator $\rho_r(t)$ will be represented by the matrix
\begin{equation} \label{eq:rhor}
\rho_r(t)=\frac{1}{2} \left (
\begin{array}[pos]{c c}
1+s_3(t) & s_1(t)-i\>s_2(t)\\
s_1(t) + i\>s_2(t)  & 1-s_3(t)
\end{array}
\right )
\end{equation}
where
\begin{equation}
s_j(t)=Tr \left ( \rho_r(t) \cdot \sigma_j \right ),\;j=1,2,3.
\end{equation}
Equivalently stated, the \emph{Bloch representative} of the state $\rho_r(t)$ is given by the three-dimensional real vector
\begin{equation}
\underline{s}(t) = \sum_{x=1}^s  \left | c(t,x;s)  \right | ^2 \bra{R(x)} \underline{\sigma} \ket{R(x)}.
\end{equation}
We shall assume, in what follows, that the initial state of the cursor is \ket{C(1)} and that the initial state of the register is of the form
\begin{equation} \label{eq:condinigrover}
	\ket{R(1)}=\cos \left ( \frac{\theta}{2} \right ) \ket{\sigma_3=+1} + \sin \left(  \frac{\theta}{2} \right ) \ket{\sigma_3=-1}
\end{equation}
namely the eigenstate belonging to the eigenvalue $+1$ of $\underline{n}(1) \cdot \underline{\sigma}$, with
\begin{equation}
\underline{n}(1)= \underline{e}_1 \sin\theta + \underline{e}_3 \cos \theta. 
\end{equation}
We wish to remark that the above example captures the geometric aspects not only of such simple computational tasks as $NOT$ or $\sqrt{NOT}$ (viewed as rotations of an angle $\pi$ or $\pi/2$ respectively, decomposed into smaller steps of amplitude $\alpha$) but also of Grover's quantum search \cite{grover96}. If, indeed, the positive integer $\mu$ is the length of the marked binary word to be retrieved, and we set
\begin{equation}
\chi(\mu)=\arcsin(2^{-\frac{\mu}{2}})
\end{equation}
and
\begin{equation} \label{eq:theta}
\theta= \pi-2\>\chi(\mu)
\end{equation}
then the state \eref{eq:condinigrover} correctly describes the initial state $\ket{\iota}$ of the quantum search as having a component $2^{- \mu/2}$ in the direction of the target state, here indicated by $\ket{\omega}=\ket{\sigma_3=+1}$, and a component $\sqrt{1-2^{-\mu}}$ in the direction of the flat superposition, here indicated by $\ket{\sigma_3=-1}$, of the $2^\mu-1$ basis vectors orthogonal to the target state.
In this notations, if
\begin{equation} \label{eq:alpha}
\alpha=-4\>\chi(\mu),
\end{equation}
then the unitary transformation $\exp(-i \> \alpha\> \sigma_2/2)$ corresponds to the product $B\cdot A$ of the \emph{oracle} step
\begin{equation} \label{eq:oracle1}
A=I_r-2 \> \ketbra{\omega}{\omega}
\end{equation}
and the \emph{estimation} step
\begin{equation}\label{eq:estimation}
B=2\>\ketbra{\iota}{\iota}-I_r.
\end{equation}
Following  the beautifully pedagogical approach of Jozsa \cite{jozsa99}, we observe that the operator $A$ corresponds to a reflection in the hyperplane orthogonal to $\ket{\omega}$ and $B$ corresponds to a reflection in the hyperplane orthogonal to $\ket{\iota}$, with a minus sign.
\begin{lemma} \label{lemma:reflections}
If \ket{\chi} is any state in \Hr, then $I_{\ket{\psi}} = I_r-2\ketbra{\psi}{\psi}$ preserves the subspace $\mathcal{S}$ of \Hr spanned by \ket{\chi} and \ket{\psi}.
\end{lemma}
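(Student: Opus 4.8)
The plan is to prove invariance by checking the action of $I_{\ket{\psi}}$ on a spanning set of $\mathcal{S}$. Since $\mathcal{S}$ is, by definition, the linear span of \ket{\psi} and \ket{\chi}, and since $I_{\ket{\psi}}$ is a linear operator, it suffices to verify that the two images $I_{\ket{\psi}}\ket{\psi}$ and $I_{\ket{\psi}}\ket{\chi}$ both belong to $\mathcal{S}$; invariance of an arbitrary vector $a\ket{\psi}+b\ket{\chi}$ then follows immediately by linearity.

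First I would compute the action on the defining vector \ket{\psi}. Using the normalization $\braket{\psi}{\psi}=1$,
\begin{equation}
I_{\ket{\psi}} \ket{\psi} = \ket{\psi} - 2\, \ketbra{\psi}{\psi} \ket{\psi} = \ket{\psi} - 2 \ket{\psi} = -\ket{\psi},
\end{equation}
so \ket{\psi} is an eigenvector of eigenvalue $-1$ and in particular its image lies in $\mathcal{S}$. (Even without assuming normalization one gets $I_{\ket{\psi}}\ket{\psi} = (1-2\braket{\psi}{\psi})\ket{\psi}$, still a scalar multiple of \ket{\psi}.) Next I would compute the action on \ket{\chi},
\begin{equation}
I_{\ket{\psi}} \ket{\chi} = \ket{\chi} - 2\, \ketbra{\psi}{\psi} \ket{\chi} = \ket{\chi} - 2\braket{\psi}{\chi}\, \ket{\psi},
\end{equation}
which is manifestly a linear combination of \ket{\chi} and \ket{\psi} and hence lies in $\mathcal{S}$ by construction. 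Combining these two computations with the linearity of $I_{\ket{\psi}}$ closes the argument.

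There is essentially no obstacle here: the statement reduces to two one-line bracket computations. The only point worth flagging is the geometric reading, namely that $I_{\ket{\psi}}$ fixes every vector orthogonal to \ket{\psi} and reverses \ket{\psi} itself. This is precisely the reflection picture invoked for the operators $A$ and $B$ in the Jozsa-style analysis, and it is this invariance that will later confine the whole Grover dynamics to the two-dimensional plane spanned by \ket{\omega} and \ket{\iota}.
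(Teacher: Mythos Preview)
Your proof is correct and follows essentially the same route as the paper: the paper also verifies the action of the reflection on the two spanning vectors (sending \ket{\psi} to $-\ket{\psi}$ and adding a multiple of \ket{\psi} to \ket{\chi}) and concludes by linearity, after first offering the geometric remark that $\mathcal{S}$ and the mirror hyperplane are mutually orthogonal. Your closing comment on the reflection picture matches the paper's geometric observation.
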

\begin{proof}
Geometrically, $\mathcal{S}$ and the mirror hyperplane are orthogonal to each other (in the sense that the orthogonal complement of either subspace is contained in the other subspace) so the reflection preserves $\mathcal{S}$. Alternatively in terms of algebra, \eref{eq:estimation} shows that $B$ takes \ket{\iota} to $-\ket{\iota}$ and, for any \ket{\omega}, it adds a multiple of \ket{\omega} to\ket{\iota}. Hence any linear combination
is mapped to a linear combination of the same two states.
\end{proof}
Since Grover's algorithm consists  of the iterated application of two reflections (a reflection in the hyperplane orthogonal to the target state \ket{\omega} and  a reflection in the hyperplane orthogonal to the initial state \ket{\iota}), it preserves the space $\mathcal{S}$ spanned by \ket{\omega} and \ket{\iota}.\\
Now we may introduce a basis $\{\ket{e_1},\ket{e_2}\}$
into $\mathcal{S}$ such that \ket{\omega} and \ket{\iota}, up to an overall phase, have real coordinates. Indeed
choose $\ket{e_1}=\ket{\iota}$ so \ket{\iota} has coordinates $(1, 0)$. Then $e^{i \xi} \ket{\omega} = a \ket{e_1}+b \ket{e_2}$ where \ket{e_2}, orthogonal to \ket{e_1}, has still an overall phase freedom. We can thus choose $\xi$ to make $a$ real and the phase of \ket{e_2} to make $b$ real. In this basis, then, since \ket{\omega} and \ket{\iota} have real coordinates, the operators $A$ and $B$ when acting on $\mathcal{S}$, are also described by real 2 by 2 matrices: they are just the real 2 dimensional reflections in the lines perpendicular \ket{\omega} and \ket{\iota}. Finally we have the following
\begin{theorem}
Let $M_1$ and $M_2$ be two mirror lines in the Euclidean plane $\mathbb{R}$ intersecting at a point $O$ and let $\alpha$ be the angle in the plane from $M_1$ to $M_2$. Then the operation of reflection in $M_1$ followed by reflection in $M_2$ is just rotation by angle $2 \alpha$ about the point
$O$.
\end{theorem}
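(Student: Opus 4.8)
The plan is to reduce the statement to a short coordinate computation after normalizing the configuration. Since both mirror lines pass through $O$, I would first translate the plane so that $O$ becomes the origin. Under this translation each reflection becomes a \emph{linear} orthogonal map, and both the composition of the two reflections and the claimed rotation are conjugated by the same translation; hence it suffices to prove the identity for reflections in lines \emph{through the origin}. Having done this, I would identify the Euclidean plane $\mathbb{R}^2$ with $\mathbb{C}$, representing a point by a complex number $z$, which turns reflections and rotations into particularly transparent maps.

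The key ingredient is the description of a reflection in the line through the origin making an angle $\phi$ with the real axis as the antilinear map $z \mapsto e^{2i\phi}\,\bar z$. This is immediate to verify: for $\phi=0$ the formula gives the standard conjugation $z\mapsto\bar z$, reflection in the real axis; rotating the mirror by $\phi$ conjugates this by the rotation $z\mapsto e^{i\phi}z$, yielding $z \mapsto e^{i\phi}\,\overline{e^{-i\phi}z} = e^{2i\phi}\,\bar z$. Writing $\theta_1$ and $\theta_2$ for the angles of $M_1$ and $M_2$, so that the oriented angle from $M_1$ to $M_2$ is $\alpha = \theta_2-\theta_1$, reflecting first in $M_1$ and then in $M_2$ sends
\[
z \longmapsto e^{2i\theta_2}\,\overline{\,e^{2i\theta_1}\bar z\,} = e^{2i\theta_2}\,e^{-2i\theta_1}\,z = e^{2i(\theta_2-\theta_1)}\,z = e^{2i\alpha}\,z,
\]
which is exactly rotation by the angle $2\alpha$ about the origin. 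The two conjugations cancel, and the composite of two antilinear maps is linear, so the result is a genuine rotation rather than another reflection.

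The main obstacle is not analytic but purely a matter of bookkeeping of orientation and sign: one must check that ``the angle $\alpha$ from $M_1$ to $M_2$'' is measured in the same (counterclockwise) sense as the resulting rotation, and that the order of composition (reflect in $M_1$ first) is respected, since reversing it produces rotation by $-2\alpha$. An alternative route avoiding complex numbers is to write each reflection as the matrix $\begin{pmatrix} \cos 2\phi & \sin 2\phi \\ \sin 2\phi & -\cos 2\phi \end{pmatrix}$ and to multiply the two matrices, using the angle-addition identities to recognize the rotation matrix through $2\alpha$; this is equivalent but slightly more laborious, and the complex form makes the cancellation of the conjugations manifest. With this theorem in hand, the product $B\cdot A$ of the reflections \eref{eq:oracle1} and \eref{eq:estimation} is identified with the rotation $\exp(-i\,\alpha\,\sigma_2/2)$ of \eref{eq:alpha}, closing the circle with the Toy model \eref{eq:hamiltoniana4}.
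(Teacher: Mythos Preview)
Your proof is correct and complete. The paper itself does not actually supply a proof of this theorem: it is stated as a well-known fact from plane geometry (following Jozsa's exposition \cite{jozsa99}) and immediately used to identify the product $B\cdot A$ of the two Grover reflections with the rotation $\exp(-i\alpha\sigma_2/2)$. Your complex-number argument, reducing to the origin and exploiting the antilinear form $z\mapsto e^{2i\phi}\bar z$ of a reflection, is a clean and standard way to establish the result, and your care about the orientation convention and order of composition is exactly the point one has to get right.
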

It is having in mind the connection with Grover's algorithm that, for the sake of definiteness, in the examples that follow we are going to consider the one-parameter family of models, parametrized by the positive integers $\mu$, corresponding to the choice \eref{eq:theta} and \eref{eq:alpha} of the parameters $\theta$ and $\alpha$ and to the choice $s=2^\mu+1$ of the number of cursor sites, corresponding to the possibility of performing up to an exhaustive search.\\
In the example defined by the above conditions it is
\begin{equation}
\bra{R(x)}\> \underline{\sigma} \ket{R(x)} = \sin \left ( \theta+(x-1)\alpha \right)\> \underline{e}_1+\cos \left ( \theta+(x-1)\alpha \right)\> \underline{e}_3
\end{equation}
and, therefore,
\begin{equation} 
\underline{s}(t) = \sum_{x=1}^s \left |c(t,x;s) \right | ^2  \left( \sin \left ( \theta+(x-1)\alpha \right)\> \underline{e}_1+\cos \left ( \theta+(x-1)\alpha \right)\> \underline{e}_3 \right ).
\end{equation}
Figure 1 presents, inscribed in the unit circle, a parametric plot of $(s_1(t),s_3(t))$ under the above assumptions .
\begin{figure}[t] 
	\centering
		\includegraphics{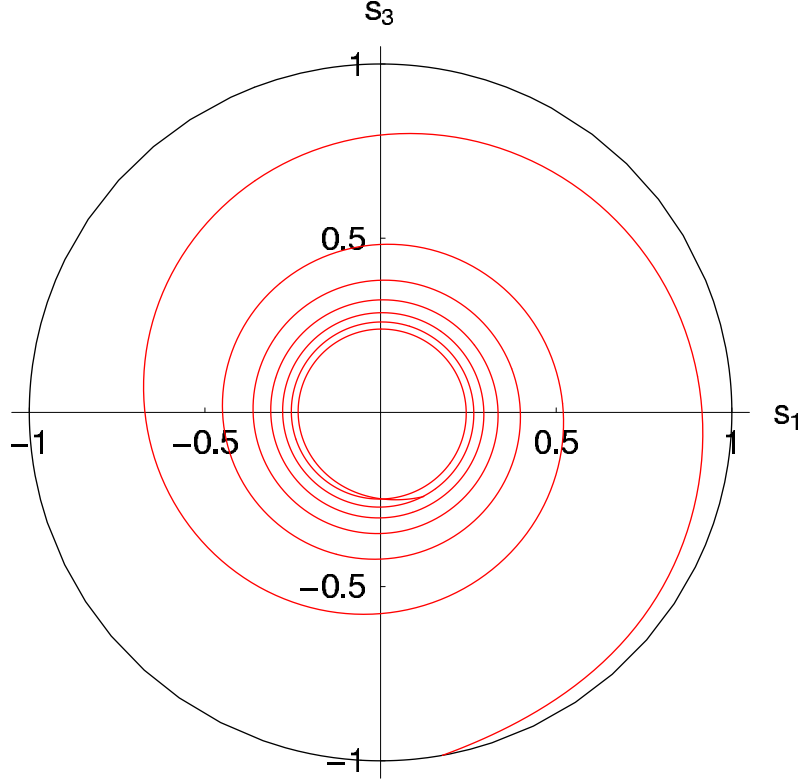}
\caption{A parametric plot of $\left (s_1(t),s_3(t)\right )$ for \mbox{$0 \leq t < s$}, $\lambda=1$. The choice $\mu=7,\;\chi= \arcsin(1/2^{\mu/2}),\; s=2^\mu +1,\; \alpha = -4 \chi,$ \mbox{$ \theta= \pi -2 \chi$} of the parameters is motivated by the connection with Grover's algorithm. Only the initial state lies on the unit circumference, the locus of pure states.}
\label{fig:figura1}
\end{figure}
It is convenient to describe the Bloch vector $\underline{s}(t)= s_1(t)\> \underline{e}_1+ s_3(t)\> \underline{e}_3$ in polar coordinates as
\begin{equation} \label{eq:toybloch}
\begin{array}{lcr}
s_1(t)=r(t)\sin{\gamma(t)},
& &
s_3(t)=r(t)\cos{\gamma(t)}.
\end{array}
\end{equation}
A very simple approximate representation of $\underline{s}(t)$ becomes then possible:
\begin{eqnarray}
r(t)e^{i \gamma(t)} & = & \sum_{x=1}^s |c(t,x;s)|^2 \exp(i(\theta+(x-1)\alpha)) = \nonumber \\
& = & \exp(i (\theta - \alpha))\sum_{x=1}^s |c(t,x;s)|^2 \exp(i\alpha x) \approx \nonumber \\
& \approx &  \exp(i (\theta - \alpha)) E\left( \exp(i \alpha \lambda t V(M_1(0))) \right)
\end{eqnarray}
The last step, legitimate for $1 << \lambda\; t <s$, requires only the explicit computation of the characteristic function corresponding to the probability density \eref{eq:densitav}, which leads to
\begin{equation}
	r(t)e^{i \gamma(t)} \approx \frac{2 \exp(i(\theta-\alpha))}{T}\left((J_1(T)- T \, J_2(T)+i (T \,H_0(T)-H_1(T))) \right)
\end{equation}
where $J_k$ and $H_k$ are, respectively, Bessel functions and Struve functions \cite{bessel}, and $T=\alpha \lambda t$.\\
The time evolution of the register subsystem is summarized by the Lindblad equation \cite{gorini76,lindblad76}
\begin{equation} \label{eq:lindblad}
	\frac{d \rho_r(t)}{dt} = -\frac{i}{2} \frac{d \gamma(t)}{dt} \left[ \sigma_2, \rho_r(t) \right]+ \frac{1}{4} \frac{d \ln r(t)}{dt} \left[\sigma_2 ,\left[ \sigma_2, \rho_r(t) \right] \right].
\end{equation}
The commutator term $\left[ \sigma_2, \rho_r(t) \right]$ describes the Hamiltonian part of the dynamics (after all we are considering a rotation about the $x_2$ axis); the double commutator $\left[\sigma_2 ,\left[ \sigma_2, \rho_r(t) \right] \right]$ describes, in much the same sense as equation 2.8 of \cite{milburn91}, the decohering effect of this rotation being administered by the cursor in discrete steps at random times.\\
The eigenvalues of $\rho_r(t)$ can be written as
\begin{equation} \label{eq:eigenvalues1}
\begin{array}{lcr}
\lambda_1(t)=\frac{1}{2} (1+r(t)), 
& &
\lambda_2(t)=\frac{1}{2} (1-r(t)).
\end{array}
\end{equation}
The von Neumann entropy $S\left( \rho_r(t) \right)$ is therefore
\begin{equation} \label{eq:vne}
S\left( \rho_r(t) \right) = - \frac{1+r(t)}{2} \ln \frac{1+r(t)}{2} - \frac{1-r(t)}{2} \ln \frac{1-r(t)}{2}.
\end{equation}
An example of its behaviour is shown in \fref{fig:figura2}.
\begin{figure}[t] 
	\centering
		\includegraphics{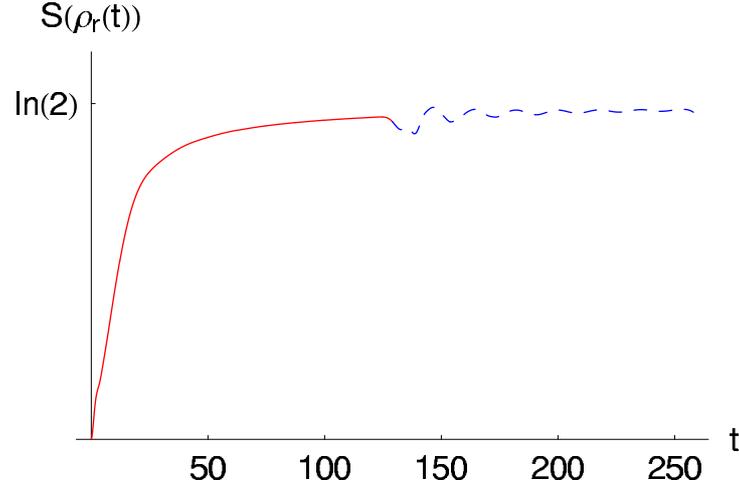}
		\caption{The von Neumann entropy of the register as a function of time, 
for the same model as in \fref{fig:figura1},  for $ 0 \leq t <s$  (solid line) and for  $ s \leq t < 2s$  (dashed line).}
\label{fig:figura2}
\end{figure}
The  eigenvectors corresponding to the eigenvalues \eref{eq:eigenvalues1} are, respectively
\begin{equation}
	\begin{array}{lcr}
		\ket{b_1(t)}= 
		\left (
				\begin{array}{c}
				\cos(\gamma(t)/2)\\
				\sin(\gamma(t)/2)
				\end{array}
		\right ),
		& &
		\ket{b_2(t)}= 
		\left (
				\begin{array}{c}
				-\sin(\gamma(t)/2)\\
				\cos(\gamma(t)/2)
				\end{array}
		\right ).
	\end{array}
\end{equation}
It is to be stressed that, at each time $t$, the projector \ketbra{b_1(t)}{b_1(t)} is, among the projectors on the state space of the register, the one having in the state  $\rho_r(t)$ the greatest probability of assuming, under measurement, the value $1$. Thus, the most careful choice (the one affected by minimum uncertainty) of the observable to read on the register at time $t$ is the projector \ketbra{b_1(t)}{b_1(t)}. In the case of Grover's algorithm one must measure the projector $\ketbra{\omega}{\omega}=\ketbra{\sigma_3=+1}{\sigma_3=+1}$ (and one easily can, because of the kickback mechanism analyzed, for instance, in \cite{defa04}) and has the freedom of choosing the time $\tau$ at which to perform the measurement. The best choice is therefore such that $\ket{b_1(\tau)}=\ket{\omega}$  (in our notational setting, $\tau$ is the time at which the helix of \fref{fig:figura1} crosses for the first time the positive $s_3$ axis). In spite of the fact of being now in the most favorable setting, one has, nevertheless, a deficit  $1-\lambda_1(\tau)$ in the probability of finding the target state.\\
As \fref{fig:figura3} shows,
\begin{figure}[!ht]
	\centering
		\includegraphics{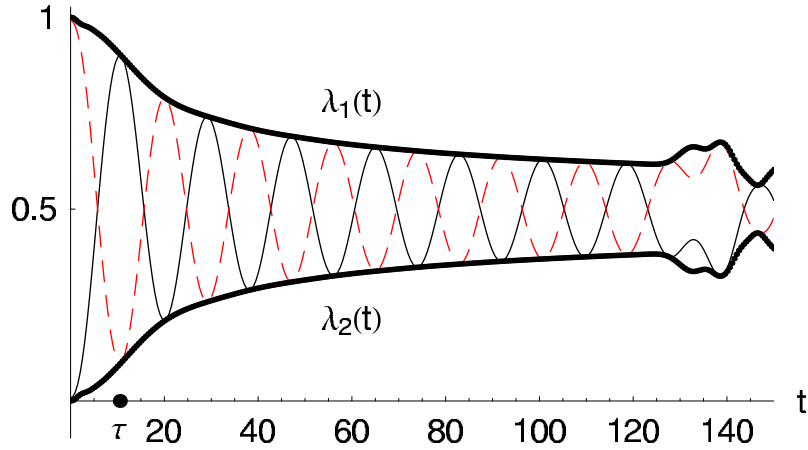}
		\caption{The same model as in \fref{fig:figura1} and \fref{fig:figura2}; $0 \leq t <1.2\,s$. The thin solid line is a graph of $Tr(\rho_r(t)\cdot (I_r+\sigma_3)/2)$, the probability of observing the target state $\ket{\omega}=\ket{\sigma_3=+1}$ in the example of Grover's algorithm. The dashed line is a graph of $Tr(\rho_r(t)\cdot (I_r-\sigma_3)/2)$, the probability of observing the ``undesired'' output $\ket{\sigma_3=-1}$. The upper and lower bounds on the probability of observing the target state are represented by the thick solid lines $\lambda_1(t)$ and $\lambda_2(t)$.}
	\label{fig:figura3}
\end{figure}
there are successive instants of time at which the probability of successful retrieval has a local maximum (a remnant of the periodic nature of Grover's algorithm when applied by an outside macroscopic agent) but the heights of these successive maxima form a sequence having a decreasing trend.\\
Further insight into our toy model is gained by examining the $t$ dependence of $E(Q(t)) = \bra{M_1(t)}Q \ket{M_1(t)}$ and of the angle of polarization $\gamma(t)$. The example of \fref{fig:figuraeq}.a suggests that the mean value of speed derived from asymptotic considerations correctly describes the average behavior of the ``clocking'' subsystem also for finite values of  $0<t<s$. As \fref{fig:figuraeq}.b shows, the ``clocked'' subsystem system $\bsigma$ is, in turn, driven, on the average, into uniform rotational motion.
\begin{figure}[!h]
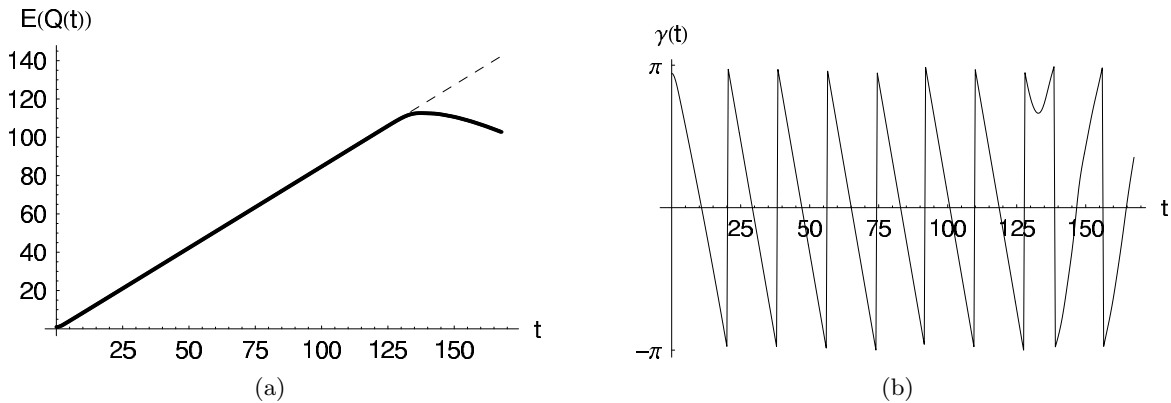

	\centering
	\subfigure[]{\includegraphics[width=7cm]{./FigureArticoloIOP2006/wfigure4a}}
	\hspace{1cm}
	\subfigure[]{\includegraphics[width=7cm]{./FigureArticoloIOP2006/wfigure4b}}
	\caption{Same parameters as in \fref{fig:figura1}. (a) $E(Q(t))$ (solid line) compared with the dashed straight line of slope $8/(3\pi)$. (b) The polar angle  $\gamma(t)$ of the Bloch vector \eref{eq:toybloch} as a function of $t$.}
\label{fig:figuraeq}
\end{figure}
\\Our model is so simple that we can explicitly study how the above semiclassical picture (in which the time parameter $t$ acquires operational meaning from its linear relation with mean values of configurational observables of clocking and clocked subsystem) is distorted by a measurement performed on either subsystem. The observations made at the end of the previous section about the effect of \emph{reading the clock} can indeed be complemented by the examination of the effect of \emph{reading the register}.\\
Suppose that the observable $\sigma_3$ has been measured at time $\tau$ and the result $+1$ has been found: the Bloch diagram of \fref{fig:figura20}.a shows then that the evolution of the register proceeds in much the same way as in the undisturbed situation of \fref{fig:figura1} (with the only obvious difference that the post-measurement initial condition  \ket{b_1(\tau)} lies  on the unit circumference).
\begin{figure}[!h]
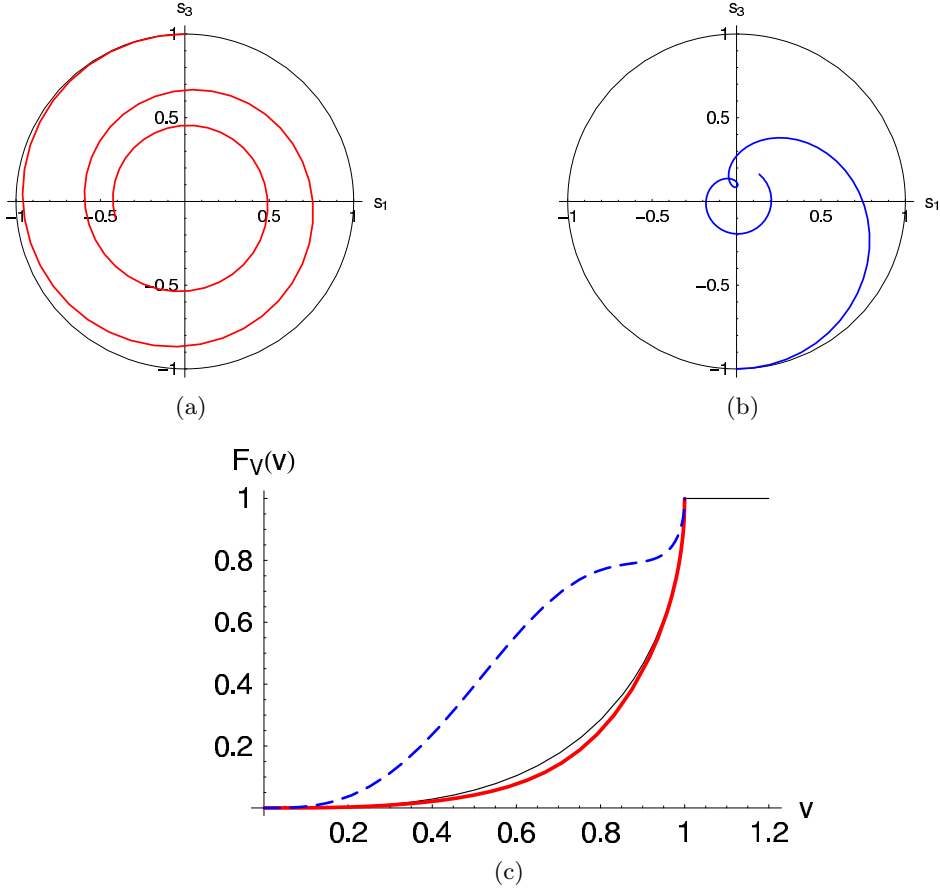

\subfigure[]{\includegraphics[width=50mm]{./FigureArticoloIOP2006/wfigure5a}}
\hspace{2cm}
\subfigure[]{\includegraphics[width=50mm]{./FigureArticoloIOP2006/wfigure5b}} \\
\centering
\subfigure[]{\includegraphics[width=79mm]{./FigureArticoloIOP2006/wfigure5c}} 
\caption{Same parameters as in \fref{fig:figura1}; measurement of the observable $\sigma_3$  at time $\tau$. Frames (a) and (b) represent the evolution for $\tau < t \leq 4\tau$ of the Bloch vector when measurement returns $+1$ and $-1$ respectively. Frame (c) represents the cumulative distribution functions of the speed $V$ when the results $+1$ (solid thick line) and $-1$ (dashed line) have respectively been found; the solid thin line represents the c.d.f. of $V$ in case of no measurement.}
\label{fig:figura20}
\end{figure}
If, instead, the result $-1$ has been found (\fref{fig:figura20}.b) the post-measurement evolution of the register is completely different from the unperturbed one.\\
We conclude this section with an example of the insight that the time evolution of $S(\rho_r(t))$ can give on the algorithm \ves{U}{s-1} being performed by the machine. Suppose of using, instead of the assignment \eref{eq:hamiltoniana4} of the primitive steps, $U_1=U_2=\ldots=U_{s-1}=\exp(-i \alpha \sigma_2/2)$, the alternative assignment
\begin{equation} \label{eq:Ux}
	U_x= \left \{ 
	\begin{array}{c r}
	A & \mbox{for odd $x$}\\
	B & \mbox{for even $x$} 
	\end{array}\right.
\end{equation}
\begin{figure}[!ht]
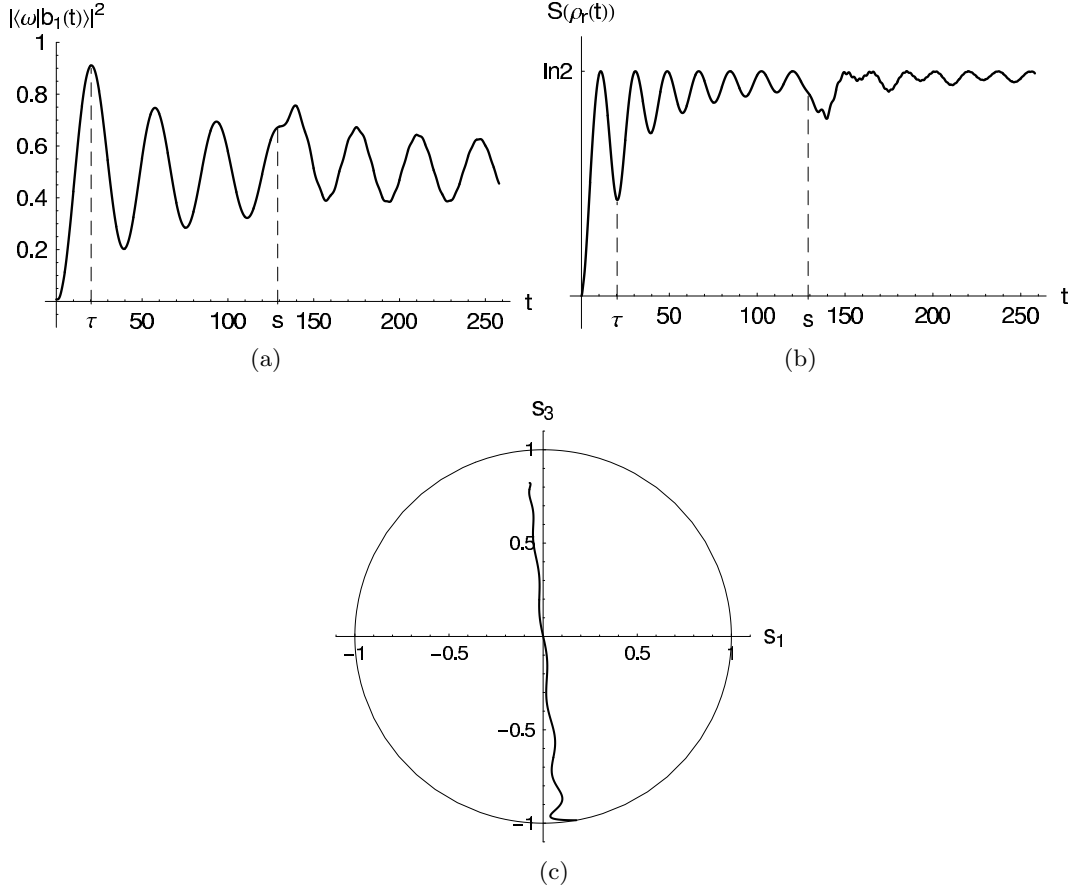

\subfigure[]{\includegraphics[width=69mm]{./FigureArticoloIOP2006/wfigure6a}}
\subfigure[]{\includegraphics[width=69mm]{./FigureArticoloIOP2006/wfigure6b}} \\
\centering
\subfigure[]{\includegraphics[width=59mm]{./FigureArticoloIOP2006/wfigure6c}} 
\caption{Same parameters as in the previous figures; $U_x$ as in \eref{eq:Ux}. The Bloch diagram refers to the time interval $(0,\tau)$ needed to reach the first maximum in probability. We observe that the state of the register passes through the maximally mixed state before cooling down toward the target state.}
\label{fig:figura4}
\end{figure}
where $A$ and $B$ are given by \eref{eq:oracle1} and \eref{eq:estimation}. \Fref{fig:figura4} gives, for this example, a full account of the diffusive character \cite{grover01} of Grover's quantum search: the first maximum of the probability of finding the target state (\fref{fig:figura4}.a) is reached in correspondence of the first local minimum of entropy (\fref{fig:figura4}.b): that the search has gone, before this instant, through a local \underline{maximum} of entropy is shown with particular evidence by the Bloch diagram of \fref{fig:figura4}.c.
\section{Energy} \label{sec:energy}
With reference, for definiteness, to the example of figure 3, call $\tau$ the instant of time at which the probability  $Tr(\rho_r(t)\ \ketbra{\omega}{\omega})$ reaches its first  and absolute maximum. We recall that, in the above example, the target state  $\ket{\omega}$ is taken to be the ``up'' state \ket{\sigma_3=+1} of the register.\\
The whole point of the analysis of the previous section is that $\lambda_1(\tau)$ is \emph{strictly} \mbox{smaller than $1$.}
This amounts, in turn, to a deficit  $1-\lambda_1(\tau)$  in the probability of finding the target state. This deficit is not, in itself, a strong limitation in a quantum search algorithm, because we can in principle identify the right target through a majority vote among a ``gas'' of a large number $N$  of machines. The trouble is that  if we want to use the same machines once more, we need to purify the ``gas'' of registers from the fraction $\lambda_2(\tau)$  of them which have collapsed into the wrong state: standard thermodynamic reasoning \cite{landauer61}  shows then that this requires the removal from the gas, supposing a heat reservoir at temperature $T$ is available, of an amount of heat of $N k_B T S(\rho_r(\tau))$, $k_B$ being Boltzmann's constant.\\
We wish, in this section, to supplement the above considerations with an explicit description of the post-measurement state of the machine, showing, in particular, the effect \emph{onto the clock} of the act of reading the register\cite{defa06b}.\\[5pt]
Suppose that at the optimally chosen instant $\tau$, at which it is $\gamma(\tau)=0$, while the machine is in the state  \ket{M(\tau)},  a measurement of the projector $(I_r+\sigma_3)/2$ is performed.\\
If the measurement gives the result $1$, then the state \ket{M(\tau)} collapses to
\begin{equation}\label{eq:M1}
	\ket{M_1(\tau)}=\ket{\sigma_3=+1} \otimes \frac{1}{\sqrt{\lambda_1(\tau)}} \sum_{x=1}^s c(\tau,x;s)\cos((\theta+(x-1)\alpha)/2)\ket{C(x)}.
\end{equation}
If, instead, the measurement gives the result $0$, then the state \ket{M(\tau)} collapses to
\begin{equation}\label{eq:M2}
	\ket{M_2(\tau)}=\ket{\sigma_3=-1} \otimes \frac{1}{\sqrt{\lambda_2(\tau)}} \sum_{x=1}^s c(\tau,x;s)\sin((\theta+(x-1)\alpha)/2)\ket{C(x)}.
\end{equation}
\Fref{fig:VNSh}(a) and \fref{fig:VNSh}(b) show the probability distributions
\begin{eqnarray}
P_1(x,\tau)& = & \left| \left(c(\tau,x;s) \cos((\theta+(x-1) \alpha)/2)\right) \right|^2 / \lambda_1(\tau)\\
P_2(x,\tau)& = & \left| \left(c(\tau,x;s) \sin((\theta+(x-1) \alpha)/2)\right) \right|^2 / \lambda_2(\tau)
\end{eqnarray}
of the observable $Q$ (position of the cursor)  in the states  \ket{M_1(\tau)} and \ket{M_2(\tau)}, respectively.
\begin{figure}[htbp]
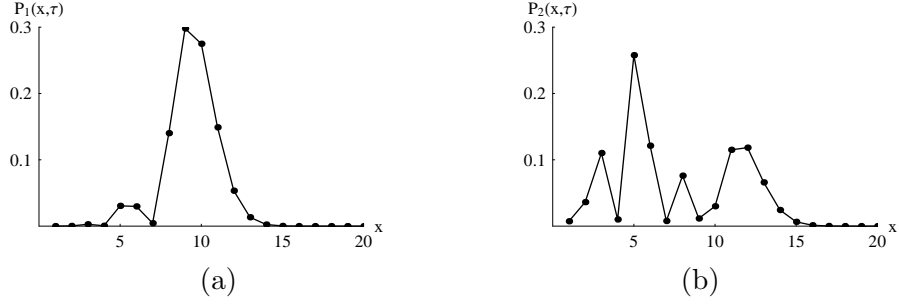

	\centering
		\includegraphics[width=55mm]{Figure/figqm1}
		\hspace{1cm}
		\includegraphics[width=55mm]{Figure/figqm22}
		\begin{picture}(230,10)
			\put(20,0){(a)}
			\put(200,0){(b)}
		\end{picture}
	\caption{Figures (a) and (b) represent, for the same choice of parameters as in figure \ref{fig:figura1}, respectively the probabilities $P_1(x,\tau)$ and   $P_2(x,\tau)$ as  functions of $x$.}
	\label{fig:VNSh}
\end{figure}
\Fref{fig:distribuzioneenergia}(a)and \fref{fig:distribuzioneenergia}(b) show the probability distributions of the observable $H$ (energy of the machine)  in the states \ket{M_1(\tau)} and \ket{M_2(\tau)}, respectively.
\begin{figure}[htbp]
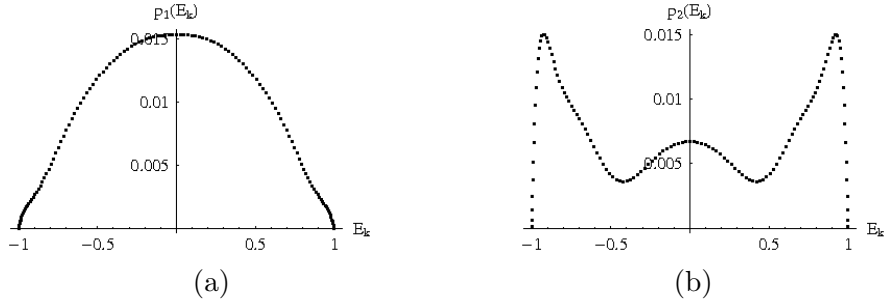

	\centering
		\includegraphics[width=55mm]{Figure/figspettrom1}
		\hspace{1cm}
		\includegraphics[width=55mm]{Figure/figspettrom2}
		\begin{picture}(230,10)
			\put(20,0){(a)}
			\put(200,0){(b)}
		\end{picture}
		\caption{Figures (a) and (b) represent the probability distribution $p_1(E_k)$ and $p_2(E_k)$ of the energy $H$ in the state \ket{M_1(\tau)} and \ket{M_2(\tau)} respectively.}
	\label{fig:distribuzioneenergia}
\end{figure}
The two energy distributions of figures 5 are easily derived from the fact that the Hamiltonian $H$ defined in (4) has the eigenvalues
\begin{equation}\label{eq:eigenvaluesH}
	E_k= -\lambda\; \cos(\vartheta(k;s)),\ k=1,\ldots,s;
\end{equation}
each doubly degenerate, an orthonormal basis in the eigenspace belonging to the eigenvalue $E_k$ being given, for instance, by the two eigenvectors
\begin{equation}
	\ket{E_k;\sigma_2=\pm 1}= \ket{\sigma_2=\pm 1} \otimes \sum_{x=1}^s v_k(x) \exp(\mp i \alpha (x-1)/2)\ket{C(x)},
\end{equation}
where
\begin{equation}
	v_k(x)=\sqrt \frac{2}{s+1} \sin(x\;\vartheta(k;s)).
\end{equation}
This leads to the explicit expressions
\begin{equation}
	p_j(E_k)=\sum_{\eta=\pm 1} \left|\braket{M_j(\tau)}{E_k;\sigma_2=\eta} \right|^2, \ j=1,2.
\end{equation}
Figures \ref{fig:distribuzioneenergia} and \ref{fig:VNSh} show that a collection of identically prepared and independently evolving machines becomes in fact, under the operation of reading the register at time $\tau$, a mixture of two distinct ``molecular'' species, ``1'' (present in a concentration $\lambda_1(\tau)$ ),  and ``2'' (present in a concentration $\lambda_2(\tau)$). In each of these two molecular species, the same ``atomic'' constituents have arranged themselves in a different geometrical shape (figures \ref{fig:VNSh}), with a different orientation of the register spin (equations \eqref{eq:M1} and \eqref{eq:M2}),  because of a different energy distribution (figures \ref{fig:distribuzioneenergia}).\\
Comparison with the distribution of $H$ in the pre-measurement state \ket{M(\tau)}, given in \fref{fig:spettro}, shows that the presence of the impurities of type ``2'' is due to unusually intense exchanges of energy between the machine and the reading (measurement) apparatus.
\begin{figure}[htbp]
	\centering
		\includegraphics[width=7cm]{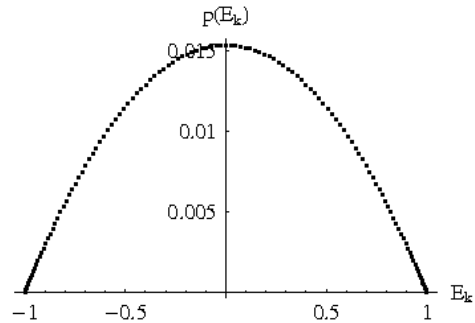}
	\caption{In the state \ket{M(\tau)} the probability distribution of $H$ is given by $p(E_k)=(v_k(x))^2$.}
	\label{fig:spettro}
\end{figure}

\section{The role of initial conditions}
An initial condition of the form
\begin{equation}
	\ket{R(1);\psi_0} = \ket{R(1)} \otimes \sum_{x=1}^{\epsilon} \psi_0(x)\ket{C(x)}
\end{equation}
with $\psi_0$ having support in a bounded region $\Lambda_{\epsilon} = \{1,2,\ldots,\epsilon\} \subseteq \{1,2,\ldots,s\}$ evolves, under \eref{eq:hamiltoniana2} as
\begin{equation}
	e^{-itH} \ket{R(1);\psi_0} = \sum_{x=1}^{\epsilon} \psi_0(t,x) \ket{R(x)}\otimes \ket{C(x)},
\end{equation}
where $\psi_0(t,x)$ solves, with the obvious boundary and initial conditions, the (discretized) free Schr\"odinger equation. The ensuing spreading of the wave packet  leads to an increasing trend (with the exception of the effects of reflection at time $t \approx s$ evidenced in figures \ref{fig:figura4}.b and \ref{fig:figura2}) of the von Neumann entropy $S(\rho_r(t))$ of the state
\begin{equation}
	\rho_r(t)=\sum_{x=1}^s \left| \psi_0(t,x)\right|^2 \ketbra{R(x)}{R(x)} 
\end{equation}
of the register. This is an undesirable feature because $S(\rho_r(t))$ gives a lower bound on the Shannon entropy of the distribution of any observable of the register, for short on the uncertainty in any reading of the output.\\
The models of the previous section where intended to show the above effect; in this section we devote some effort to the goal of decreasing it, by suitable choices of initial condition aimed at reducing the spreading of $Q$  in the state $\psi_0(t,x)$. It is sufficient, for \emph{this} purpose, to study only the cursor, evolving under the Hamiltonian
\begin{equation} \label{eq:freehamiltonian}
	H_0 = - \frac{\lambda}{2} \sum_{x=1}^{s-1} \tau_+(x+1) \tau_-(x)+\tau_+(x) \tau_-(x+1).
\end{equation}
The point is to devise an initial condition $\psi_0$  which uses  whatever additional finite amount $\Lambda_{\epsilon}= \{1,2,\ldots,\epsilon\}$ of space resources is available as a \emph{launch pad} for the cursor in an ``efficient'' way:  this means both a high value of the expectation of $V(\psi_0)$ and a small value of the variance of $V(\psi_0)$ (we want the spreading of $Q$ to increase at a low rate for a short time of computation). That both goals can be achieved is shown by examining the family of initial conditions, given by the eigenstates of a Hamiltonian of the form \eref{eq:freehamiltonian} restricted to qubits in $\Lambda_{\epsilon}$:
\begin{equation}
	\ket{c_k}= \sum_{x=1}^{\epsilon} \sqrt{ \frac{2}{\epsilon + 1} }\sin \left( \frac{k \pi x}{\epsilon+1}\right) \ket{C(x)},\; k=1,2,\ldots,\epsilon.
\end{equation}
The probability density of the speed $V_k \equiv V(c_k)$ corresponding to each of the above states is easily computed from \eref{eq:densitapsi0}:
\begin{eqnarray} \label{eq:densitavk}
	f_{V_k}(v)& = & I_{(0,1)}(v) \cdot  \\
	& \cdot & \frac{4 \left(3-2 v^2 + \cos \left( \frac{2 k \pi}{\epsilon + 1}\right) \right) \left( \sin \left( \frac{k \pi}{\epsilon + 1} \right) \right)^2 \left( \sin((\epsilon+1) \arcsin(v))\right)^2}{\pi \sqrt{1-v^2}(\epsilon + 1) \left( 2 v^2 + \cos \left( \frac{2 k \pi}{\epsilon + 1 }\right)-1\right)^2.  \nonumber}
\end{eqnarray}
\begin{figure}[htbp]
	\centering
		\includegraphics[width=300pt]{./FigureArticoloIOP2006/wfigure7}
	\caption{$n=5,\ \epsilon=2n-1=9$. The cumulative distribution functions  \mbox{$F_{V_k}(v)=Prob(V_k \leq v)$} corresponding to the densities \eref{eq:densitavk}, for $k$ going from $1$ to $n$. The ticks on the $v$ axes are $E(V_1)<E(V_2) < \ldots <E(V_5)$.}
	\label{fig:figura13}
\end{figure}
The behavior of $V_k$ is examplified by \fref{fig:figura13}. We are taking there, as we will always do in this section for the sake of notational convenience, $\epsilon$ to be odd
\begin{equation}
	\epsilon=2 n - 1.
\end{equation}
The examples of \fref{fig:figura13} clearly show the dispersive  nature of the medium \eref{eq:freehamiltonian}; they also show that increase of the mean value is accompanied by decrease of the variance (as shown by the increase in the steepness of the graph as $k$ goes from $1$ to $n$). An obvious choice for the initial state of the cursor emerges from the above example:
\begin{eqnarray} \label{eq:condiniflat}
	\ket{c_n} & = & \sum_{x=1}^{2n-1} \sqrt{\frac{1}{n}} \sin\left(\frac{\pi}{2} x \right) \ket{C(x)} =  \nonumber \\
	&=& \frac{\ket{C(1)}-\ket{C(3)}+\ket{C(5)}+ \ldots - (-1)^n \ket{C(2n-1)}}{\sqrt{n}} ,
\end{eqnarray}
conforming to the idea of packing the maximum number of wavelengths in the \emph{launch pad} $\Lambda_{\epsilon} = \{1,2,\ldots,\epsilon\}$, and having a, presumably easy to prepare, stationary state of the free  $XY$ chain localized in  $\Lambda_{\epsilon}$.\\
The random variable $V_n \equiv V(c_n)$  has probability density
\begin{equation}
	f_{V_n} = I_{(0,1)}(v) \frac{\left( \sin(2 n \arcsin(v))\right)^2}{ \pi n (1-v^2)^{3/2}}
\end{equation}
and, therefore, expectation value
\begin{eqnarray} \label{eq:attesavn}
	E(V_n) & = & \frac{4}{\pi}\sum_{h=1}^n \left(\frac{1}{ 4h-3}-\frac{1}{4h-1} \right) =    \nonumber \\
	& = & 1- \frac{4}{\pi}\sum_{h=n+1}^{+\infty} \left(\frac{1}{ 4h-3}-\frac{1}{4h-1} \right) \approx \nonumber \\
	& \approx & 1-\frac{1}{2 \pi n}
\end{eqnarray}
The second moment of $V_n$ is explicitly given by
\begin{equation}
	E(V_n^2) = 1-\frac{1}{4n}.
\end{equation}
The above considerations lead to the following asymptotic behavior, for large $n$, of the variance of $V_n$:
\begin{equation} \label{eq:varianzavn}
	var(V_n) = \frac{4 -\pi}{4 \pi n}.
\end{equation}
Equation \eref{eq:attesavn} is a quantitative assessment of the cost in terms of space resources of achieving the first requisite of \emph{efficiency}, namely high mean speed;  similarly, \eref{eq:varianzavn} gives the cost of decreasing the variance of $V_n$.\\
Incidentally, as the observable $Q$ has, in the state \ket{c_n}, expectation value
\begin{equation}
	E(Q_n) = \bra{c_n} Q \ket{c_n} = n
\end{equation}
and variance
\begin{equation}
	var(Q_n)=\frac{n^2-1}{3},
\end{equation}
equation \eref{eq:varianzavn} can be read as saying that, in the initial state \ket{c_n}, the position-velocity uncertainty product is given by
\begin{equation}
	var(Q_n)var(V_n) \approx \frac{n (4-\pi)}{12 \pi}
\end{equation}
Figures \ref{fig:figura5} and \ref{fig:figura7} show the relevance of the above \emph{asymptotic} considerations for the case of \emph{finite} $\epsilon$  and finite $s$ for $t<s$.
\begin{figure}[t]
	\centering
		\includegraphics{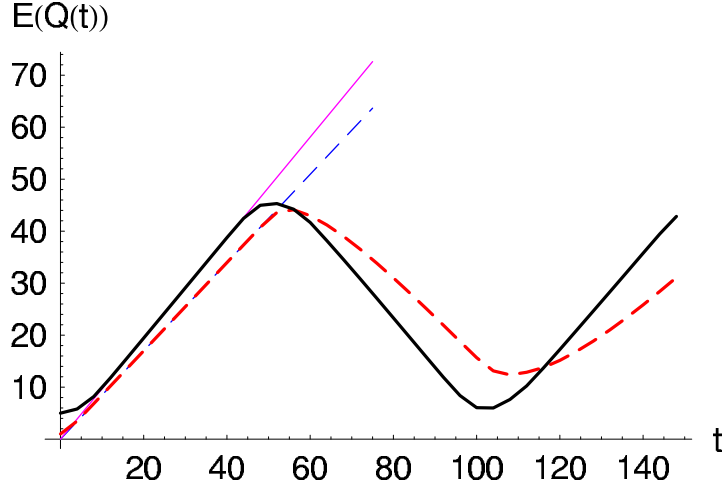}
	\caption{Solid lines: the expectation value of $Q$ in  a state \ket{M_n(t)} evolving from an initial condition having the cursor in  \ket{c_n}; the slope of the initial linear part of the graph is correctly predicted by \eref{eq:attesavn}. For comparison purposes the dashed lines show $\bra{M_1(t)}Q \ket{M_1(t)}$ as a function of time and the corresponding linear fit with slope given by \eref{eq:EM1}.}
	\label{fig:figura5}
\end{figure}
\begin{figure}[htbp]
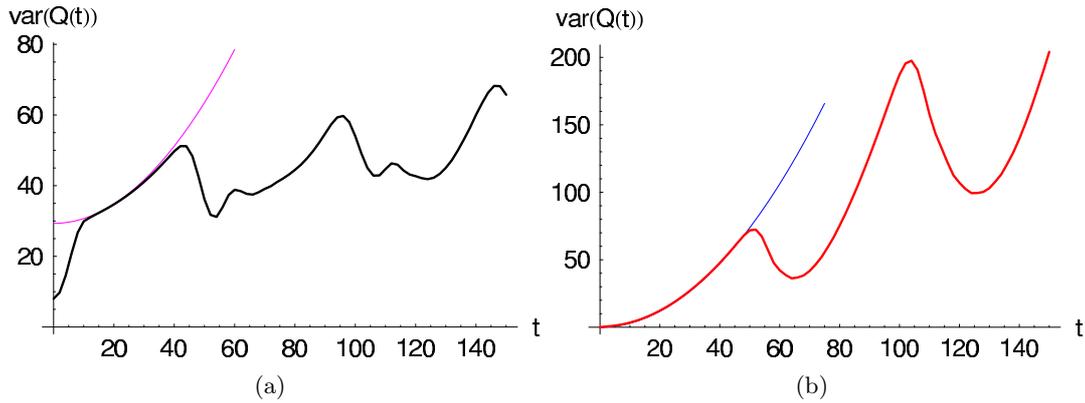

	\centering
	\subfigure[]{\includegraphics[width=7cm]{./FigureArticoloIOP2006/wfigure9a}}
	\subfigure[]{\includegraphics[width=7cm]{./FigureArticoloIOP2006/wfigure9b}}
	\caption{$s=50,\ n=5$. (a) The variance of  $Q$ in a state \ket{M_n(t)} as a function of $t$, compared with its best fit of the form $const.+ t^2(4-\pi)/(4 \pi n)$, in the time interval $(\epsilon, s-\epsilon)$ in which boundary effects can be neglected. (b) The variance of $Q$ in the state \ket{M_1(t)}, compared with its approximation $t^2(3/4 -(8/(3 \pi))^2)$, suggested by \eref{eq:varM1}.}
\label{fig:figura7}
\end{figure}
The effect of the initial condition is most evident  if we compare the evolution of the state of the register from the initial state $\ket{M_1} = \ket{R(1)} \otimes \ket{C(1)}$ with the evolution starting from
\begin{equation}
	\ket{M_n} = \ket{R(1)} \otimes \ket{c_n}.
\end{equation}
\begin{figure}[ht]
\centering
\hspace{-0cm}\subfigure[]{\includegraphics[width=79mm]{./FigureArticoloIOP2006/wfigure10a}}\\
\subfigure[]{\includegraphics[width=79mm]{./FigureArticoloIOP2006/wfigure10b}} \\
\subfigure[]{\includegraphics[width=59mm]{./FigureArticoloIOP2006/wfigure10c}} 
\caption{$s=50,\ 0 \leq t \leq 3s,\ \mu=10,\ N= \left\lfloor \frac{\pi}{4} 2^{\mu/2} \right\rfloor$ (the Grover-optimal number of active steps); $U_1=U_2=\ldots=U_N=\exp(-i \alpha \sigma_2/2)$, with $\alpha$ and $\theta$ chosen as in \eref{eq:theta} and \eref{eq:alpha}; $U_x=I_r$ for $x > N$; initial state $\ket{M_1}= \ket{R(1)} \otimes \ket{C(1)}$.}
\label{fig:figura8}
\end{figure}
\begin{figure}[ht]
\centering
\subfigure[]{\includegraphics[width=79mm]{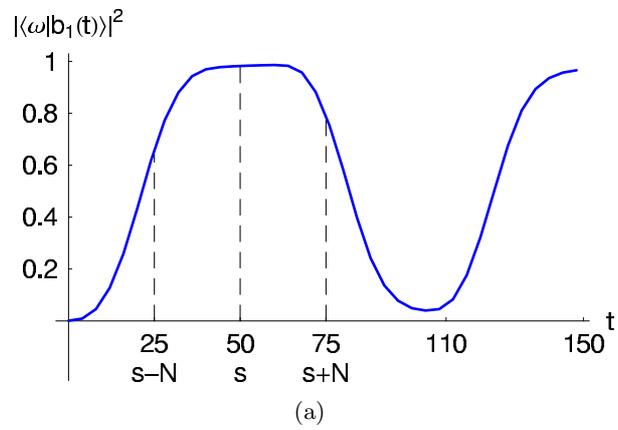}}\\
\subfigure[]{\includegraphics[width=79mm]{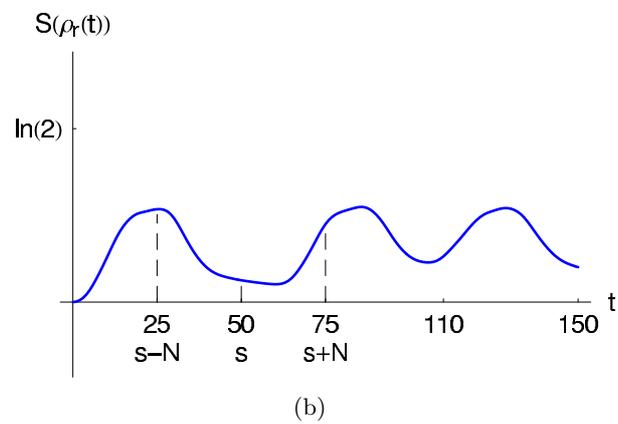}} \\
\subfigure[]{\includegraphics[width=59mm]{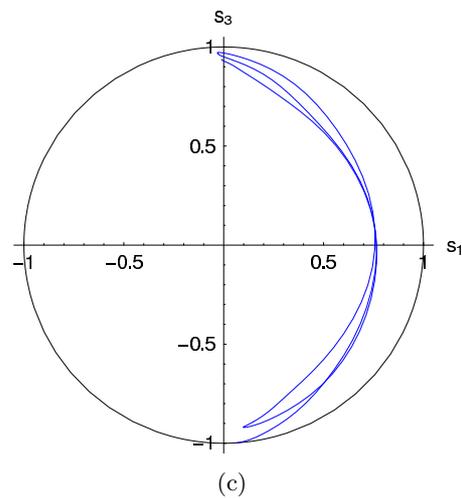}} 
\caption{$s=50,\ 0 \leq t \leq 3s,\ \mu=10,\ N=25,\ \alpha$ and $\theta$ as in \fref{fig:figura8}; $n=5,\ \epsilon=2n-1$; $U_{\epsilon}=U_{\epsilon+1}=\ldots = U_{\epsilon+N-1}= \exp(-i\alpha \sigma_2/2)$; $U_x=I_r$, for $1\leq x < \epsilon$ or $x \geq \epsilon + N$; initial state $\ket{M_n}= \ket{R(1)}\otimes \ket{c_n}$.}
\label{fig:figura9}
\end{figure}
This is done in figures \ref{fig:figura8} and \ref{fig:figura9} in the same \emph{probability-entropy-Bloch } format as in \fref{fig:figura4}. We examine there two different ways of using an additional amount $s-N$ of space, of size comparable with the minimum amount $N$ required by the algorithm. \Fref{fig:figura8} summarizes the experience developed in \cite{apolloni02} on the effect of using all this additional space as a telomeric chain or ``landing strip'': as long as the cursor stays in this region the register remains acted upon by the optimal number of primitives. \Fref{fig:figura9} shows the improvement obtained by investing part of the additional space as a ``launch pad'' on which to prepare a state in which the spreading of the cursor  increases (see \fref{fig:figura7}) at a lower rate than when starting from position 1.\\
Comparison of figures \ref{fig:figura8}.c and \ref{fig:figura9}.c, in particular the improvement of the behavior after  reflections at site $s$, shows that the idealized scenario of reversible computation (the cursor, ``going back and forth'', ``does and undoes'' the reversible computation) is within reach, with, as \eref{eq:varianzavn} shows, a polynomial cost in space.
\begin{figure}[ht]
\centering
\subfigure[]{\includegraphics[width=79mm]{./FigureArticoloIOP2006/wfigure12a}}\\
\subfigure[]{\includegraphics[width=79mm]{./FigureArticoloIOP2006/wfigure12b}} \\
\subfigure[]{\includegraphics[width=59mm]{./FigureArticoloIOP2006/wfigure12c}} 
\caption{Same parameters as in \fref{fig:figura9}; initial state  $\ket{R(1)}\otimes \ket{\gamma_n}$, as in \eref{eq:gamman}.}
\label{fig:figura10}
\end{figure}
We note, in \fref{fig:figura10}, that we can do much better than in \fref{fig:figura9}, with the same expenditure of space resources, in approximating the reversible scenario if, instead of the initial state \eref{eq:condiniflat}, we set the cursor in the initial state
\begin{equation} \label{eq:gamman}
	\ket{\gamma_n}=\sqrt{\frac{2}{3n}} \sum_{x=1}^{2n-1} \left( 1+\cos \left( \frac{\pi}{2n} x\right)\right) \sin\left( \frac{\pi}{2} x\right) \ket{C(x)}.
\end{equation}
The state $\ket{\gamma_n}$ emerges quite naturally as a three-mode approximation (a linear combination of \ket{c_n}  and  \ket{c_{n\pm1}}) of the initial condition that maximizes the mean speed of computation for fixed length $\epsilon$ of the \emph{launch pad}.

\section*{Summary}
The quantitative characterization of the random variable \emph{time of flight speed} shows that the, classically obvious, choice of a sharp initial condition $\ket{M_1}$, leads to a fast spreading of the wave packet; this in turn, makes the von Neumann entropy of the register grow fast. We showed that the entropy of the register is related to the probability of finding the desired output state written on the it. We showed, moreover, that we can considerably reduce the spreading rate by means of extra space resources, or \emph{launch pad}, into which we prepare a state \ket{\gamma_n}, which is ``the most'' localized in momentum representation in a finite region.\\
While studying the analytic expression of the entropy of the register in our toy model, $\rho_r(t)$, we gave a simple example of Lindblad dynamics for the state of the system: the clock, which administers the computational primitives to the register at random times, acts as a decohering environment for the latter.\\
In section \ref{sec:energy} we made quantitative assessments on the energetic perturbation induced by a measurement of the system. We observed that, since the Hamiltonian is time independent, the only cost of a computation on a Feynman machine is related to the energy required to reset the system. What we showed is that it depends on the outcome of our measurement.

\chapter{A multi-hand quantum clock} \label{chap:imperfections}
In this chapter we generalize the model of previous chapters by having more than one particle traveling on the program line. We will show that, as soon as the operators acting on the register do not commute, new effects appear; we provide an example where the excitations get confined  in bounded regions, an effect resembling Anderson localization \cite{anderson58,anderson73}.\\
The resolution of the eigenvalue problem for what we call the  \emph{multi-hand quantum clock} in the non-commutative context of the cursor-register coupling in the general subspace $N_3=k$  turns out to be too involved (at least for the author).  The numerical examples provided in this section are frankly heuristic. Analytic justification must wait for future developments.
\section{Number of particles} \label{sec:numberofparticles}
In the previous chapter we have provided examples of the benefit of spreading the initial wave function of the cursor ($N_3=1$) on an initial \emph{launch pad} instead of, as it would be classically ``obvious'', having it strictly localized at site $1$. Equality \eref{eq:varianzavn} is, in this context, a quantitative assessment of the cost, in term of space resources, of implementing Feynman's ballistic mode of computation.\\
In  this section we  abandon, in the same spirit, the classical prejudice of having a single clocking excitation, and present a preliminary analysis of the idea of starting the cursor in an initial state with  $N_3>1$. The idea is to follow the motion of a swarm of several clocking agents (cursor spins in the ``up'' state) acting on the register. Stated otherwise, with  reference for simplicity to the case  $N_3=2$, we allow the \emph{clock} to perform a quantum  walk on the graph having the vertices $(x_1,x_2)$, with  $1 \leq x_1 < x_2 \leq s$, with edges between nearest neighbors \cite{osborne04}.\\
We recall, mainly in order to establish our  notation, a few basic facts \cite{lieb61} about the $XY$ Hamiltonian \eref{eq:freehamiltonian}.\\
The eigenstates of $H_0$ in the subspace $N_3=n$ are labeled by subsets of size $n$ of $\Lambda_s=\{1,2,,\ldots,s\}$; if $\mathsf{K}=\{\ves{k}{n}\}$ is such a subset (where we will always assume $1 \leq k_1 < k_2 < \ldots <k_n \leq s$), an eigenstate of $H_0$ belonging to the eigenvalue
\begin{equation} \label{eq:multieigen}
	E_{\mathsf{K}}=\sum_{j=1}^n e_{k_j}
\end{equation}
is given by
\begin{equation}
	\ket{E_\mathsf{K}}= \sum_{M \subseteq \Lambda_s; |M|=n} V(\mathsf{K},M) \ket{M}.
\end{equation}
For  $M=\{\ves{x}{n}\}$, with $1 \leq x_1 < x_2 < \ldots <x_n \leq s$, we have indicated above by \ket{M} the simultaneous eigenstate of $\ve{\tau_3}{s}$  in which only the spins in $M$ are ``up'', and we have set:
\begin{equation} \label{eq:slaterdet}
	V(\mathsf{K},M)= \det\left(\left\|v_{k_i}(x_j) \right\|_{i,j=1,\ldots,n}\right)
\end{equation}
where the functions $v_k$ have been defined in \eref{eq:autofunzioni}.\\
We set
\begin{equation}
	Q_i \ket{\{\ves{x}{n}\}} = x_i \ket{\ves{x}{n}}.
\end{equation}
It is easy to study, by the techniques of \sref{sec:speed}, the asymptotic (as $s \to +\infty$ and \mbox{$t \to +\infty$}) joint distribution of the observables $Q_i$, and therefore to give quantitative estimates of the correlation between the speeds of different \emph{particles} and its dependence on the initial condition. To quote just one example, in the subspace $N_3=2$ and in the state \ket{\{1,2\}}  the velocities $(V_1,V_2)$ of the two ``up'' spins  (the limits in law of $Q_1/t$ and $Q_2/t$, respectively) have joint probability density
\begin{equation} \label{eq:jointdistr}
	f_{V_1,V_2}(v_1,v_2) = I_{(0,v_2)}(v_1) I_{(0,1)}(v_2)\frac{64 v_1^2 v_2^2 (2-v_1^2-v_2^2)}{\pi^2 \sqrt{(1-v_1^2)(1-v_2^2)}} 
\end{equation}
It is immediate from \eref{eq:jointdistr} to compute the conditional expectation  $E(V_1 | V_2)$ of the velocity of the leftmost \emph{particle} given the one of the rightmost \emph{particle}; it turns out to be:
\begin{equation}
	E(V_1|V_2)= \frac{3 V_2}{4}+ O(V_2^5). 
\end{equation}
In this section we advance the following idea: if the issue of the computation is the application, for a given number $g$ of times, of a given primitive $G$ to the register, initialize the cursor in the $N_3=g$ subspace, in the state, say, \ket{\{1,2,\ldots,g\}}; let then the system evolve according to the Hamiltonian:
\begin{equation} \label{eq:hamiltoniana5}
	H = -\frac{\lambda}{2} \sum_{x=1}^{s-1} U_x \otimes \tau_+(x+1) \tau_-(x)+  U_x^{-1} \otimes \tau_+(x) \tau_-(x+1)
\end{equation}
where
\begin{equation} \label{eq:ux}
	U_x = G^{\delta_{x_0,x}},\mbox{ for a fixed $x_0 \geq g$},\; G^0 = I_r.
\end{equation}
\begin{figure}[h]
\subfigure[]{\includegraphics[width=60mm]{./FigureArticoloIOP2006/wfigure13a}}
\subfigure[]{\includegraphics[width=60mm]{./FigureArticoloIOP2006/wfigure13b}} \\
\centering
\subfigure[]{\includegraphics[width=59mm]{./FigureArticoloIOP2006/wfigure13c}} 
\caption{$\mu=4,\ g=3,\ x_0=6,\ s=20,\ 0 \leq t \leq 4s$; $U_{x_0}=G=\exp(-i \alpha \sigma_2 / 2)$ with $\alpha$ and $\theta$ given by \eref{eq:alpha}  ad \eref{eq:theta}, $U_x=I_r$ for $x \neq x_0$; initial condition $\ket{R(1)} \otimes \ket{\{1,2,3\}}$ with \ket{R(1)} given by \eref{eq:condinigrover}.}
\label{fig:figura11}
\end{figure}
An implementation of this approach is shown by the probability-entropy-Bloch diagram of \fref{fig:figura11}.
Simple expressions for the quantities shown in \fref{fig:figura11} can be obtained by the explicit form of the eigenvectors of the Hamiltonian described by \eref{eq:hamiltoniana5} and \eref{eq:ux} in every eigenspace of $N_3$. For instance in the subspace $N_3=3$ a complete set of eigenstates is given, for $\zeta=\pm 1$ and $1 \leq j <h < k \leq s$, by:
\begin{eqnarray} \label{eq:evolutomulti}
	\ket{\zeta;E_{\{j,h,k\}}} & = & \sum_{1 \leq x_1 <x_2 < x_3 \leq s} V(\{j,h,k\},\{x_1,x_2,x_3\})\cdot \nonumber \\
	& \cdot &  G^{\vartheta(x_1-x_0)+\vartheta(x_2-x_0)+\vartheta(x_3-x_0)} \ket{\sigma_3=\zeta} \otimes \ket{\{x_1,x_2,x_3\}}
\end{eqnarray}
where $\vartheta$ is the unit step function defined by:
\begin{equation} \label{eq:thetax}
	\vartheta(x)= \left \{ 
	\begin{array}{c r}
	1, & \mbox{if $x > 0$}\\
	0, & \mbox{if $x \leq 0$} 
	\end{array}\right.
\end{equation}
\section{Non commuting computational primitives} \label{sec:nonabelian}
The spectral structure \eref{eq:evolutomulti} is peculiar of the extremely simple situation \eref{eq:ux} (just one active link) considered there. As soon as we have more than one active link, say the primitive $A$ acting on link $(a,a+1)$ and the primitive $B$ acting on link $(b,b+1)$, with $b>a+1$, a new phenomenon (that for simplicity we discuss in the $N_3=2$ case) takes place: the energy eigenstates have not anymore the form of a linear combinations of tensors products of the form  $M(x_1,x_2)\ket{\sigma_3=\zeta} \otimes \ket{\{x_1,x_2\}}$, with $M(x_1,x_2)$ a \underline{monomial} in $A$ and $B$; related to this, the coordinates $x_1,\ x_2$ lose, strictly speaking, the meaning of relational time \cite{gambini04b}: given that at a given value of $t$, $Q_1=x_1$ and $Q_2=x_2$ we can only claim that the state of the register has been acted upon by a \underline{polynomial} in $A$ and $B$.\\
This phenomenon is easily understood in terms of the Dyson expansion of the propagator: the probability amplitude for the two excitations being in $x_1,\ x_2$ (both larger than $b$), given that at time $0$ they were in $y_1,\ y_2$ (both $\leq a$), receives contributions not only from Feynman paths along which the rightmost excitation goes past $a$ and $b$ and \underline{then} the leftmost excitation goes past  $a$ and $b$ (along such a computational path the state of the register is modified by $BABA$), but also, among others, from paths along which both excitations go past $a$ before both going past $b$ (along such a computational path the state of the register is modified by  $BBAA$).\\
By means of numerical simulation, it is possible to have some insight into  the effect of the interaction by the register when it is acted upon by non-commuting operators.\\
Let us consider a machine with one register spin in the state \ket{\sigma_3=+1} and a cursor chain of $s$ sites initialized in the state \ket{\{1,2\}}, that is
\begin{equation}
	\ket{\psi(0)}=\ket{M_1}= \ket{\sigma_3=+1}\ket{\{1,2\}}.
\end{equation}
Consider the following family of Hamiltonians
\begin{eqnarray}
H_{a,b} & = & \sum_{x=1}^s B^{\delta_{b,x}} A^{\delta_{a,x}} \tau_+(x+1)\tau_-(x)+ h.c.,
\end{eqnarray}
where $\delta_{j,x}$ is the Kronecker $\delta$ function.\\
Each Hamiltonian of the family is characterized by having only two active links $a \rightarrow a+1$ and $b \rightarrow b+1$, \emph{during which} the operator $A$ and $B$ are respectively administered to the register.\\
Figures \ref{fig:gconfrontoloc} and \ref{fig:gsinistraloc} are suggestive of the following behavior: the motion of the excitations along the chain is unaffected by the interaction with the register as long as the active links are next to each other ($b=a+1$). This reflects the fact that the energy spectrum of Hamiltonians of type $H_{a,a+1}$ is the same as the one of the free Hamiltonian. In fact, it can be easily shown that the states
\begin{eqnarray}
	\ket{\zeta;E_{\{j,h\}}} & = & \sum_{1 \leq x_1 <x_2 \leq s} \mathsf{V}(\{j,h\},\{x_1,x_2\})\cdot  \\
& \cdot &  B^{\vartheta(x_1-a-1)} A^{\vartheta(x_1-a)}B^{\vartheta(x_2-a-1)}A^{\vartheta(x_2-a)}\ket{\sigma_3=\zeta} \otimes \ket{\{x_1,x_2,x_3\}}, \nonumber
\end{eqnarray}
where $\mathsf{V}(\{j,h\})$ are defined as in \eref{eq:slaterdet},
are eigenstates of $H_{a,a+1}$ corresponding to the eigenvalue $E_{\{j,h\}}$ defined in \eref{eq:multieigen} and $\vartheta$ is defined as in \eref{eq:thetax}.\\
As soon as $b>a+1$, namely the active links are separated by inactive links (that is links during which the identity is applied) \emph{localization} appears. This suggests a change in the spectrum of Hamiltonians $H_{a,a+k},\ k > 1$. Whereas in the first case ($H_{a,a+1}$) we can give analytic evidence of the equality of the ``free particles'' spectrum and the interacting one, in this second case we have only numerical evidence of what we stated.\\
In the following examples, we choose the operators $A=\sigma_1$ and $B=\sigma_3$. We observe that 
\begin{eqnarray} \label{eq:kickback}
\sigma_1 \sigma_3 \sigma_1 \sigma_3 \ket{+1} = e^{i\pi} \sigma_1 \sigma_1 \sigma_3 \sigma_3 \ket{+1}  
\end{eqnarray}
\begin{figure}[p] 
	\centering	\includegraphics{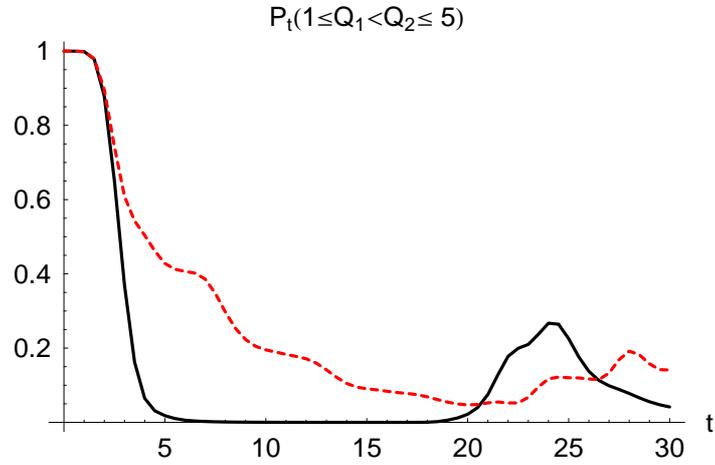}
	\caption{$s=20$, $A=\sigma_1$, $B=\sigma_3$ the solid line represents the probability of observing $1 \leq Q_1 < Q_2 \leq 5$ when $a=2$ and $b=3$; the dashed line the same probability when $a=2$ and $b=4$.}
	\label{fig:gconfrontoloc}
\end{figure}
\begin{figure}[p]
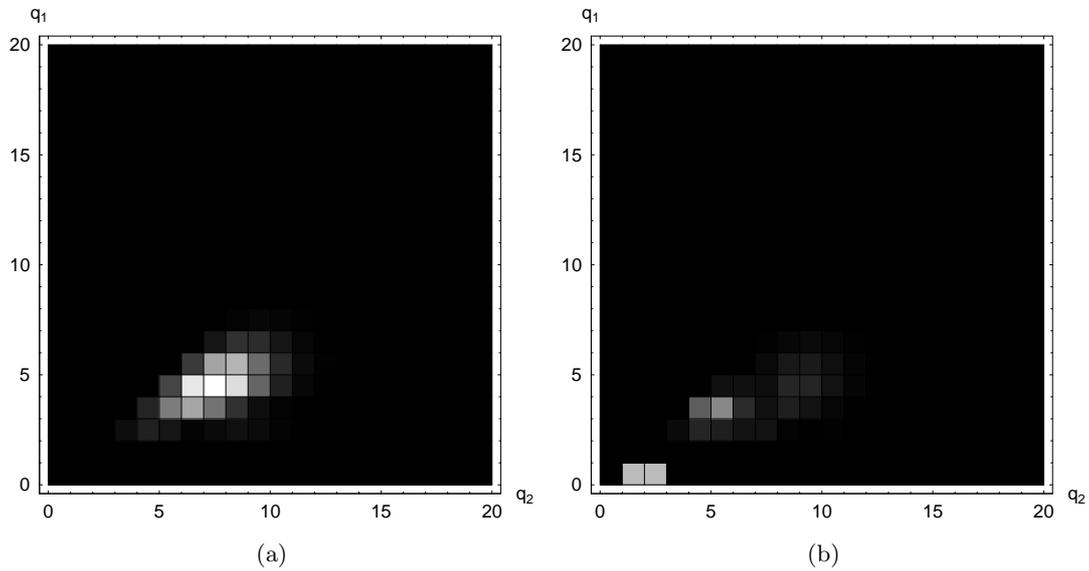

	\centering		\subfigure[]{\includegraphics[width=7cm]{Figure/reflectiona.eps} }	\subfigure[]{\includegraphics[width=7cm]{Figure/reflectionb.eps} }
	\caption{Same parameters and operators as in \fref{fig:gconfrontoloc}. We represent the probability $P_t(Q_1=q_1, Q_2=q_2)$ for $t=4$; (a) corresponds to the choice $a=2$ and $b=3$; (b) represents the same quantity for $a=2$ and $b=4$.}
	\label{fig:gsinistraloc}
\end{figure}
\begin{figure}[h]
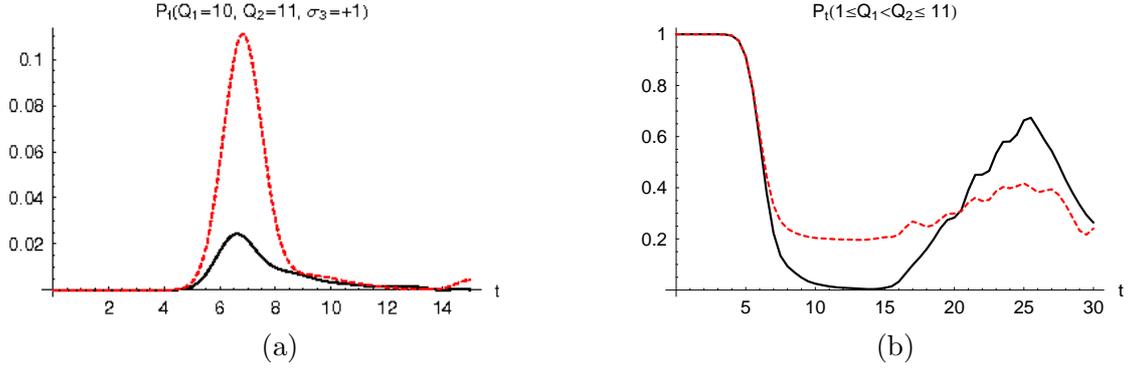

	\centering	\includegraphics[width=70mm]{Figure/gconfronto1.eps}
		\hspace{1cm}	\includegraphics[width=70mm]{Figure/gsinistra910.eps}
		\begin{picture}(230,10)
			\put(0,0){(a)}
			\put(230,0){(b)}
		\end{picture}
	\caption{ $s=20$,  $A=\sigma_1$, $B=\sigma_3$; solid (black) line: $a=9,\ b=10$; dashed (red) line: $a=9,\ b=11$. (a) The destructive interference between the $BBAA$ and $BABA$ causes a ``reflection'' on the dynamical boundary. (b) The ``confinement'' effect due to the destructive interference between different computational paths.}
	\label{fig:gconfronto}
\end{figure}
Figures \ref{fig:gconfrontoloc}, \ref{fig:gsinistraloc} and \ref{fig:gconfronto} show the time evolution, in the $N_3=2$ case, of the probability of reaching a determined configuration of the machine when $b=a+1$ (solid black line) and $b=a+2$ (dashed red line) for different positions of the active links. For the first-neighbor case to all the computational paths ending in $\ket{\underline{q}}$ there corresponds the transformation $BABA$ on the register; in the second case, there are different computational paths ending in $\ket{q}$ to which there correspond different transformations. When $b=a+2$ the classes of paths for which the action on the register is $BBAA=1$ and $BABA=-1$ have the same probability amplitude and gather the most of the probability mass. The destructive interference between them is then due to the change of phase (kickback effect, see \eref{eq:kickback}) of the amplitude brought about by the different transformations on the register. The consequent dynamical creation of  boundary conditions causes a ``confinement'' of the traveling excitations on the left of the interference location, here coinciding with the end of the last active link.\\[5pt]
Waiting for an algorithm that might benefit from the above possibility of simultaneously exploring different computational paths (concurrency ?), we explore, in \fref{fig:figura12}, the idea (or classical prejudice?) that this nuisance  can be in part avoided by using suitable initial conditions. The idea, suggested by \eref{eq:densitavk}, is of course to prepare on $\Lambda_a=\{1,2,\ldots,a\}$ an initial $N_3=g$ state such that the excitations travel as spatially well localized wave packets of so different speeds that it is at any time unlikely that they simultaneously hit the region $(a+1,b)$.\\
\begin{figure}[!h]
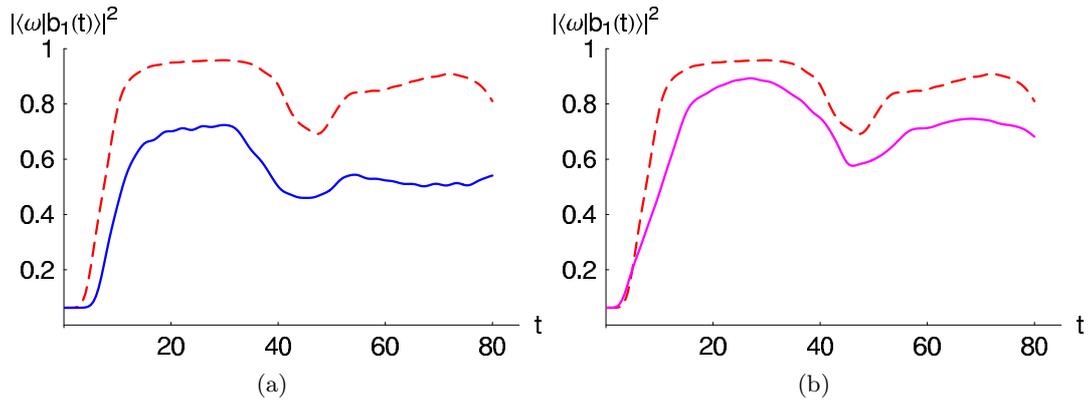

	\centering	\subfigure[]{\includegraphics[width=7cm]{./FigureArticoloIOP2006/wfigure14a}}	\subfigure[]{\includegraphics[width=7cm]{./FigureArticoloIOP2006/wfigure14b}}
	\caption{$\mu=4,\ s=20,\ 0 \leq t \leq 4s$; (a) solid line: $U_a=A,\ U_b=B,\ a=6,\ b=8$, same initial condition as in \fref{fig:figura11}; (b) solid line: the $N_3=3$ state has been prepared by setting the initial chain $\{1,\ldots,6\}$ in the ground state of the Hamiltonian $\sum_{x=1}^5 \tau_+(x+1)\tau_-(x)+h.c$. For comparison purpose \fref{fig:figura11}.a is reproduced in both frames as a dashed line.}
\label{fig:figura12}
\end{figure}
A remark about our insistence, throughout the previous chapter, in gathering experience about the  behavior of the evolution of a state of an initial subchain \mbox{$\Lambda_{\epsilon}= \{1,2,\ldots,\epsilon\}$} is now in order.\\
We observe that an initial state  (not necessarily in the $N_3=1$ subspace) in $\Lambda_{\epsilon}$ of the form
\begin{equation}
	\ket{in}= \frac{1}{2^{\epsilon}} \sum_{M \subseteq \Lambda_{\epsilon}} \left( \sum_{z \in \{-1,1\}^{\epsilon}} f(z) \prod_{j \in M} z_j\right) \ket{\tau_3(x)=(-1)^{I_M(x)}},
\end{equation}
where $I_M$ is the indicator function of the set $M$ and $x=1,\ldots,\epsilon$, can be prepared as a post-kickback state (with respect to an ancilla qubit) after the reversible evaluation of a function $f:\{-1,1\}^{\epsilon} \to \{-1,1\}$. We conjecture that subsequent evolution of \ket{in} under the Hamiltonian \eref{eq:freehamiltonian} on $\Lambda_{s}= \{1,2,\ldots,s\}$, with $s>>\epsilon$, might help in setting tests of hypotheses about the Fourier coefficients
\begin{equation}
	c_M=\frac{1}{2^{\epsilon}} \sum_{z \in \{-1,1\}^{\epsilon}}f(z) \prod_{j \in M} z_j
\end{equation}
of the function $f$ via time-of-flight techniques. There is at least one non trivial case in which the above conjecture works: having prepared all spins in  $\{\epsilon+1,\ldots,s\} $ in the ``up'' state, the Deutsch-Josza alternative \cite{deutsch92} `` $f$ constant ($|c_{\emptyset}|=1$) vs. $f$ balanced ($c_{\emptyset}=0$)'' becomes equivalent to the alternative ``stationary vs. non stationary'' under the Hamiltonian  \eref{eq:freehamiltonian},  about the state of the overall system.\\[10pt]
\chapter{Conclusions and outlook}
The CCNOT gate can be implemented on Feynman's quantum computer  by means of conditional jumps: the transition amplitudes at a vertex of the clocking  graph depend on the state of the register. The Feynman machine, thus, not only anticipates the continuous time quantum walk paradigm but also extends it to what we call \emph{interacting quantum walks}. Since the CCNOT alone constitutes a complete reversible logical basis, the computer model is universal with respect to the class of function reversibly computable by a Turing machine.\\[5pt]
In dealing with specific algorithms, we abandoned the \emph{black-boxes}, or \emph{oracular}, setting and adopted a top-down approach. For instance in the case study of  Grover's algorithm, we wrote down \emph{explicitly} the Hamiltonian generating, at \emph{certain times}, the evolution of the input state into the desired output state. In this specific context, we defined a scheme for the \emph{iteration of quantum subroutines} and analyzed the space cost of it, which turns out to be logarithmic in the number of iteration. Moreover, we showed that the CCNOT \emph{can be implemented by using only three body interactions}; this represents a significant reduction of the complexity of the physical realization of the computing device.\\[5pt]
We have shown that, by means of a generalized Peres basis, it is possible to express the evolution of the system as if the clocking excitation were traveling without interaction along a linear spin chain. As far as the dynamics of the interacting quantum walk is concerned, then, we can restrict our attention to the motion of the clocking excitation on a linear chain.\\[5pt]
The pure $XY$ Hamiltonian $H_0$  given in \eref{eq:freehamiltonian} describes, in the Luther-L\"uscher-Susskind formalism \cite{luther76,luscher76,susskind77}, a massless Dirac quantum field on a $1$-dimensional lattice. The full Hamiltonian \eref{eq:hamfeynman} is suggestive of the minimal coupling of this Fermi field, implementing the \emph{clock}, with additional quantum fields implementing the \emph{register}. We have been trying to contribute to the line of research, that seems to be emerging these days \cite{childs04, strauch05, verstraete05}, devoted to making this connection between quantum computing and relativistic quantum field theory explicit. It is an easy guess that this quantum field theoretical intuition was well present in the original work \cite{feyn86}. Particularly penetrating is, in this respect, Peres' remark that in Feynman's model \emph{calculations run forward and backward in time just as particles and antiparticles in Feynman's classical work on relativistic quantum field theory} (\cite{peres85}, p. 3269). As a further remark, we observe that the three body interactions needed by Feynman's model are hard to conceive out of a field theoretical context.\\[5pt]
It is because of this field theoretical perspective that we have tried to avoid any ``engineering'' \cite{christandl04}(space dependence) of the coupling constant $\lambda$ in \eref{eq:freehamiltonian}, well aware of the fact that, in the $Dirac \to XY$  correspondence, $\lambda$ is related to the spacing adopted in the lattice approximation. In such a context it would be very hard to understand (without a projection mechanism \cite{christandl04}, which seems to have an exponential cost) the implementation of a space dependence such as
\begin{equation} \label{eq:lambdax}
	\lambda(x) = const.\, \sqrt{x(x-s)}
\end{equation}
that leads in \cite{peres85} and \cite{christandl04} to the existence of sharply distinguished instants in which the position of the cursor is certain. Nor would it be easy to understand \eref{eq:lambdax} in a solid state implementation \cite{burkard04}, where $\lambda$ is related to the effective mass of the clocking excitation.\\[5pt]
We have focused our attention on the \emph{clocking field} $\tau(x)$, singled out as the one which, under suitable boundary conditions and for initial conditions localized close to the boundary, exhibits particle-like excitations performing, for long enough intervals of $t$, a quantum walk in a distinguished direction.\\[5pt]
Spatial homogeneity of the chain leads to the existence of the limit in law \mbox{$V=\lim_{t \to  +\infty} Q(t)/t$} for the position of such an excitation on a semi-infinite ($s \to +\infty$) box. In the $N_3=1$ subspace, because of Peres' conservation law \cite{peres85}, the observable $Q$ acquires the meaning of \emph{relational time} (\underline{given} the observed value of $Q$, the state of the register is known with certainty) and, therefore, the random variable $V$ acquires the meaning of number of computational steps per unit $t$. The fact that the variance of $V$ is strictly positive has the effect that in terms of the \emph{parameter time} $t$ (as opposed to \emph{relational time} $Q$) the evolution of the register appears to be dissipative: we have, for a simple model, written the corresponding Lindblad evolution and studied the ensuing buildup of entropy.\\[5pt]
On our simple instance of quantum search we have shown that, in the ``low level'', physical approach that we pursue (in which time runs, for the register, because it is coupled with an additional quantum field) the buildup of entropy imposes an upper bound on the probability of finding the target state which is more severe than the one predicted by the ``high level'', algorithmic approach (in which the successive primitives are applied by an external macroscopic agent).\\[5pt]
In the attempt of decreasing the deficit in the probability of success in a quantum search, due to the decohering effect of the coupling with the clocking field, we have provided examples of the benefit of spreading the initial wave function of the cursor on an initial \emph{launch pad} instead of, as a classical prejudice would suggest, having it strictly localized at one site.\\[5pt]
The master equation of the state of the register satisfies the Lindblad equation; therefore, if on one side the clock makes the register evolve, on the other it acts as a decohering environment.\\[5pt]
We point out, furthermore, that the analysis of the probability cost of an algorithm will be significant only when a optimal preparation-measurement scheme will be found. The $\pi$-\emph{pulse trap} and the \emph{launch-pad} are proposals in this direction. We observe that, with the same amount of space resources, a longer \emph{sojourn time} can be achieved by means of an iteration of quantum subroutines circuit in which to trap the excitation.\\[5pt]
The study of the energy distribution of the system before and after the (non-optimal) measurement allows for the analysis of the resetting cost of the machine which, as long as we consider perfect chains, is the only energetic cost of the computation.\\[5pt]
The role of initial condition must be studied also when the initial state of the machine presents entanglement between the register and the clock; for example let's take the initial condition
\begin{equation}
	\ket{\xi_0}= \sum_{x=1}^{s}\sum_{j=1}^d \xi(j,x) \ket{r_j} \otimes \ket{C(x)},
\end{equation}
with $\{\ket{r_j}\}_{j=1}^d$ is a selected orthonormal basis of \Hr.\\
With the same techniques used in \sref{sec:speed} it is possible to show that
\begin{eqnarray}
	\phi_{V(\xi_0)}(z) & = & \lim_{t \to +\infty} \lim_{s \to +\infty} \bra{\xi_0} e^{itH}\exp \left(iz \frac{Q}{t}\right) e^{-itH}\ket{\xi_0} = \nonumber \\
	& = &\sum_{i=1}^d \int_0^1 e^{i z v} \frac{|\Xi(j,\arcsin(v))|^2+|\Xi(j,\pi-\arcsin(v))|^2}{\sqrt{1-v^2}} dv
\end{eqnarray}
where
\begin{equation}
	\Xi(p)=\sqrt{\frac{2}{\pi}} \sum_{x=1}^s \sin(px)\sum_{i=1}^{d}\bra{r_j} U_1^\dagger U_2^\dagger\ldots U_{x-1}^\dagger \ket{r_i} \xi(i,x)
\end{equation}
is the sine transform of the initial state. The interest of the most general form of the initial vector state relies on the possibility of exploiting the entanglement between the register and the cursor in order to either increase the average speed of computation or reduce the spreading of the wave packet. The analytic description of the interaction between the two subsystems models also some kind of decoherence appearing in spin chains.\\[5pt]
We have, moreover, abandoned the classical prejudice of having a single clocking excitation, providing a preliminary analysis of the idea of starting the cursor in an  initial state with $N_3 > 1$. We have shown, in this context, an efficient way of iterating the application of a single primitive to the register. Numerical examples of section \ref{sec:nonabelian} show that, however, as soon as the unitary operators acting on the register do not commute, the positions of the excitations have no longer the meaning of relational time of the system: the state of the register depends also on the path made by the excitations along the chain. Indeed, we showed numerical examples in which interference between different paths leads to localization \emph{\`a la} Anderson of the excitations. In that case, however, it is the presence of impurities in the lattice, or chain,  which leads to localized states. We refer to \cite{keating06} for a recent discussion of Anderson localization on spin chains.\\[5pt]
The case $N_3>1$ deserves, we think, further research, both from the algorithmic and the physical point of view.\\[5pt]
From the algorithmic point of view we plan to examine other instances (beyond the one cursorily examined at the end of section \ref{sec:nonabelian}) in which time-of-flight spectroscopy (based on the Fourier transform vs. speed relationship recalled in section \ref{sec:speed}) of the post-kickback state can answer Yes/No questions about the algorithm.\\[5pt]
From the physical point of view, the ``obvious'' choice of the ``all down'' reference state made throughout this work is far from being optimal from the point of view of studying the thermodynamic cost of resetting the register. The best reference state for the study of this ultimate cost of reversible computation would of course be the ground state and, for Hamiltonians of the form \eref{eq:freehamiltonian}, with $s$ even, it is an $N_3=s/2$ state. This will require, we think, the formulation of an appropriate Bethe Anzatz for the Hamiltonian \eref{eq:hamfeynman}.\\[5pt]
It is this last point which leads to some intrinsic difficulty in the analytic description of the system. In fact, if scattering theory includes an analytic treatment  of the perturbed Hamiltonian for the XY chain, it seems quite hard to do the same if the coefficients of the matrix belong to a non-Abelian algebra, which is our general case. Nevertheless, we have shown that the interaction between the register and the multi-handed clock shows new effects such as the confinement of the excitation resembling Anderson localization.\\[5pt]
In section \ref{sec:nonabelian} we have related the probability deficit of \fref{fig:figura12}(a) to the simultaneous presence of two of the excitations in the region between the active links $a \rightarrow a+1$ and $b \rightarrow b+1$, which corresponds to the application in the ``wrong'' order of the \emph{oracle} and \emph{estimation} step of Grover's algorithm. If we initialize the initial condition of the cursor in the ground state of the \emph{launch pad} $\{1,2,\ldots,6\}$ we partially fill the probability gap due to the \emph{sojourn} of the excitations in the \emph{critical region}. Similarly, the confinement effect shown in figures \ref{fig:gconfrontoloc} and \ref{fig:gconfronto} is related to the interference pattern between states of the form
\begin{eqnarray} \label{eq:ordine}
BABA\ket{\sigma_3=+1} \ket{\{q_1,q_2\}} & = & e^{i \pi} \ket{\sigma_3=+1} \ket{\{q_1,q_2\}} \label{eq:corretto}
\end{eqnarray}
and states of the form
\begin{eqnarray} \label{eq:disordine}
BBAA \ket{\sigma_3=+1} \ket{\{q_1,q_2\}} & = & \ket{\sigma_3=+1} \ket{\{q_1,q_2\}} \label{eq:sbagliato},
\end{eqnarray}
with $b+1 \leq q_1 < q_2 \leq s$.\\
The effect is similar to scattering on an impurity: the relative phase $e^{i \pi}$ between \eref{eq:ordine} and \eref{eq:disordine} leads, by a familiar kickback mechanism in which the register plays the role of an ancilla qubit, to destructive interference in the distribution of the cursor.\\[5pt]
The transmission amplitude is proportional to $(\alpha-
\beta)$, $\alpha$ being the amplitude associated to states of the form \eref{eq:corretto} and $\beta$ the amplitude of states of the form \eref{eq:sbagliato}.\\[5pt]
The trasmission probability deficit is compatible with the empirical distribution of the random variable \emph{sojourn time} in the critical region $\{(x_1,x_2):a<x_1<x_2\leq b\}$, with $a=9$ and $b=11$, shown in \fref{fig:sojourntime}.
\begin{figure}[h t]
	\centering
		\includegraphics{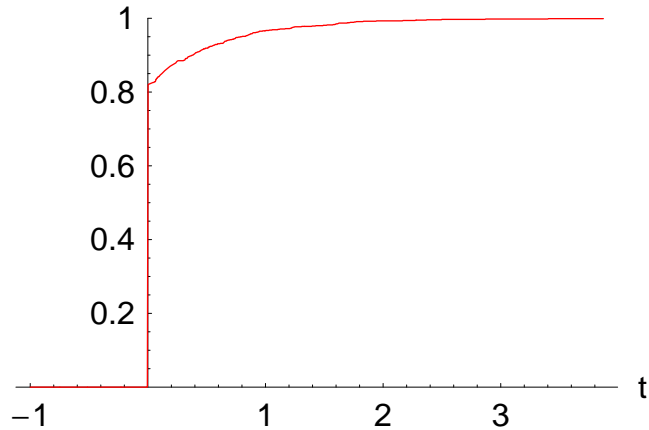}
		\caption{$s=20$; $H=H_0$ as in \eref{eq:freehamiltonian}; $N_3=2$; $a=9$, $b=11$; initial condition \ket{\{1,2\}}; sample of 1000 paths. The empirical  distribution function of the sojourn time in the critical region.}
	\label{fig:sojourntime}
\end{figure}\\
We define (see \cite{carlen}) the \emph{sojourn time} of a \emph{quantum} walk in a given region as the sojourn time of a canonically associated controlled \emph{random} walk. The idea that ``stochastic control theory can provide a very simple model simulating quantum mechanical behavior'' is borrowed from \cite{guerra83}. The actual details of the numerical simulation in our discrete settings implements the prescription of \cite{guerra84}.\\
What \fref{fig:sojourntime} says is that $80\%$ of the trajectories of our sample, of size $1000$, behave as the one shown in \fref{fig:traiet}.(a), never hitting the vertex $(11,10)$; only 20\% have at some time both excitations between $a+1$ and $b$. 
\begin{figure}[h]
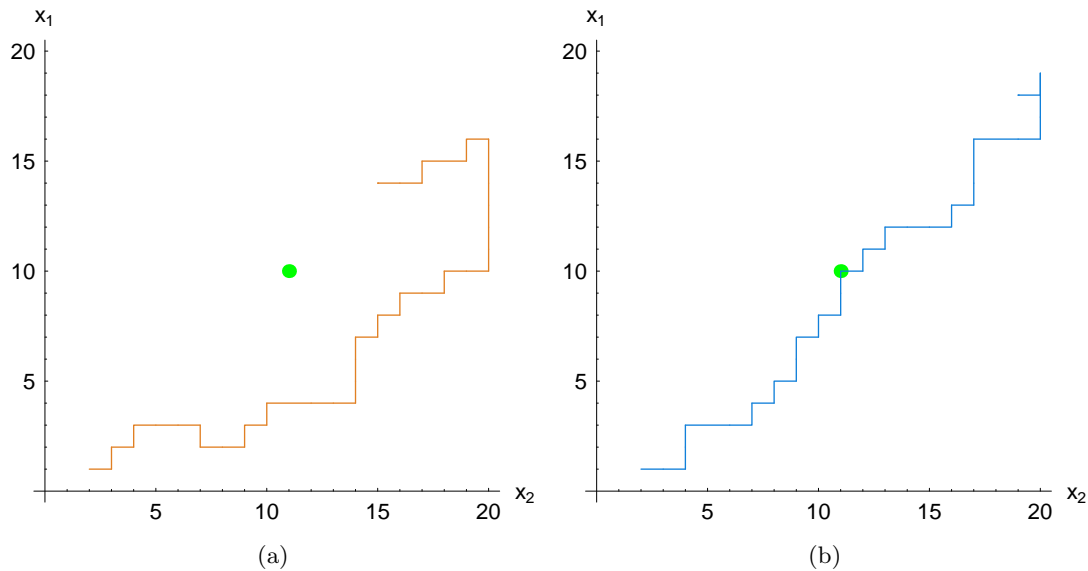

	\centering		\subfigure[]{\includegraphics[width=7cm]{Figure/trajgiusta.eps} }	\subfigure[]{\includegraphics[width=7cm]{Figure/trajsbagliata.eps} }
	\caption{Same parameters as in \fref{fig:sojourntime}.(a) A trajectory not passing through the critical region. (b) A trajectory traversing the critical region. The critical region, that is the configuration $(11,10)$, is highlighted by a (green) point.}
	\label{fig:traiet}
\end{figure}
As a sketch of the body of ideas relating \emph{quantum} walks to \emph{random} walks \cite{defa08} we discuss the example referring to the case $N_3=1$ represented in \fref{fig:graficotraj1}.
\begin{figure}[h]
	\centering	\includegraphics[width=14cm]{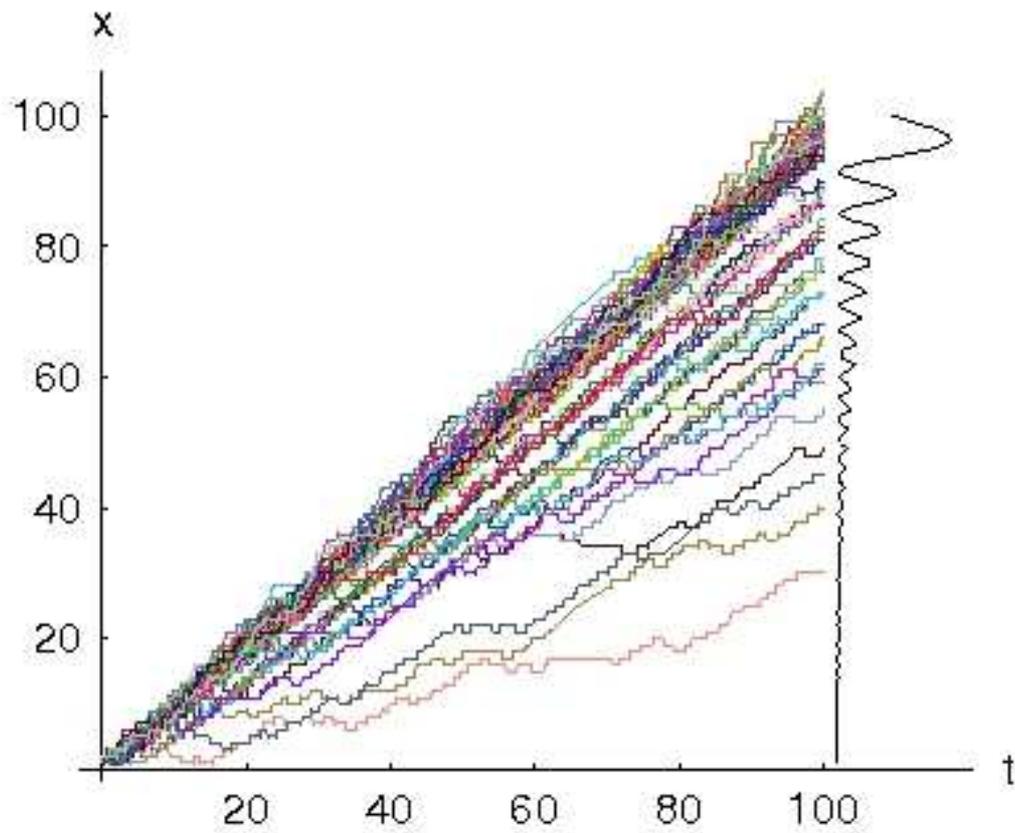}
	\caption{$s=100$, $H=H_0$; $N_3=1$; initial condition \ket{Q=1}. The probability density $\rho(t,x)=\frac{4x^2}{t^2}J_x^2(t)$ at the final instant $t=100$ is shown for comparison purposes.}
	\label{fig:graficotraj1}
\end{figure}
The whole point behind the algorithm leading to \fref{fig:graficotraj1} is that the probability density
\begin{equation}
	\rho(t,x)=\frac{4x^2}{t^2}J_x^2(t)
\end{equation}
corresponding to the amplitude \eref{eq:bj} satisfies the equation
\begin{eqnarray} \label{eq:nm}
	\frac{d}{dt}\rho(t,x) & = & -sign(J_x(t) J_{x+1}(t)) \sqrt{\rho(t,x)\rho(t,x+1)}+ \nonumber \\
	& + &  sign(J_x(t) J_{x-1}(t)) \sqrt{\rho(t,x-1)\rho(t,x)}.
\end{eqnarray}
Equation \eref{eq:nm} can, in turn, be read as the continuity equation for a \emph{birth and death process} on  $\Lambda_{\infty} = \{1,2,\ldots\}$. The sample paths $(t,q(t))$ of this process are shown in \fref{fig:graficotraj1}.\\[5pt]
The stochastic simulation of quantum phenomena discussed in \cite{carlen,guerra83,guerra84} raises extremely interesting questions about the foundations of quantum mechanics itself \cite{nelson85}.\\[5pt]
We wish to stress as a final remark that stochastic mechanics poses also interesting questions in the theory of quantum computation: is the exponential speedup offered by \emph{quantum} walks in crossing a graph or decision tree \cite{farhi97,childs02}  attainable by a \emph{classical} stochastic algorithm?
\backmatter
%
\bibliography{mybib}
\bibliographystyle{unsrt}
\end{document}